\documentclass[11pt]{article}
\usepackage{defs}

\title{Complexity of the Graph Homomorphism Problem w.r.t. Degeneracy}
\author{
   Grigorii Braulov
   \thanks{Neapolis University Pafos. Email: \url{braulov2004@gmail.com}}
   \and
   Nikolai Chukhin
   \thanks{JetBrains Research. Email: \url{buyolitsez1951@gmail.com}}
   \and
   Alexander S. Kulikov
   \thanks{JetBrains Research. Email: \url{alexander.s.kulikov@gmail.com}}
   \and
   Ivan Mihajlin
   \thanks{JetBrains Research. Email: \url{ivmihajlin@gmail.com}}
}

\date{}

\begin{document}

\pagenumbering{arabic}
\maketitle

\begin{abstract}
    The graph homomorphism problem~\problemabbr{HOM}~is: given
    an~$n$-vertex source graph~$G$ and an~$h$-vertex target graph~$H$,
    is~there a~mapping from~$V(G)$ to~$V(H)$ that preserves edges?
    A~straightforward brute-force algorithm for~$\problemabbr{HOM}$
    has running time $O(h^n)=O(2^{n\log h})$ and~it~is known that,
    under \cc{ETH},
    there are no~$2^{o(n\log h)}$ algorithms.
    In~recent years,
    less restrictive graph parameters~$p$ have been identified
    that allow one to~solve \problemabbr{HOM} in~time $p(H)^{O(n)}$.
    Examples include treewidth, maximum degree, and track number.
    These algorithms are faster than $O(h^n)$ and allow one
    to~solve~\problemabbr{HOM} in~plain-exponential time~$2^{O(n)}$ in~the special case when $p(H)$~is bounded.
    On~the other hand, it~is known that the chromatic number parameter
    is~too small: under \cc{ETH},
    \problemabbr{HOM} cannot be~solved in~time $\chi(H)^{O(n)}$.

    We~study the complexity of~\problemabbr{HOM} in~terms of~the \emph{degeneracy} of~$H$.
    This~is perhaps the most natural unresolved graph parameter between the known algorithmic and hardness regimes: on~the one hand, each of~bounded treewidth, bounded maximum degree, and bounded track number implies bounded degeneracy; on~the other hand, bounded degeneracy implies bounded chromatic number.
    Our results show that, at~the same time, the influence
    of~degeneracy of~$H$ on~the complexity of~$\problemabbr{HOM}$
    differs significantly from that
    of~the previously studied parameters.
    We~show that, under \cc{ETH}, there is~no $2^{o(\operatorname{degen}(H)n)}$ algorithm for~any value of $\operatorname{degen}(H)$ as a~function of~$n$.
    We also~show that bounded degeneracy alone does not make target size benign: even targets with $\operatorname{degen}(H) \le 2$ and quasi-polynomial size force $n^{\Omega(n)}$-scale hardness.
	Finally, we~introduce a~no-compression barrier that explains why the~known fine-grained lower bounds for~sparse $2$-$\problemabbr{CSP}$ are~not tight under~$\cc{ETH}$.
	Moreover, it~shows that substantially stronger lower bounds for~polynomial-target degeneracy are~unlikely to~follow from~standard reductions from~sparse $3$-$\problemabbr{SAT}$.
\end{abstract}

\clearpage
\tableofcontents
\clearpage
\section{Graph Homomorphism and Target Parameters}
A~graph homomorphism $G \to H$ is a~map from the vertices of a~source graph~$G$ to the vertices of a~target graph~$H$ that preserves edges.
Homomorphisms to a~fixed target encode assignments of local states, while homomorphisms from a~small pattern encode global witnesses such as~cliques or~subgraph counts~\cite{Lovasz12,HN26}.
The corresponding decision problem~\problemabbr{HOM}, \problemname{Graph Homomorphism Problem}, asks whether there exists a~homomorphism between two given graphs.
Various \NP{}-hard graph problems are natural special cases of $\problemabbr{HOM}$: for example, $G$~has a~proper $k$-coloring if~and only~if there exists a~homomorphism $G\to K_k$, whereas $G$~contains a~$k$-clique if and only if there is a~homomorphism $K_k\to G$.
On~the other hand, \problemabbr{HOM} with an~$n$-vertex source graph~$G$ and an~$h$-vertex target graph $H$ is a~special case of \problemname{$2$-Constraint Satisfaction Problem} ($2$-\problemabbr{CSP}) with $n$ variables over a~domain of size~$h$.
Thus, $\problemabbr{HOM}$ may be viewed as $\problemabbr{CSP}$ over a~graph template.

A~straightforward brute-force algorithm (that goes through all mappings)
for \problemabbr{HOM} and $2$-\problemabbr{CSP} has running time $O(h^n)=2^{O(n\log h)}$.
Traxler~\cite{Traxler08} showed that, for $2$-\problemabbr{CSP}, this upper bound cannot be~improved substantially:
under the Exponential Time Hypothesis (\cc{ETH}), there~is no~algorithm
solving $2$-\problemabbr{CSP} in~time $h^{o(n)}=2^{o(n\log h)}$.
At~the same time, a~naive upper bound for the \problemname{$h$-Coloring} problem (asking whether a~given graph admits a~proper $h$-coloring) is also $O(h^n)=2^{O(n\log h)}$, though it~can be~solved in~single-exponential time:
the first such algorithm has running time $O(2.45^n)$ and is~due to~Lawler~\cite{Lawler76},
whereas the best currently known upper bound
is~$O^*(2^n)$ due to~Bj{\"o}rklund, Husfeldt, and Koivisto~\cite{BHK09} ($O^*(\cdot)$ suppresses multiplicative factors
that grow polynomially).
Obtaining a~similar single-exponential upper bound for \problemabbr{HOM}
would be~challenging.
As~proved by~\cite{CHKX06},
under \cc{ETH}, for any $\varepsilon>0$ and any $k=O(n^{1-\varepsilon})$, there~is no~algorithm solving \problemname{$k$-Clique}
for $n$-vertex graphs in time $n^{o(k)}$.
Since
\problemname{$k$-Clique} is~essentially $\problemabbr{HOM}(K_k, \cdot)$,
this rules out (under \cc{ETH}) an~upper bound $2^{o(n\log h)}$
in~the regime where $n$~is much smaller than~$h$. \cite{CFGKMPS17}
proved the same lower bound
for \emph{any} regime: under \cc{ETH}, for any $h=h(n)$, there~is
no~algorithm solving \problemabbr{HOM} in~time $2^{o(n\log h)}$.
This holds for the general case when
both graphs $G$~and~$H$ are part of~the input and
are not restricted in any way. At~the same time, in~various applications
of~\problemname{Graph Homomorphism}, one or~both of~these graphs
may come from a~restricted graph family or~even be~fixed.
A~broader discussion of~homomorphism complexity dichotomies is~given in~\Cref{sec:fine-grained-dichotomy}.

In~recent years,
less restrictive graph parameters~$p$ have been identified
that allow one to~solve \problemabbr{HOM} in~time\footnote{Technically,
    some of~the upper bounds mentioned above have an~additional multiplicative term $2^h$ that is~needed
    to~compute a~certificate for the corresponding parameter of~$H$ (such~as coloring or~decomposition). When $h=O(n)$ (as~in many interesting applications of~\problemabbr{HOM}), this term is~absorbed by~$p(H)^{O(n)}$. For this reason and to~make the summary cleaner,
    we~omit the $2^h$ term.}
$p(H)^{O(n)}$:
  treewidth~\cite{FHK07},
  clique-width~\cite{Wahlstrom11},
  bounded complement bandwidth~\cite{Rzazewski14},
  maximum degree~\cite{FGKM15},
  extended clique-width~\cite{BD20},
  track number and persistent majority number~\cite{Carbonnel26}.
These algorithms are faster than $O(h^n)$ and allow one
to~solve~\problemabbr{HOM} in~plain-exponential time~$2^{O(n)}$ in~the special case when $p(H)$~is bounded.
On~the other hand, as~proved by~\cite{FGKM15},
chromatic number is~too small:
under \cc{ETH},
there~is no~$\chi(H)^{O(n)}$ algorithm.
The known lower and upper bounds with respect
to~various target graph parameters as~well as~connections between them
are
summarized in~\Cref{fig:parameter-dominance-map}.
It~is interesting to~note that, for each target graph parameter~$p$
from the left part of~the figure, there is~no $p(H)^{o(n)}$ algorithm,
under~\cc{ETH}. This follows from the general lower bound due to~\cite{CFGKMPS17} via padding (see \Cref{obs:parameter-padding-lower-bound}).
Also, the argument of~\cite{CFGKMPS17} together with~\cite{FGKM15} can be~used to~prove an~optimal \cc{ETH}-based lower bound $n^{\Omega(n)}$ for graphs~$H$ with $\chi(H)=O(1)$ (see \Cref{sec:bounded-chi-poly-target}).

\tikzset{p/.style={rectangle, rounded corners, draw=black, minimum height=6mm}}

\begin{figure}[ht]
    \begin{center}
        \begin{tikzpicture}[label distance=1mm, scale=.9, transform shape]
            \tikzstyle{a}=[->, >=latex]

            \node[p, label=below:\cite{FHK07}] at (0, 0) (tw) {$\operatorname{tw}$};
            \node[p, label=below:\cite{Rzazewski14}] at (1, -2) (bw) {$\operatorname{bw}$};
            \node[p, label=below:\cite{Wahlstrom11}] at (3, 0) (cw) {$\operatorname{cw}$};
            \node[p, label=below:\cite{BD20}] at (5, -2) (ecw) {$\operatorname{ecw}$};
            \node[p, label=below:\cite{Carbonnel26}] at (6, 0) (tn) {$\operatorname{tn}$};
            \node[p, label=below:\cite{Carbonnel26}] at (8, 0) (pmn) {$\operatorname{pmn}$};
            \node[p, label=below:\cite{FGKM15}] at (4, -4) (delta) {$\Delta$};
            \node[p] at (10.5, 0) (degen) {$\operatorname{degen}$};
            \node[p, label=below:\cite{FGKM15}, right] at (13, 0) (chi) { $\chi$};

            \draw[gray, dashed] (9, -5) -- (9, 1);
            \node[above left, text width=32mm] at (9, -5) {easy parameters~$p$:\\ $\problemabbr{HOM}(\cdot, \mathcal{H}) \in 2^{O(n)}$\\
            if $p(\mathcal{H})=O(1)$};

            \draw[gray, dashed] (12, -5) -- (12, 1);
            \node[above right, text width=35mm, draw=none] at (12, -5) {hard (under \cc{ETH}):\\ $\problemabbr{HOM}(\cdot, \mathcal{H}) \not \in n^{o(n)}$\\if $p(\mathcal{H})=O(1)$};

            \foreach \f/\t/\out/\in in {tw/cw/0/180, cw/ecw/-30/90, tw/tn/20/160,
                tn/pmn/0/180, delta/pmn/10/-150,
                bw/cw/90/-150, degen/chi/0/180,
                pmn/degen/0/180
            }
            \path (\f) edge[a, out=\out, in=\in, <-] (\t);
        \end{tikzpicture}
    \end{center}
	\caption{
        Summary of~previously studied target graph parameters
        and their influence on~the complexity of~\problemabbr{HOM}.
        An~arrow from~$p$ to~$q$ is~read as~follows:
        bounded~$q$ bounds~$p$; more formally, there exists a~function~$f$
        such that, for any graph~$H$, $p(H) \le f(q(H))$ (all shown connections between parameters are proven in~\Cref{section:parameters}). In~turn, this implies
        that an~algorithm solving \problemabbr{HOM} in~plain-exponential
        time in~the special case when $p$~is bounded also has plain-exponential time when $q$~is bounded.
    }
    \label{fig:parameter-dominance-map}
\end{figure}

\subsection{Our Results}
We~study the complexity of~\problemabbr{HOM}
in~terms of~target \emph{degeneracy} (the~minimum integer~$d$ such that every nonempty subgraph has a~vertex of~degree at most~$d$), which~is perhaps the most natural unresolved candidate between the known algorithmic and hardness regimes: on~the one hand, each of~bounded treewidth, bounded maximum degree, and bounded track number implies bounded degeneracy; on~the other hand, bounded degeneracy implies bounded chromatic number.

Algorithmically, degeneracy has long served as a~measure of sparse structure~\cite{Akkoyunlu73,MB83,CN85,CCC06,AG09,PRS09,ELS10,PRS12}, and it appears in~tight kernelization bounds for sparse graph problems~\cite{CPPW12,DM12,HW12,CGH17}.
The $\problemabbr{HOM}$ counting problem was studied in~degenerate graphs, but the degeneracy constraint is~usually placed on~the large input graph into which a~fixed pattern is mapped.
Starting with Bressan~\cite{Bressan21}, this line gives near-linear and fine-grained classifications for counting subgraphs, induced subgraphs, homomorphisms, and homomorphic cycles in degenerate graphs~\cite{BPS20,BPS21,BGLSS22,BR21,GLSY23,PS24,PS25,KKP25}.
In~those results, the pattern is fixed and the running time is measured as a~function of~the sparse host, whereas in $\problemabbr{HOM}$ both $G$ and $H$ are part of~the input.
There are also related results for bounded outdegree graphs~\cite{BLR23}, bounded-degeneracy hypergraphs~\cite{PS26}, and recognition by homomorphism counts to $2$-degenerate graphs~\cite{Dvorak10}.

Our results show that the influence
of~degeneracy of~$H$ on~the complexity of~$\problemabbr{HOM}$
differs significantly from that
of~the previously studied parameters.
We~show that, for every nondecreasing unbounded threshold function $D$ satisfying $D(n)=O(n^{1/3})$ and $D(2n)=O(D(n))$, the promise $\operatorname{degen}(H)\le D(n)$ problem admits no $2^{o(D(n)n)}$ algorithm under~\cc{ETH}.

\begin{restatable}[Informal version of \Cref{thm:main-lb-formal}]{theorem}{mainDegLB}
    \label{thm:main-lb}
    Under~\cc{ETH}, for every nondecreasing unbounded function $D\colon\mathbb N\to\mathbb N$ satisfying $D(n)=O(n^{1/3})$ and $D(2n)=O(D(n))$, there~is no algorithm solving $\problemabbr{HOM}$ on~every instance satisfying $\operatorname{degen}(H)\le D(n)$ in~time\footnote{Recall that $O^*(\cdot)$ suppresses factors that grow polynomially in~the input length. For~\problemabbr{HOM}, such factors are of~the form $(n+h)^{O(1)}$.} $O^*\left(2^{o(D(n)n)}\right)$.
\end{restatable}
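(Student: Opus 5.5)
We reduce from $3$-\problemabbr{SAT} on sparse instances: by the Sparsification Lemma, under \cc{ETH} there is $\delta>0$ such that $3$-\problemabbr{SAT} with $N$ variables and $O(N)$ clauses cannot be solved in time $2^{\delta N}$. The aim is a reduction that, given such a formula $\varphi$ and a target degeneracy parameter $d=D(n)$, produces an instance $(G,H)$ with $n=O(N/d)$ source vertices, $\operatorname{degen}(H)\le d$, and $h$ polynomial in $N$ (or at least $2^{o(N)}$). An $O^*(2^{o(D(n)n)})$ algorithm would then give $2^{o(N)}$ for sparse $3$-\problemabbr{SAT}, contradicting \cc{ETH}. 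The hypotheses $D(2n)=O(D(n))$ and monotonicity let us choose $n$ as a function of $N$ with $D(n)\cdot n=\Theta(N)$ up to constants, since the product grows smoothly. The bound $D(n)=O(n^{1/3})$ keeps the required target size polynomial.

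\textbf{Step 1 (grouping).} Partition the variables of $\varphi$ into $n$ groups of about $N/n\approx d$ variables each. Use a balanced partition, e.g.\ a random or greedy one on the sparse clause-incidence graph, so that the constraint structure between groups has bounded or controlled degree. Each source vertex $v_i$ of $G$ represents a group and ranges over the $2^{O(d)}$ partial assignments of its group. Clauses between groups $i$ and $j$ become binary constraints; clauses touching three groups are handled by standard auxiliary vertices, which cost only $O(N/d)$ extra vertices if the grouping is chosen well.

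\textbf{Step 2 (encoding as \problemabbr{HOM}).} Convert this $2$-\problemabbr{CSP} with domain $2^{O(d)}$ into a homomorphism instance in the style of \cite{CFGKMPS17} and \cite{FGKM15}. First colour $G$ properly with $k$ colours, using a bounded-degree constraint graph so that $k=O(1)$ or small. For each colour class make a block of $H$ whose vertices are pairs (source vertex index, partial assignment). Join vertices of different blocks if and only if they are consistent: either the indices are non-adjacent in $G$, or the indices are adjacent and the assignments satisfy the constraints. Index gadgets (e.g.\ a clique or path of distinct identifiers in $G$) force injectivity on indices, so that a homomorphism corresponds to a satisfying assignment.

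\textbf{Step 3 (the main obstacle).} We must bound $\operatorname{degen}(H)$ by $O(d)$ while the domain has size $2^{\Theta(d)}$. The naive construction makes $H$ dense, since non-adjacent index pairs are fully joined. The fix I would try first is to avoid full joins entirely. Give every source vertex its own private block in $H$, and make a vertex of $H$ adjacent only to vertices of blocks of $G$-neighbours, so that $H$ essentially mirrors $G$ blown up. Degeneracy is then governed by the number of consistent neighbours. The danger is that a single partial assignment can be consistent with $2^{\Theta(d)}$ assignments of a neighbouring group, which pushes degeneracy far above $d$.

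To counter this, I would split each group's assignment into $d$ bits, each carried by a separate source vertex mapped to a 2-element domain. Consistency would be enforced through ``selector'' paths with ordering gadgets, so that each $H$-vertex has only $O(d)$ neighbours in the ordering used for degeneracy. The point is to certify degeneracy by an elimination order in which, when a vertex is removed, only $O(d)$ of its neighbours remain. The blocks must also not collapse, i.e.\ $G$ must be rigid enough (for instance via index gadgets) that homomorphisms respect blocks. Verifying this rigidity while keeping degeneracy at most $d$ and $h=\mathrm{poly}(N)$, using $d^3=O(n)$, is where I expect most of the work.

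Finally, we assemble the pieces. Pad $G$ by isolated vertices where needed so that the instance size $n$ matches the chosen regime, using $D(2n)=O(D(n))$ to ensure $D(n)\ge d$. Then compose the bounds to obtain the contradiction with \cc{ETH}.
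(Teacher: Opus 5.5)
\textbf{There is a genuine gap.} It lies exactly at the step you flag yourself: nothing in Step~3 produces a target with $\operatorname{degen}(H)=O(d)$ while keeping $n=O(N/d)$.

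The private-block fix fails for the reason you suspect. Take a clause $(x\vee y)$ with $x$ in group $i$ and $y$ in group $j$. Every assignment of group $i$ with $x=1$ is consistent with every assignment of group $j$. So $H$ contains $K_{2^{d-1},2^{d}}$, and the degeneracy is $2^{\Omega(d)}$.

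The bit-splitting fix destroys the compression the whole argument rests on. With one source vertex per bit, the source has $\Theta(N)$ vertices, so $D(n)\cdot n=\Theta(dN)$. A $2^{o(D(n)n)}$ algorithm then yields only a $2^{o(dN)}$ algorithm for sparse $3$-\problemabbr{SAT}, which does not contradict \cc{ETH}. In addition, the selector/ordering gadgets and the rigidity argument are never specified, so no reduction is actually given.

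\textbf{The missing idea} is to bucket the \emph{constraints} as well as the variables, and to check each constraint bucket through one low-degree ``hub'' vertex of $H$ that stores a joint assignment. A hub per single constraint (a vertex adjacent to one state of each endpoint group, like $R_{e,a,b}$ in \Cref{sec:constant-degenerate-nlogn}) keeps the degeneracy tiny. But it costs one source vertex per crossing constraint, and for expander-like instances $\Theta(N)$ constraints cross every grouping into blocks of size $d$.

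The paper therefore does the following, reducing from bounded-degree \problemname{$3$-Coloring}:
\begin{itemize}
\item It splits the edges into $O(N/r)$ edge-buckets of at most $r$ edges, with one source vertex per edge-bucket.
\item Each such source vertex maps to an edge-state $W_{Z,c,\Pi}$. This records the color strings of the at most $2r$ participating vertex-buckets, plus which coordinates each edge compares.
\item An edge-state is adjacent only to one bucket-state per recorded label and to $O(r)$ test vertices. So it has degree $O(r)$, even though there are $2^{\Theta(r^2)}$ edge-states. Deleting all edge-states first certifies $\operatorname{degen}(H)=O(r)$.
\item Labels inside each edge-bucket are distinct, via an equitable $(4r+1)$-coloring of the bucket graph and a matching decomposition of the label multigraph. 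This lets a state name its buckets with few symbols.
\item A rigid frame removes the lists.
\end{itemize}

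\textbf{On target size.} The resulting target has $2^{\Theta(r^2)}$ vertices. That is fine, because only $h=2^{o(N)}$ is needed. The hypothesis $D(n)=O(n^{1/3})$ serves to keep the $O(r^2)$-size copied test board negligible against $N/r$ and to ensure $2^{O(r^2)}=2^{o(N)}$. It does not, as you suggest, make the target polynomial. A polynomial target would be a stronger result than the one stated, and your plan gives no route to it.
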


We then show that the plain-exponential frontier is~already crossed between degeneracy~$1$ and~$2$.
Targets with degeneracy at~most~$1$ are forests and therefore have bounded treewidth, so~they are plain-exponential by~\cite{FHK07}.
In~contrast, we construct hard instances of degeneracy~$2$, provided that~the target is allowed to~have size quasi-polynomial in~$n$.

\begin{restatable}[Informal version of \Cref{thm:constant-deg-lb-formal}]{theorem}{constantDegLB}
    \label{thm:constant-deg-lb}
    Under $\cc{ETH}$, no~algorithm solves $\problemabbr{HOM}$ in~time
    \(O^*(n^{o(n)})\)
    on instances $(G,H)$ satisfying $\operatorname{degen}(H)\le2$ and $h\le 2^{O(\log^2 n)}$.
\end{restatable}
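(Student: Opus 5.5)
We would prove \Cref{thm:constant-deg-lb} by reduction from a problem already known to be $n^{\Omega(n)}$-hard under \cc{ETH} on sparse structure. The starting point is the lower bound of~\cite{CFGKMPS17}, combined with the bounded-chromatic refinement of~\cite{FGKM15} recorded in \Cref{sec:bounded-chi-poly-target}: under \cc{ETH} there is no $n^{o(n)}$ algorithm for \problemabbr{HOM}$(G,H)$ even when $\chi(H)=O(1)$ and $h=\mathrm{poly}(n)$. Equivalently, we may start from a $2$-\problemabbr{CSP} with $n$ variables, domain size $\mathrm{poly}(n)$ and $O(n)$ constraints, which admits no $n^{o(n)}$ algorithm. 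The task is then to re-encode such an instance as a homomorphism problem with a $2$-degenerate target. We must keep the number of source vertices $O(n)$, or at most $n\cdot\mathrm{polylog}(n)$, which still yields $n^{\Omega(n)}$-type hardness after rescaling. In exchange, the target is allowed to grow to $2^{O(\log^2 n)}$ vertices.

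The key construction replaces each edge relation $R\subseteq[h]^2$ by a subdivided gadget. Subdividing every target edge yields a $2$-degenerate graph, since the subdivision vertices have degree~$2$. To keep the source graph mapping rigidly, we give each source vertex a \emph{label} drawn from a small rigid family of $2$-degenerate graphs. Concretely, we attach to each vertex $v$ of~$H$ an "address" gadget: a path or cycle structure of length $O(\log h)=O(\log n)$ that encodes the bits of $v$. The matching source gadgets force a homomorphism to respect these addresses. Each constraint edge $uv$ of the CSP becomes, in~$G$, a path of length~$L$ between the gadgets of $u$ and $v$. In~$H$, for each allowed pair $(a,b)\in R$, we add a private path of length $L$ between the address gadgets of $a$ and $b$.

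To prevent paths from folding or combining illegally, we use distinct odd-cycle lengths, or colour-coding by cycle lengths, along each path. For this we need about $\log h$ different cycle sizes, each of length $O(\log n)$. This is where the $2^{O(\log^2 n)}$ bound comes from: tagging each position of each private path with enough unique identifiers multiplies the target size by $(\log n)^{O(\log n)}$ per structure, or requires products of $O(\log n)$ components of size $O(\log n)$. Degeneracy stays at most~$2$ because every added structure consists of cycles and paths glued at single vertices, with no two cycles sharing an edge. Peeling degree-$\le 2$ vertices therefore eliminates everything.

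The steps are as follows. First, fix the hard sparse CSP. Second, design rigid $2$-degenerate vertex gadgets, meaning cores with no nontrivial endomorphisms onto each other, such as odd cycles of pairwise coprime lengths. Third, build the edge paths and prove completeness, which is direct. Fourth, prove soundness: every homomorphism $G\to H$ maps each source vertex gadget onto a single target vertex gadget and each connecting path onto a genuine allowed-pair path. Finally, count the sizes. The source has $n\cdot\mathrm{polylog}(n)$ vertices, so an $N^{o(N)}$ algorithm with $N=n\,\mathrm{polylog}\,n$ gives $n^{o(n)}$ for the CSP, since $\log N=\Theta(\log n)$ and a polylog blow-up must be avoided or absorbed. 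We would aim for $N=O(n)$ by using gadgets of size $O(1)$ per vertex where possible. The main obstacle is the soundness step together with keeping $|V(G)|=O(n)$. Homomorphisms can fold paths and map cycles into longer odd cycles, so rigidity has to come from odd girth and length arguments. To get it, we would first try choosing target private paths of length exactly $L$ with odd cycles of distinct lengths at endpoints, and argue via walk-length parity and odd-girth that images cannot shortcut.
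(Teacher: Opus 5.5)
\textbf{The gap is in your starting hardness assumption.} You say one may ``equivalently'' start from a $2$-\problemabbr{CSP} with $n$ variables, $O(n)$ constraints and domain size $\operatorname{poly}(n)$ that has no $n^{o(n)}$ algorithm. No such result is known. With a linear number of constraints, the best available bound is $h^{\Omega(n/\log n)}$~\cite{Marx10,KMPS24}, which for $h=\operatorname{poly}(n)$ is only $2^{\Omega(n)}$. Improving it to $h^{\Omega(n)}$ is exactly the open question of~\cite{KMPS24} that the paper recalls.

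The $n^{\Omega(n)}$ bounds of~\cite{CFGKMPS17} and \Cref{thm:bounded-chi-poly-target} do not supply it, because their sources are not sparse. Each bucket groups $\Theta(\log N)$ vertices of a bounded-degree \problemname{$3$-Coloring} instance and so meets $\Theta(\log N)$ other buckets; a source on $N$ vertices therefore has $\Theta(N\log N)$ edges. In fact, your starting point, fed into the paper's pinning-frame encoding and subdivision, would give $n^{\Omega(n)}$ hardness for $2$-degenerate targets of \emph{polynomial} size. The paper lists that as open in \Cref{sec:conclusion}.

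Your size accounting hides the same problem. A source with $N=n\log n$ vertices does not ``still yield $n^{\Omega(n)}$-type hardness'': an $N^{o(N)}$ algorithm then runs in time $2^{o(n\log^2 n)}$, which contradicts nothing. Since every edge gadget costs source vertices, the instance must have $O(n)$ \emph{edges} before subdivision. Per-vertex address gadgets of length $\Theta(\log n)$ already violate this.

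\textbf{Consequently, you also misplace where the bound $h\le 2^{O(\log^2 n)}$ comes from.} The paper starts from \Cref{thm:sparse-bounded-alphabet-csp} ($3$-regular bipartite constraint graph on $k$ vertices, no $f(k-1)^{o(k/\log k)}$ algorithm). It instantiates this with alphabet size $f(k)=2^{\Theta(\log^2 k)}$, so that $f(k)^{k/\log k}=k^{\Theta(k)}$ repays the $\log k$ loss; the target then has $O(kf(k)^2)=2^{O(\log^2 k)}$ vertices.

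The \problemabbr{CSP} is encoded using a sparse pinning frame: a chain of $5$-wheels whose hubs are recognized by their non-bipartite neighborhoods (\Cref{lem:four-pinning}). Value vertices attach to two pins, and constraint vertices attach to one pin and two value vertices, so $|V(G)|=\Theta(k)$ and $|E(G)|=O(k)$.

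After that, no odd-cycle tagging is needed. It would not work as you describe anyway, since a longer odd cycle maps onto every shorter one. Replacing every edge of \emph{both} graphs by a path of length exactly~$3$ preserves the existence of a homomorphism (\Cref{lem:three-subdivision-hom}). The reason is that sending each subdivision vertex of $S_3(H)$ to the far endpoint of its original edge is a homomorphism to $H$ under which the endpoints of every walk of length~$3$ become adjacent.
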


Finally, we~formulate a~new barrier showing that improving the lower bound $2^{o(\operatorname{degen}(H)n)}$ further~to $2^{\omega(\operatorname{degen}(H)n)}$ would be~challenging.
Informally, the reason is~that such a~reduction would have to~be stronger than all known fine-grained reductions from $3$-$\problemabbr{SAT}$. We~make~it formal below.

By the Sparsification Lemma~\cite{IPZ01}, $\cc{ETH}$ can be~viewed
as~hardness of~sparse $3$-\problemabbr{SAT} with $n$~variables and
$O(n)$~clauses.
An~explicit sparse formula has bit length $\Theta(n\log n)$, therefore if $s$~denotes the bit size of~the input, then sparse $3$-\problemabbr{SAT} is $2^{\Omega(s / \log s)}$ hard.
Many fine-grained reductions are known for various problems; however, to~the best of~our knowledge, not a~single one of~them establishes a~lower bound better than $2^{\Omega(s / \log s)}$, where $s$~denotes the input bit size.

A~concrete example of~a~problem exhibiting a~gap between the~known lower and~upper bounds is~the sparse $2$-\problemabbr{CSP} problem.
A~brute-force algorithm for~the $2$-\problemabbr{CSP}
with $n$~variables over a~domain of~size~$h$
has running time $O(h^n)$ and there are no~$h^{o(n)}$ algorithms under~$\cc{ETH}$~\cite{Traxler08}.
However, when the~number of~constraints is~linear, the~best known lower bound is~only $h^{\Omega(n / \log n)}$~\cite{Marx10,KMPS24}.
The~question of~improving the lower bound to~$h^{\Omega(n)}$ was raised in~\cite{KMPS24}.
We~note that, in~order to~close this gap, the~lower bound would have to~establish a~bound stronger than~$2^{\Omega(s / \log s)}$, where~$s$ denotes the~input size in~bits; see~\Cref{subsec:sparse-csp-nocomp}.

\begin{restatable}[No-Compression Hypothesis, informal version]{hypothesis}{noCompressionHyp}\label{hyp:nocomp}
    There~is no~fine-grained reduction establishing a~lower bound
    $2^{\omega(s/\log s)}$ under \cc{ETH}, where $s$~is the bit size
    of~the input.
\end{restatable}

A~weaker version of~this hypothesis was recently considered
by~Kulikov and Mihajlin~\cite{KM24} who proved that an~\cc{SETH}-based
lower bound $2^{\Omega(s)}$ violates \cc{SETH}. We~show that
refuting a~somewhat stronger version of~\Cref{hyp:nocomp} implies new
circuit lower bounds: if~there~is a~uniform generator that converts $O(n^{0.99})$ input bits into CNF formulas over $n$~variables that are~$2^{\Omega(n)}$-hard under \cc{ETH}, then $\P^{\NP}$ has truly superlinear circuit size.

Finally,
we~show that a~reduction showing a~$2^{\omega(\operatorname{degen}(H)n)}$ lower bound for \problemabbr{HOM} under \cc{ETH} would refute~\Cref{hyp:nocomp}.

\begin{restatable}[Informal version of~\Cref{thm:unified-nocomp-formal}]{theorem}{dictNoComp}
    \label{thm:unified-nocomp}
    Under \Cref{hyp:nocomp},
    no~reduction
    can establish a~$2^{\omega(\operatorname{degen}(H)n)}$ lower bound for \problemabbr{HOM} under \cc{ETH}.
\end{restatable}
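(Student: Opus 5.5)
The plan is a compression argument: show that any instance of $\problemabbr{HOM}$ with $\operatorname{degen}(H)=d$ and $n$ source vertices can be written down in $O(dn\log n)$ bits, up to lower-order terms. Then let $s$ denote the bit size of this encoding. A reduction from sparse $3$-$\problemabbr{SAT}$ that rules out $2^{o(T)}$ algorithms with $T=\omega(dn)$ yields, after re-encoding, a $2^{\omega(s/\log s)}$ lower bound, since $s/\log s=\Theta(dn)$ when $s=\Theta(dn\log n)$. That contradicts \Cref{hyp:nocomp}.

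\textbf{Step 1 (normalizing the target).} First reduce $H$ so that its size is controlled by $n$ and $d$. The key observation is that a homomorphism $G\to H$ uses at most $n$ vertices of $H$. However, we cannot simply shrink $H$ to $n$ vertices, since we do not know which ones are used. Instead, I would use the degeneracy ordering of $H$: every vertex has at most $d$ neighbours that appear later in the ordering. Hence $H$ has at most $dh$ edges and can be described by listing, for each vertex, its at most $d$ forward neighbours. This alone gives $O(dh\log h)$ bits, which is too large when $h\gg n$.

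\textbf{Step 2 (bounding the effective bit size).} To remove the dependence on $h$, I would argue that the reduction's output instance is what counts, and that we may assume $G$ is connected. We may further assume $G$ has maximum relevant structure bounded by $n$. The bits describing $G$ are $O(n^2)$ in general, which is also a problem. So I would instead phrase the hypothesis formally, in the form of \Cref{thm:unified-nocomp-formal}, over a size measure in which the instance is given succinctly. Concretely, the plan is as follows. If the claimed bound $2^{\omega(dn)}$ holds, then the brute-force upper bound $h^n$ forces $\log h=\omega(d)$. I would then show that a reduction producing instances with parameters $(n,d,h)$ can be post-composed with a generic re-encoding of size $s=O(dn\log(nh))$ whose hardness is preserved.

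\textbf{Step 3 (the re-encoding).} For the re-encoding itself, I would use the $d$-degenerate forward lists only for the vertices of $H$ that can be reached. The solution space is a map from $V(G)$ to $V(H)$, and compatibility is checked edge by edge. So an oracle-style succinct description suffices, meaning the fine-grained reduction can be viewed as producing a circuit of size $O(dn\log h)$.

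\textbf{Main obstacle.} I expect the main difficulty to be Step 2: making the bit size genuinely $\tilde O(dn)$ rather than $\tilde O(n^2+dh)$. I would first try to restrict the source graph to degeneracy at most $d$ as well. Since $G\to H$ with $G$ dense forces large cliques into $H$, and $\omega(H)\le d+1$, one can reject or simplify instances in which $G$ has large cliques. I would then try to bound $h$ by preprocessing that merges vertices of $H$ with nested neighbourhoods (dominated vertices), so that $h\le 2^{O(d)}n$. Both points need care, and the precise statement of the hypothesis will determine exactly what must be proven.
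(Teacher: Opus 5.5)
Your overall frame is the same as the paper's: bring the instances down to bit length $s$ with $s/\log s$ of the order of the claimed exponent, then invoke \Cref{hyp:nocomp}. The gap is in the compression step, which you yourself flag, and the fixes you propose do not work.

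A re-encoding of $(G,H)$ itself cannot reach $\tilde O(dn)$ bits, because nothing bounds the density of the source. Your proposed fix rests on a false implication. $G\to H$ only forces $\omega(G)\le\omega(H)\le d+1$ (indeed $\chi(G)\le d+1$), and neither bounds density. For example, $K_{n/2,n/2}$ has $n^2/4$ edges and degeneracy $n/2$, yet maps to $K_2$. Already the bipartite graphs on $[n]$ number $2^{\Omega(n^2)}$, so sources with small clique number cannot all be written in $o(n^2)$ bits.

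The target side fails too. Removing dominated vertices at best reaches the core, and $d$-degenerate cores can be arbitrarily large. Odd cycles $C_{2k+1}$ are $2$-degenerate cores with no nested neighbourhoods, so $h\le 2^{O(d)}n$ does not follow. Step~3's ``circuit of size $O(dn\log h)$'' does not help either, since \Cref{hyp:nocomp-formal} measures the explicit bit length of instances of a fixed problem.

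The missing idea is that you must spend time to destroy the density. That means a Turing reduction to a different problem with short instances (the paper's $\problemabbr{SLEH}_{D,\tau}$), not a re-encoding of $\problemabbr{HOM}$. The paper proceeds as follows.
\begin{itemize}
\item It fixes a degeneracy order $w_1,\dots,w_h$ of $H$ with forward neighbourhoods $R_i$, $|R_i|\le D$.
\item It guesses at most $n/(\tau+1)$ pivots together with their images.
\item While the unmapped part of $G$ has a nonempty $\tau$-core, it takes a core pivot mapped to the current $w_i$. Since all core vertices map into $\{w_i,\dots,w_h\}$, it branches only over images in $R_i$ for $\tau$ of the pivot's core neighbours.
\item The total overhead is $2^{O(n\log D+\frac n\tau(\log h+\log\tau))}$.
\item Once the core is empty, the residual source has fewer than $\tau n$ edges.
\item The constraints from already-mapped neighbours are compressed via the $K_{D+1,D+1}$-freeness of $H$ (\Cref{lem:small-list}). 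For each unmapped vertex, either its set of allowed images or the set of images of its mapped neighbours has size at most $D$, so each list costs $O(D\log n)$ bits.
\end{itemize}

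This yields queries of $O(n(D+\tau)\log n)$ bits. Balancing with $\tau=\lceil\max\{D,\sqrt{\log h}\}\rceil$ gives \Cref{thm:unified-nocomp-formal}. Note that this theorem assumes $h\le n$ outright and only excludes $2^{\omega(n\max\{D,\sqrt{\log h}\})}$ lower bounds. The $\sqrt{\log h}$ term is the cost of guessing pivot images, so your goal $s=\Theta(dn\log n)$ is stronger than what this argument supports.
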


\paragraph{Techniques.}
The lower bound of~\Cref{thm:main-lb} is~a~reduction from the \problemname{$3$-Coloring} problem on~graphs of~maximum degree~$4$.
We~partition $N$ vertices of~the instance into vertex-buckets of~size~$r$, so~that the~source of~the produced $\problemabbr{HOM}$ instance has only $O(N/r)$ vertices, while the~target must supply a~state for~every coloring of~a~bucket.
The~difficulty is~to~check the~constraints \emph{between} buckets without making these states densely connected; we~do~this by~assigning $O(r)$ labels to~the~buckets via~an~equitable coloring, so~that every target vertex keeps degree~$O(r)$ and hence $\operatorname{degen}(H)=O(r)$, while $\operatorname{degen}(H)\cdot n$ stays linear in~$N$.
The~list constraints of~the intermediate list-homomorphism instance are then removed by~attaching a~small rigid frame to~both graphs.
The~main difference between \Cref{thm:main-lb} and~the previous lower bounds for~$\problemabbr{HOM}$~\cite{FGKM15,CFGKMPS17} is~that we~bucket not only the~vertices but~also the~constraints between them.
This approach allows us~to keep the~target degenerate whereas making the~testing problem more difficult.

For~\Cref{thm:constant-deg-lb}, we~start from the~$\cc{ETH}$-hardness of~sparse bounded-alphabet $2$-\problemabbr{CSP}~\cite{KMPS24}, encode it into $\problemabbr{HOM}$ while keeping \emph{both} graphs sparse, and then subdivide every edge into a~path of~length three; subdivision preserves homomorphisms while making the~target $2$-degenerate, and sparsity keeps the~size growth quasi-polynomial.

The~barrier of~\Cref{thm:unified-nocomp} rests on the following idea: in~a~$D$-degenerate target, the~set of~images still available to~a~source vertex under some~partial homomorphism always admits a~record of~$O(D\log n)$ bits.
This reduces $\problemabbr{HOM}$ to~a~succinct homomorphism problem whose instances can be encoded using a~few bits, so~a~reduction proving a~$2^{\omega(\operatorname{degen}(H)n)}$ lower bound would compress sparse $3$-$\problemabbr{SAT}$ beyond the~$s/\log s$ scale, contradicting \Cref{hyp:nocomp}.

\paragraph{Organization.}
\Cref{sec:preliminaries} fixes notation and assumptions.
\Cref{sec:linear-lb} proves the linear degeneracy lower bound.
\Cref{sec:constant-degenerate-nlogn} proves the constant-degeneracy quasi-polynomial-target lower bound.
\Cref{sec:no-compression} formalizes the no-compression barrier and proves its $\problemabbr{HOM}$ and sparse $\problemabbr{CSP}$ consequences.
\Cref{sec:generated-hard-witnesses} proves the circuit lower-bound consequence.
\Cref{sec:conclusion} summarizes the resulting target-side picture and lists open problems.
\Cref{sec:bounded-chi-poly-target} gives the polynomial-target bounded-chromatic lower bound.

\section{Preliminaries}
\label{sec:preliminaries}

All logarithms in this paper are taken base $2$.
For a~positive integer $q$, we write $[q]=\{1,\ldots,q\}$.
For a~nonnegative integer $q$, by $\operatorname{bin}(q)$ we mean a~binary encoding of $q$.
For a~partial function $f$, we write $\operatorname{dom}(f)$ for its domain.
All graphs are finite, undirected, and loopless unless explicitly stated otherwise.
For a~graph $G$, we write $V(G)$ and $E(G)$ for its vertex and edge sets, $N_G(v)$ for the open neighborhood of~$v$, $\Delta(G)$ for its maximum degree, and $\overline G$ for its complement.
For a~set $S\subseteq V(G)$, the graph $G[S]$ denotes the subgraph induced by~$S$.

Throughout the paper, $G$ denotes the source graph and $H$ denotes the target graph in a~homomorphism instance.
We write $n=|V(G)|$ and $h=|V(H)|$.
In reductions from a source problem, $N$ denotes the size parameter of the original instance.

A~hypergraph $\mathcal A=(C,\mathcal E)$ consists of a~finite vertex set~$C$ and a~family $\mathcal E$ of subsets of~$C$, called hyperedges.
It is $d$-uniform if every hyperedge has size exactly~$d$.
An~automorphism of~$\mathcal A$ is a~permutation of~$C$ that preserves the hyperedge family, and $\operatorname{Aut}(\mathcal A)$ denotes the automorphism group.

\subsection{Homomorphism Problems}
A~homomorphism from a~graph $G$ to a~graph $H$ is a~map $\phi\colon V(G)\to V(H)$ such that $uv\in E(G)$ implies $\phi(u)\phi(v)\in E(H)$.
We write $G\to H$ if such a~map exists.
The two-input decision problem is denoted by $\problemabbr{HOM}$.
For graph classes $\mathcal G$ and $\mathcal H$, the notation $\problemabbr{HOM}(\mathcal G,\mathcal H)$ denotes the restriction to instances $(G,H)$ with $G\in\mathcal G$ and $H\in\mathcal H$.
The symbol ``$\cdot$'' means that the corresponding side is unrestricted; thus $\problemabbr{HOM}(\cdot,H)$ is the fixed-target $H$-coloring problem and $\problemabbr{HOM}(\cdot,\mathcal H)$ is the restricted variable-target problem with targets from~$\mathcal H$.

In~the \problemname{List Graph Homomorphism} problem, \problemabbr{LHOM}, the input is a~pair of~graphs~$(G,H)$
together
with a~list $L(v)\subseteq V(H)$ for every $v\in V(G)$.
The goal~is to~check
whether there is a~homomorphism $\phi \colon G\to H$ satisfying $\phi(v)\in L(v)$ for every source vertex~$v$.

\subsection{Graph Parameters}
We use the following basic graph parameters.
\begin{itemize}
	\item The chromatic number $\chi(G)$ is the least number of colors in~a~proper vertex coloring of~$G$.

	\item The maximum degree $\Delta(G)$ is the maximum degree of a~vertex of~$G$.

	\item The degeneracy $\operatorname{degen}(G)$ is the least integer $d$ such that every non-empty subgraph of~$G$ has a~vertex of degree at most~$d$.
		Equivalently, $G$ has an~ordering in which each vertex has at most~$d$ later neighbors.
\end{itemize}

The remaining target-side parameters appearing in Figure~\ref{fig:parameter-dominance-map} are defined formally in Appendix~\ref{section:parameters}, where we also state parameter implications.

\subsection{Constraint Satisfaction}
A~finite-domain $\problemabbr{CSP}$ instance consists of variables, a~finite domain, and constraints restricting tuples of variables.
In this paper, the source problems are binary CSPs.
For $h\ge2$, a~$(2,h)$-$\problemabbr{CSP}$ instance has variables taking values in~$[h]$, unary constraints $R_x\subseteq[h]$, and binary constraints $R_{xy}\subseteq[h]^2$ on ordered pairs of distinct variables.
An~assignment satisfies the instance if it satisfies every unary and binary relation.
The constraint graph has one vertex for each variable and one edge for each binary constraint, with no orientation.
More generally, a~$(k,h)$-$\problemabbr{CSP}$ instance has $n$ variables over~$[h]$ and constraints of~arity at most~$k$.
A~\emph{sparse} $(k,h)$-$\problemabbr{CSP}$ instance is one with at most~$O(n)$ constraints.
In the encoding of $\problemabbr{CSP}$, every constraint encodes its scope and the truth table of its relation.

\subsection{Boolean Circuits}
A~Boolean circuit $C$ over variables $x_1, \dotsc, x_n$ is a~directed acyclic graph with nodes of in-degree zero or two.
The in-degree zero nodes are labeled by variables $x_i$ and constants $0$ or $1$, whereas the in-degree two nodes are labeled by binary Boolean functions.
The only out-degree zero node is the output of the circuit.
A~circuit $C$ computes a~Boolean function $f  \colon \{ 0, 1 \}^{n} \to \{ 0, 1 \}$ in a~natural way.

For a~language $L\subseteq\{0,1\}^*$, a~circuit $C$ with $n$ input variables computes $L$ on inputs of length~$n$ if, for every $x\in\{0,1\}^n$, $C(x) = 1$ exactly when $x\in L$.
For a~language $L$, let $\cc{CC}_L[n]$ denote the minimum size of a~Boolean circuit computing $L$ on all inputs of length~$n$.
For a~function $s\colon\mathbb N\to\mathbb N$, write
\[L\in\cc{SIZE}[s(n)]\]
if for every sufficiently large~$n$,
\[\cc{CC}_L[n]\le s(n).\]

Proving lower bounds on~the circuit size of~explicit functions is~a~long-standing challenge.
A~counting argument due to~Shannon~\cite{Shannon49} shows that almost all Boolean functions on~$n$ variables require circuits of~size $\Omega(2^n/n)$.
However, the~best known lower bound for an~explicit function\footnote{By an~explicit function, we~mean a~function computable in~$\P^{\NP}$.} is~$3.1n-o(n)$, due to~Li and~Yang~\cite{LiY22}.

For a~string $a\in\{0,1\}^N$, let $\cc{CC}[a]$ denote the minimum size of a~circuit on $k=\ceil{\log_2 N}$ input bits whose first $N$ truth-table values are the bits of~$a$.
The remaining $2^k-N$ truth-table values are ignored.

\subsection{SAT Complexity Hypotheses}
A~$k$-$\cc{CNF}$ formula is a~conjunction of clauses of width at most~$k$, and $k$-$\problemabbr{SAT}$ asks whether such a~formula is satisfiable.
A~sparse $3$-$\problemabbr{SAT}$ instance on $N$ variables has $O(N)$ clauses.

\begin{hypothesis}[Exponential Time Hypothesis, $\cc{ETH}$~\cite{IPZ01}]
	There is no algorithm solving $3$-$\problemabbr{SAT}$ on formulas with $N$ variables in time $2^{o(N)}$.
\end{hypothesis}

By the Sparsification Lemma of Impagliazzo, Paturi, and Zane~\cite{IPZ01}, $\cc{ETH}$ is equivalent to the statement that sparse $3$-$\problemabbr{SAT}$ has no $2^{o(N)}$-time algorithm.

\begin{hypothesis}[Strong Exponential Time Hypothesis, $\cc{SETH}$~\cite{IP01}]
	For every $\varepsilon>0$, there exists an~integer $k\ge3$ such that $k$-$\problemabbr{SAT}$ on formulas with $N$ variables cannot be solved in time $2^{(1-\varepsilon)N}\operatorname{poly}(N)$.
\end{hypothesis}

\begin{hypothesis}[Counting Strong Exponential Time Hypothesis, $\#\cc{SETH}$~\cite{DHMTW14,CM16}]
	For every $\varepsilon>0$, there exists an~integer $k\ge3$ such that counting the satisfying assignments of $k$-$\cc{CNF}$ formulas with $N$ variables cannot be done in time $2^{(1-\varepsilon)N}\operatorname{poly}(N)$.
\end{hypothesis}

The following bounded-degree coloring consequence of $\cc{ETH}$ is due to Kochol, Lozin, and Randerath~\cite{KLR03}.
\begin{theorem}\label{thm:bdcolor}
	Assuming $\cc{ETH}$, there exists a~constant $\delta>0$ such that \problemname{$3$-Coloring}
	on~$N$-vertex graphs of~maximum degree at most~$4$ cannot be solved in time $2^{\delta N}$.
\end{theorem}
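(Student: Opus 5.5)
The plan is to compose the Sparsification Lemma with a linear-size reduction from sparse $3$-$\problemabbr{SAT}$ to \problemname{$3$-Coloring}, and then lower the maximum degree to~$4$ with a gadget that still keeps the size linear. If a formula with $N'$ variables and $O(N')$ clauses is mapped to a graph with $N=O(N')$ vertices and maximum degree at most~$4$, then a $2^{\delta N}$ algorithm for every $\delta>0$ would give a $2^{o(N')}$ algorithm for sparse $3$-$\problemabbr{SAT}$. By the equivalence noted after the statement of $\cc{ETH}$, that contradicts $\cc{ETH}$. So everything reduces to finding a reduction with linear blow-up.

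First I normalise the formula so that every variable occurs only a constant number of times, say at most three. For this I replace each variable by a cycle of fresh copies chained by the implications $x_1\to x_2\to\dots\to x_k\to x_1$. This keeps the number of variables and clauses linear.

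Next comes the textbook reduction. A palette triangle has vertices $T,F,B$. Each variable gets a triangle $\{x,\neg x,B\}$, and each clause gets the standard six-vertex OR-gadget whose output vertex is joined to $F$ and~$B$. The graph has $O(N')$ vertices and edges, and it is $3$-colourable if and only if the formula is satisfiable. After the normalisation, every literal vertex and every gadget vertex has constant degree. Only the palette vertices $T,F,B$ have degree $\Theta(N')$.

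To bound the degree I replace each high-degree vertex $v$ of degree~$d$ by $O(d)$ copies that are forced to receive equal colours. The basic equality gadget is the diamond $K_4-e$: if $a$ and $d$ are both adjacent to the adjacent pair $b,c$, then $a$ and $d$ get the same colour. Chaining diamonds along $v_1,\dots,v_m$ forces all copies to share one colour. The original neighbours of~$v$ are then distributed among the copies, one per copy, and the total size increase is $\sum_v O(\deg v)=O(N')$.

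The main obstacle is the degree count of this chain. An interior copy $v_i$ already has degree~$4$ from its two diamonds, so attaching an external neighbour would give degree~$5$. I would try two fixes, in this order. The first is to distribute the external neighbours onto vertices whose colour is determined by the chain but which have spare degree. For example, inside a diamond $b_i,c_i$ use only $3$ edges, and a pendant triangle off $b_i$ yields a vertex forced to equal $v_i$ via a degree-$3$ connection. The second, if that fails, is to design a gadget in which each copy meets the equality structure with only three edges, and verify by a small case analysis that it still forces equality. This is the step where the degree bound~$4$ (rather than~$5$) is genuinely delicate, and it is exactly what Kochol, Lozin, and Randerath~\cite{KLR03} handle. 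Failing a clean self-contained gadget, I would invoke their construction directly, since linear size is all that the $2^{\delta N}$ conclusion needs.
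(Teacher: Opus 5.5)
The paper gives no proof of this statement. It simply cites \cite{KLR03}, so your fallback is exactly what the paper does. Your self-contained route (sparsification, implication-cycle normalisation, the palette-triangle reduction with OR-gadgets, then splitting high-degree vertices with diamond chains) is a more elementary alternative and is linear in size. As written, however, it stops at the step you flag, and your first fix does not work. A triangle hanging off $b_i$ needs two new edges at $b_i$, giving it degree $5$. Its two new vertices also just receive the two colours other than that of $b_i$, i.e.\ those of $v_i$ and $c_i$ in either order, so neither is forced to equal $v_i$.

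The missing piece is one vertex per diamond: add a terminal $t_i$ adjacent to both $b_i$ and $c_i$. Since $b_i$ and $c_i$ get different colours and $v_i,v_{i+1},t_i$ all see both, every copy and terminal gets the same colour. Conversely, a colouring of the original graph extends: give all $v_j,t_j$ the colour of the replaced vertex and $b_j,c_j$ the other two. Now $b_i$, $c_i$ and each interior $v_i$ have degree exactly $4$, each $t_i$ has two free slots, and a vertex of degree $d$ costs $O(d)$ new vertices.

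You also need degree at most $4$ everywhere else, not just constant degree:
\begin{itemize}
\item After the cycle normalisation each literal occurs at most twice, so a literal vertex has degree at most $4$ (its negation, one $B$-terminal, two gadget edges).
\item Clause-gadget vertices have degree at most $3$, except the output, which has degree $4$. This holds provided each $2$-clause from the cycles gets a single OR-triangle whose third vertex is joined to $F$ and $B$. Duplicating a literal to fit the six-vertex gadget would push that literal to degree $5$.
\item $T$ has degree $2$.
\end{itemize}
So only $F$ and $B$ need chains.

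Finally, to get the ``there exists $\delta$'' form, run the contrapositive against the form of $\cc{ETH}$ stating that for some $\varepsilon>0$ there is no $2^{\varepsilon n}$ algorithm for $3$-\problemabbr{SAT}. Having a separate $2^{\delta N}$ algorithm for each $\delta$ does not by itself give a single $2^{o(N')}$ algorithm.
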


Throughout the~paper, we~use the~notion of a~fine-grained reduction; for~its formal definition and~an~overview of~results in~fine-grained complexity, see~\cite{Vassilevska18}.

\section{Linear Dependence on Degeneracy}
\label{sec:linear-lb}

This section reduces \problemname{$3$-Coloring} on bounded-degree graphs to \problemabbr{HOM}, on targets whose degeneracy is small.
The proof has two separate parts.
First, we reduce \problemname{$3$-Coloring} to an~\problemabbr{LHOM} instance whose target already has small degeneracy.
Then, we encode the list constraints into ordinary graph homomorphism.
The construction is arranged so~that the target degeneracy, multiplied by the number of source vertices, stays linear in~the size of the original instance.
Consequently, an~algorithm whose running-time exponent were sublinear in~this product would solve bounded-degree \problemname{$3$-Coloring} faster than $\cc{ETH}$ permits.

\begin{restatable}[Formal version of \Cref{thm:main-lb}]{theorem}{mainDegLBFormal}
	\label{thm:main-lb-formal}
	Unless $\cc{ETH}$ fails, for every nondecreasing unbounded function $D\colon\mathbb N\to\mathbb N$ satisfying $D(n)=O(n^{1/3})$ and $D(2n)=O(D(n))$, there is no algorithm that solves $\problemabbr{HOM}$ on every instance $(G,H)$ satisfying $\operatorname{degen}(H)\le D(n)$ in~time
	\[O^*\left(2^{o(D(n)n)}\right).\]

	More precisely, for every sufficiently large integer~$r$ and all sufficiently large~$N$, every $N$-vertex graph $Q$ of maximum degree at most~$4$ can be transformed into an~instance $(G_r,H_r)$ of $\problemabbr{HOM}$, in~time polynomial in~$|Q|+|H_r|$, with
		\[|V(G_r)|=O(N/r+r^2),\qquad |V(H_r)|=2^{O(r^2)},\]
		and with
	\[\operatorname{degen}(H_r)=O(r),\]
	such that $Q$ is $3$-colorable if and only if there is a~homomorphism $G_r\to H_r$.
\end{restatable}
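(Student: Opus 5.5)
The plan is a two-stage reduction from \Cref{thm:bdcolor}: bounded-degree \problemname{$3$-Coloring} $\to$ \problemabbr{LHOM} with a low-degeneracy target $\to$ \problemabbr{HOM}. The first half of the theorem then follows by choosing the bucket size~$r$ from~$D$. I have not fully settled how the labels are assigned and how target adjacency is defined so that degeneracy stays $O(r)$; the approach I commit to is below, and the check I expect to be hardest is identified in the second paragraph.

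\emph{Bucketing and labels.} Fix $r$ and partition $V(Q)$ into $m=\lceil N/r\rceil$ buckets $B_1,\dots,B_m$ of size at most~$r$. The bucket graph $\mathcal B$ joins $B_i$ and $B_j$ whenever $Q$ has an edge between them; since $\Delta(Q)\le4$, we have $\Delta(\mathcal B)\le4r$. I then want to assign labels $\lambda(B_i)\in[L]$ with $L=O(r)$ so that every bucket has only $O(1)$ neighbours in~$\mathcal B$ of each label. I would get this from an equitable coloring (Hajnal--Szemer\'edi) of a suitable auxiliary graph, possibly after choosing the partition itself via an equitable coloring of a power of~$Q$. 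The \problemabbr{LHOM} source $G'$ is then $\mathcal B$ itself, with one vertex per bucket, so $n=O(N/r)$. A target vertex is a pair $(c,\alpha)$. Here $c$ is a label, and $\alpha$ records a $3$-coloring of a bucket with label~$c$ together with the colors it assigns to the $Q$-neighbours lying in adjacent buckets of larger label. There are at most $4r$ such neighbours, so there are $2^{O(r)}$ such records per label at first glance. To let one state determine its partner uniquely I would instead let $\alpha$ list full colorings of the $O(r)$ relevant later buckets. This gives $3^{O(r^2)}=2^{O(r^2)}$ states, matching the claimed $|V(H_r)|$. The list $L(B_i)$ consists of the states with label $\lambda(B_i)$ that are proper inside~$B_i$. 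Two states $(c,\alpha)$ and $(c',\beta)$ with $c<c'$ are adjacent iff $\beta$ is consistent with the record in~$\alpha$. Correctness in both directions is then routine: a $3$-coloring of~$Q$ yields its restriction states, and conversely consistency along every edge of~$\mathcal B$ certifies every crossing edge of~$Q$.

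\emph{Main obstacle: degeneracy $O(r)$.} I would order $V(H)$ by label and bound the number of later neighbours of each $(c,\alpha)$. The intended argument is that for each later label $c'$, the record~$\alpha$ pins down the compatible $\beta$ up to $O(1)$ choices. That gives $O(L)=O(r)$ later neighbours, hence $\operatorname{degen}(H)=O(r)$ via the ordering characterization in the preliminaries. Making this pinning-down true is exactly where the label assignment must be chosen carefully. This is the step where the constraints, not only the vertices, get bucketed: all bucket pairs sharing a label pair $(c,c')$ must be checked by the same $O(1)$ edges per state. If a single label class can meet a bucket's neighbourhood many times, the degree blows up to $2^{\Omega(r)}$. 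I would try first to enforce ``each bucket has at most one neighbour per label'' via an equitable coloring of $\mathcal B$ with the bucket sizes tuned. If that forces $\omega(r)$ labels, I would fall back to splitting each bucket's adjacent pairs into $O(r)$ matchings and encoding them in~$\alpha$.

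\emph{Removing lists and parameter choice.} Lists are removed by attaching a rigid frame. I take a core graph~$F$ on $O(r^2)$ vertices (e.g.\ $O(L)$ label vertices plus gadgets distinguishing the $L$ classes), place one copy in $G$ and one in $H$, join each source vertex $B_i$ to the frame vertex of its label, and join each target state $(c,\alpha)$ to the frame vertex of~$c$. Rigidity forces the frame onto itself identically, so it simulates the lists. This adds $O(r^2)$ source vertices and only $O(1)$ to the degeneracy of the states; the frame itself can be kept $O(r)$-degenerate. Finally, given~$D$, I would set $r=\Theta(D(n))$ for $n=\Theta(N/r+r^2)$. The conditions $D(n)=O(n^{1/3})$ and $D(2n)=O(D(n))$ ensure $r^2=O(N/r)$ and that $r$ is well defined up to constants. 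Then $D(n)\,n=\Theta(N)$, and since $h=2^{O(r^2)}=2^{o(N)}$ the polynomial overhead is absorbed. A $2^{o(D(n)n)}$ algorithm would therefore decide bounded-degree \problemname{$3$-Coloring} in $2^{o(N)}$ time, contradicting \Cref{thm:bdcolor}.
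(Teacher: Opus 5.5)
\textbf{Gap: degeneracy.} The construction fails at exactly the step you flagged, and a cleverer label assignment will not fix it. In your target every state is fat: it carries its own coloring plus records of later buckets. Fat states are then joined directly to fat states.

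Take labels $c<c'$ and the rule ``$\alpha$'s record for label $c'$ equals $\beta$'s own coloring''. Under this rule, the bipartite graph between label-$c$ and label-$c'$ states is a disjoint union, over $x\in[3]^r$, of complete bipartite graphs, and both sides of each piece are exponentially large. The $\alpha$'s are free in their own coloring. The $\beta$'s are free in their records of buckets at distance two from $\alpha$'s bucket, which $\alpha$ never sees. So $H$ contains $K_{t,t}$ with $t=3^{\Omega(r)}$, hence $\operatorname{degen}(H)\ge t$; this is the obstruction behind \Cref{lem:small-list}.

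The only way to make $\alpha$ determine $\beta$ up to $O(1)$ choices is for $\alpha$ to record everything $\beta$ records and to demand agreement on all shared labels, for instance by having every state record a coloring for every larger label. Your states are indexed by labels rather than buckets, so this agreement propagates along the bucket graph. For example, with $\mathcal B$-edges $B_1B_2$, $B_1B_3$, $B_2B_3'$ and labels $1,2,3,3$, it forces $B_3$ and $B_3'$ to receive identical colorings, which a proper coloring of $Q$ need not give; completeness breaks. Neither the equitable coloring nor the matching fallback escapes this dilemma. One target vertex cannot be both the rich state of its own bucket and a partner that a neighbour's record pins down.

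\textbf{Missing idea.} Give these two roles to two different kinds of source vertices, so the target becomes bipartite between a huge side of small degree and a small side eliminated afterwards. In the paper, bucket vertices $u_i$ map to thin bucket-states $U_{\ell,c}$, of which there are only $L\cdot 3^r$. A second family $w_j$, one per \emph{edge-bucket}, maps to fat edge-states $W_{Z,c,\Pi}$. There are $2^{O(r^2)}$ of these, but each is adjacent only to the at most $2r$ bucket-states it names and at most $7r$ test vertices. Deleting all edge-states first and then the rest therefore witnesses $\operatorname{degen}(H_r)=O(r)$.

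\textbf{Second gap: checking the edges of $Q$.} Your correctness argument needs lists that know the actual crossing-edge pattern of $B_i$. The frame you propose only simulates label-level lists, so once lists are removed nothing forces the crossing edges of $Q$ to be checked. Also, anchoring to a single frame vertex does not by itself stop source vertices from mapping into the frame.

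The paper handles this as follows. It stores the edge pattern as the partial map $\Pi$ inside the edge-state. It pins $\Pi$ through source edges to one shared copy of an $O(r^2)$-size test board whose vertices have singleton lists. The color-check vertices $C_{p,\gamma}$ then forbid monochromatic edges. This keeps every list expressible by $O(\log r)$ markers, which is what makes a small rigid frame suffice; its ports are chosen so that no frame vertex is complete to any port.

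Your final choice of $r$ from $D$ and the padding argument are fine once a correct construction is in place.
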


\subsection{Proof Overview}
We take a~graph~$Q$ on~$N$ vertices with maximum degree at most~$4$, and partition its vertices into $\ceil{N/r}$ \emph{vertex-buckets} of~size at most~$r$; we fold each vertex-bucket into a~single vertex of a~list-homomorphism source.

The obstacle is that folding $r$ vertices into one source vertex forces the target to supply, for each group, a~vertex for every possible color pattern in~$[3]^r$.
The target may therefore have $2^{O(r^2)}$ vertices, and the whole difficulty is to verify the $3$-coloring constraints \emph{between} groups without making these vertices densely connected.
Classical reductions do exactly that: they attach to each state a~consistency gadget that lists all compatible states of its neighbors, and since a~single group can interact with $\Theta(r)$ others, the resulting states have degree exponential in~$r$.

We avoid this by further partitioning $E(Q)$ into \emph{edge-buckets} of at most $r$ edges each.
The partition is chosen so that the vertex-buckets meeting in~one edge-bucket have distinct \emph{labels}; hence one target state can refer to them by label without ambiguity.

The list target has two kinds of state vertices, plus a~test board.
A~\emph{bucket-state} vertex stores one color string $c\in[3]^r$ for one labeled bucket.
An~\emph{edge-state} vertex stores only the local data for one edge-bucket: the color strings of the participating labels and, for each edge in~that edge-bucket, the two bucket positions that it connects.
The global \emph{test board} contains small test vertices that check one edge constraint at a~time.

The point of edge splitting is locality in checking. Since an~edge-state vertex is adjacent only to the bucket-states of its own participating labels and to the test vertices for its own edge-bucket, its degree is $O(r)$ even though there may be $2^{O(r^2)}$ edge-states in~total.

After constructing this sparse list-homomorphism instance, we remove the lists.
We add a~small rigid \emph{frame} to both graphs and connect each source vertex to the frame according to the target vertices it is allowed to use.
Rigidity forces every homomorphism to respect these connections, so the lists are simulated inside ordinary $\problemabbr{HOM}$.

The rest of the section follows this order.
\Cref{subsec:bucketing} builds the buckets and labels.
\Cref{subsec:list-instance} defines the \problemabbr{LHOM} instance and proves that it is equivalent to \problemname{$3$-Coloring}.
\Cref{subsec:remove-lists} replaces lists by the rigid frame.
\Cref{subsec:parameters-eth} finishes the parameter bounds and derives the $\cc{ETH}$ contradiction.

\subsection{Bucket Decomposition}
\label{subsec:bucketing}

Let $Q$ be an $N$-vertex graph of maximum degree at most $4$.
Fix an~integer $r\ge2$.
We construct an~\problemabbr{LHOM} instance $(\Lambda_r,\Gamma_r,\mathcal L_r)$, where $\Lambda_r$ is the source graph, $\Gamma_r$ is the target graph, and $\mathcal L_r$ is the list assignment.
Thus, for each source vertex $v\in V(\Lambda_r)$, the set $\mathcal L_r(v)\subseteq V(\Gamma_r)$ consists of the allowed images of~$v$.
Later, we convert it into an~ordinary homomorphism instance~$(G_r,H_r)$.

The reduction groups the vertices and edges of~$Q$ into small blocks.
A~\emph{vertex-bucket} is a~set of at most $r$ vertices of~$Q$; one source vertex of $\Lambda_r$ will be responsible for the colors of an~entire vertex-bucket.
An~\emph{edge-bucket} is a~set of at most $r$ edges of~$Q$; one source vertex of $\Lambda_r$ will be responsible for checking all the $3$-coloring constraints inside one edge-bucket.
We say that a~vertex-bucket \emph{participates} in~an~edge-bucket if it contains an~endpoint of some edge of that edge-bucket.

A~vertex of $\Lambda_r$ that handles an~edge-bucket must know the buckets whose colors it compares, so we give every vertex-bucket a~\emph{label} from a~small set $[L]$ with $L=O(r)$, and we arrange that the buckets participating in~any single edge-bucket carry pairwise distinct labels.
The next lemma produces such a~bucketing and labeling.

\begin{lemma}\label{lem:bucket-labels}
	For any integer $r\ge2$ and all sufficiently large $N$, any graph $Q$ on $N$ vertices with $\Delta(Q) \le 4$ admits:
	\begin{itemize}
		\item a partition of $V(Q)$ into vertex-buckets $B_1,\ldots,B_k$ of size at most
		$r$;
		\item a label map $\lambda\colon\{B_1,\ldots,B_k\}\to[L]$ with $L\coloneqq4r+1$;
		\item a partition of $E(Q)$ into edge-buckets $E_1,\ldots,E_s$ with
		$s=O(N/r+r)$ and $|E_j|\le r$,
	\end{itemize}
	such that, inside each edge-bucket, no label occurs on two different participating
	vertex-buckets.
	Moreover, finding this partition and labeling can be done in polynomial time.
\end{lemma}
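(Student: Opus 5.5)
We build the three objects in sequence: first the vertex-buckets, then the labels, then the edge-buckets.

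\emph{Vertex-buckets.} Partition $V(Q)$ arbitrarily into $k=\lceil N/r\rceil$ buckets $B_1,\dots,B_k$ of size at most $r$; consecutive blocks in any fixed vertex ordering will do. Let $\mathcal B$ be the quotient multigraph on $\{B_1,\dots,B_k\}$, where $B_a$ and $B_b$ ($a\neq b$) are adjacent if some edge of $Q$ joins them. Since $\Delta(Q)\le4$, each bucket is incident to at most $4r$ edges of $Q$, so $\Delta(\mathcal B)\le 4r$.

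\emph{Labels.} A label conflict can only arise inside an edge-bucket, among the buckets it touches. We want the labels to separate those buckets, so we choose edge-buckets that touch few buckets and whose touched buckets are pairwise adjacent or close in $\mathcal B$. The plan is to make each edge-bucket a set of edges of $Q$ that all share one \emph{central} bucket. Concretely, orient every edge of $Q$ that goes between different buckets toward one of its endpoint buckets, and let it belong to that bucket's group. Each edge of $Q$ lying entirely inside some bucket $B_a$ also goes into the group of $B_a$. Each bucket then owns at most $4r$ edges; we split them into at most $4$ chunks of at most $r$ edges, giving $s\le 4k=O(N/r)$ edge-buckets.

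The buckets participating in a chunk owned by $B_a$ are $B_a$ together with some $\mathcal B$-neighbours of $B_a$. They are pairwise distinct labels as soon as $\lambda$ is a proper colouring of $\mathcal B^2$, and not merely of $\mathcal B$. But $\Delta(\mathcal B^2)$ can be about $16r^2$, which is too many for $L=4r+1$.

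This is the main obstacle. Each chunk only needs distinct labels on its own star, so we relax the requirement. We colour $\mathcal B$ itself greedily with $4r+1=\Delta(\mathcal B)+1$ colours, which is exactly $L$ (this is presumably why $L=4r+1$). Then a centre never shares a label with any of its neighbours. However, two leaves of the same star may still share a label.

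We repair this at the level of edge-buckets rather than labels. Within the group owned by $B_a$, partition the edges according to the label of the leaf bucket, and never put two edges whose leaf buckets are different but carry the same label into the same chunk. Equivalently, for each label value, form the leaf buckets of $B_a$ with that label and the edges going to them. We then pack chunks greedily: a chunk accepts an edge unless it already contains a different leaf bucket with the same label, or unless it already has $r$ edges. A naive bound on the number of chunks per centre is $\sum_\ell \lceil m_\ell/r\rceil$ plus extra chunks forced by label collisions. The number of forced chunks is at most the maximum number of distinct leaves sharing one label, which can be large.

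To control the total, I would instead count globally. Each pair (edge of $Q$, its centre) is charged once. A chunk closes either because it is full, and there are at most $2|E(Q)|/r=O(N/r)$ such chunks, or because of a collision. To bound the collision-closed chunks, we choose $\lambda$ as an \emph{equitable} $(4r+1)$-colouring of $\mathcal B$ (Hajnal--Szemerédi, computable in polynomial time), in line with the paper's overview. Equitability makes each label class have size about $k/(4r+1)$, and lets us argue that the collision overhead summed over all centres is $O(k+r)=O(N/r+r)$; the additive $r$ absorbs the rounding from the $L$ label classes.

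Finally we verify the three items. Vertex-bucket sizes are at most $r$ by construction. Edge-bucket sizes are at most $r$ by the greedy cap. The required property holds because a centre's label differs from every neighbour's label by properness of $\lambda$, and leaf labels within a chunk are distinct by the packing rule. Every step (greedy or Hajnal--Szemerédi colouring, orientation, packing) is polynomial time. The step I expect to need the most care is the bound $s=O(N/r+r)$ in the presence of collision-closed chunks. If equitability does not suffice, my fallback is to recolour $\mathcal B$ so that each bucket has at most a constant number of neighbours per label, via a random label assignment followed by local fixing, which is possible since $\Delta(\mathcal B)\le 4r<L$ on average.
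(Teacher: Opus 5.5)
\textbf{There is a genuine gap, and it sits at the step you flagged: the bound $s=O(N/r+r)$.}

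Your claim that an equitable $\lambda$ keeps the total collision overhead of the star packing at $O(k+r)$ is false. Take $Q$ whose bucket graph $\mathcal B$ is a disjoint union of $m=t(4r+1)$ copies of $K_{4r,4r}$. Join each pair of adjacent buckets by exactly one edge of $Q$; each bucket has $r$ vertices of degree $4$, so $\Delta(Q)\le 4$. In copy $i$, give every left bucket label $i \bmod (4r+1)$ and every right bucket label $i+1 \bmod (4r+1)$. This labelling is proper, and it is perfectly equitable: every class has $8rt$ buckets.

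Now every centre's leaf buckets are pairwise distinct but all carry the same label. So, under any orientation, the distinct-label requirement forces every star-shaped chunk to be a single edge. That gives $s=|E(Q)|=2N$, not $O(N/r+r)$. Equitability controls only the global class sizes, not how labels are spread inside one neighbourhood, which is what a star-shaped edge-bucket needs.

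The fallback does not repair this either. A uniformly random labelling already gives a typical bucket a maximum label multiplicity of $\Theta(\log r/\log\log r)$ among its $4r$ neighbours (balls into bins). So ``local fixing'' would have to do essentially all the work. Moreover, with $L=\Delta(\mathcal B)+1$ labels, a proper labelling with bounded multiplicity in every neighbourhood need not exist at all: for frugal colourings with $\Delta+1$ colours, $\Theta(\log\Delta/\log\log\Delta)$ is the best possible guarantee. You also give no averaged argument.

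\textbf{The missing idea} is to organise edge-buckets in label space rather than around a central vertex-bucket.
\begin{itemize}
\item Form the multigraph $M$ on $[L]$ in which every edge of $Q$ becomes the edge (or loop) between the labels of its endpoint buckets.
\item Equitability is used only to say that each label class has $O(N/r^2+1)$ buckets. Hence each label is incident with $O(N/r+r)$ edges of $Q$.
\item So the line graph of $M$ has maximum degree $O(N/r+r)$. A greedy colouring of it splits $E(Q)$ into $O(N/r+r)$ classes, each a matching in $M$.
\item Cutting each class into chunks of at most $r$ edges gives at most $|E(Q)|/r+O(N/r+r)$ edge-buckets.
\item The distinct-label property is then automatic. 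Two edges in one chunk have disjoint label pairs. The two endpoint buckets of a non-internal edge have different labels because $\lambda$ is proper on $\mathcal B$.
\end{itemize}
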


\begin{proof}
	Partition $V(Q)$ arbitrarily into buckets of size at most $r$, and let $k=\ceil{N/r}$.
	Let $\mathcal B$ be the bucket graph: two buckets are adjacent if some edge of $Q$ has
	one endpoint in each. Since each bucket has at most $r$ vertices and $Q$ has maximum
	degree at most $4$, we have $\Delta(\mathcal B)\le 4r$. By the
	Hajnal--Szemer{\'e}di equitable coloring theorem~\cite{HS70}, $\mathcal B$
	has an equitable proper coloring with $L=4r+1$ colors; denote it by $\lambda\colon \{ B_1, \ldots, B_k \} \to [L]$.\footnote{An~equitable coloring is a~proper coloring such that the sizes of~any two color classes differ by~at~most one.}
	Each label is used by $O(k/L+1)=O(N/r^2+1)$ buckets.
	Note that this coloring can be found in polynomial time, see e.g.~\cite{KKMS10}.

	Create a multigraph $M$ on vertex set $[L]$. Each edge of $Q$ whose endpoint
	buckets have labels $a$ and $b$ contributes an edge $ab$ to $M$; an edge internal to
	one bucket contributes a loop. Since $\lambda$ is proper on $\mathcal B$, loops come
	only from edges internal to a bucket. For a fixed label $a$, the total number of
	edges incident with buckets of label $a$ is at most
	$4r\cdot O(N/r^2+1)=O(N/r+r)$.
	The line graph of this multigraph has maximum degree $O(N/r+r)$.
	A~greedy coloring of this line graph therefore splits the edges of~$Q$ into $O(N/r+r)$ classes, where each class is a~matching in~$M$.

	These matching classes may still contain more than~$r$ edges, so we split every class arbitrarily into chunks of size at most~$r$ and declare the chunks to be the edge-buckets.
	If the matching classes are $F_1,\ldots,F_t$, where $t=O(N/r+r)$, then the number of chunks is at most
	\[\sum_{i=1}^t \ceil{|F_i|/r}\le |E(Q)|/r+t=O(N/r)+O(N/r+r),\]
	since $|E(Q)|\le 2N$.
	Thus the total number of edge-buckets is $O(N/r+r)$, and each edge-bucket contains at most~$r$ edges.

	It remains to check the distinct-label property.
	Fix an~edge-bucket $E_j \subseteq E(Q)$.
	By construction, all edges in~$E_j$ came from one matching class in~$M$.
	For an~edge $e\in E_j$, let $\mu(e)$ be the edge of~$M$ produced by the labels of the endpoint buckets of~$e$.
	If $e$ is internal to a~single vertex-bucket $B$, then $\mu(e)$ is a~loop at $\lambda(B)$ and $e$ contributes only the participating bucket~$B$.
	If $e$ has endpoints in~two different buckets $B$ and $B'$, then $B$ and $B'$ are adjacent in~$\mathcal B$; since $\lambda$ is a~proper coloring of~$\mathcal B$, we have $\lambda(B)\ne\lambda(B')$.

	Now take two distinct edges $e,f\in E_j$.
	The edges $\mu(e)$ and $\mu(f)$ are distinct edges of one matching in~$M$, so they have no common incident label.
	Therefore no label used by a~vertex-bucket participating in~$e$ can also be used by a~vertex-bucket participating in~$f$.
	So, the labels on all vertex-buckets participating in~$E_j$ are pairwise distinct.
\end{proof}

For the rest of the construction, we pad each vertex-bucket $B_i$ with dummy vertices to make it of size exactly $r$.

\subsection{List-Homomorphism Instance}
\label{subsec:list-instance}

\paragraph{The target.}

Fix a~bucketing and labeling from~\Cref{lem:bucket-labels} for the rest of the section.
We define the list target graph $\Gamma_r$.
Its vertices fall into three types.
A~\emph{bucket-state} vertex stores one color string for one labeled vertex-bucket.
An~\emph{edge-state} vertex stores the color strings of the labels participating in~one edge-bucket, together with a~partial map recording which two bucket-coordinates each edge compares.
The \emph{test board} is a~family of test vertices; each one checks a~declared endpoint of an~edge, or rules out one monochromatic edge.

The bucket-state set is
\[
        S_B=\{U_{\ell,c}\colon\ell\in [L],\ c\in[3]^r\}.
\]
A~bucket-state represents a~coloring string for one bucket of label $\ell$.

The edge-state set is
\[
S_E := \left\{ W_{Z,c,\Pi}  \colon
\begin{array}{l}
	\varnothing\ne Z\subseteq[L],\ |Z|\le 2r,\\
	c=(c_\ell)_{\ell\in Z},\ c_\ell\in[3]^r\text{ for every }\ell\in Z,\\
\Pi \colon \operatorname{dom}(\Pi)\subseteq[r]\to Z\times[r]\times Z\times[r] \end{array}
\right\}.
\]
That is, $\Pi$ is a~partial function from $[r]$ to $Z \times [r] \times Z \times [r]$.
If $\Pi(p)=(a,x,b,y)$, then the edge numbered $p$ compares coordinate~$x$ of label~$a$ with coordinate~$y$ of label~$b$; the test board, defined next, verifies this claim and the resulting color condition.
We make $\Pi$ a~partial function so that an~edge-state compares only the edges of its own edge-bucket.

The test board is the union of the following families:
\[
\begin{aligned}
        T_A&=\{A_{p,a}\colon p\in[r],\ a\in[L]\},&
        T_B&=\{B_{p,b}\colon p\in[r],\ b\in[L]\},\\
        T_X&=\{X_{p,x}\colon p,x\in[r]\},&
        T_Y&=\{Y_{p,y}\colon p,y\in[r]\},\\
        T_C&=\{C_{p,\gamma}\colon p\in[r],\ \gamma\in[3]\}.
\end{aligned}
\]
Let $T=T_A\cup T_B\cup T_X\cup T_Y\cup T_C$.

The list target graph has vertex set
\[
        V(\Gamma_r)=S_B\sqcup S_E\sqcup T.
\]

Its edges are defined as follows.

\begin{itemize}
	\item State edges.
	For $U_{\ell,c}\in S_B$ and $W_{Z,d,\Pi}\in S_E$, add the edge
	\[
			U_{\ell,c}W_{Z,d,\Pi}
	\]
	if and only if
	\[
			\ell\in Z\quad\text{and}\quad d_\ell=c.
	\]

	\item Test edges.
	For $W=W_{Z,c,\Pi}\in S_E$ and $\Pi(p)=(a,x,b,y)$, add the metadata edges
	\[
			WA_{p,a},\qquad WX_{p,x},\qquad WB_{p,b},\qquad WY_{p,y}.
	\]
	For $\gamma\in[3]$, add the color-check edge $WC_{p,\gamma}$ if and only if
	\[
			\Pi(p)=(a,x,b,y)\quad\text{and}\quad \neg(c_a[x]=\gamma=c_b[y]).
	\]
	The expression is evaluated only when $\Pi(p)$ is defined, which already implies $a,b\in Z$.
	If $a=b$, the same string $c_a$ is used in both positions.
	See~\Cref{fig:edge-index-target}.
\end{itemize}

\begin{figure}[ht]
\centering
\begin{tikzpicture}[yscale=.8]
	\tikzstyle{a} = [rectangle, minimum width=13mm, rounded corners, draw]
	\node[a] (t) at (0, 0) {$W_{Z,c,\Pi}$};
	\node[a] (a) at (3, 2) {$A_{p,a}$};
	\node[a] (x) at (3, 1) {$X_{p,x}$};
	\node[a] (b) at (3, 0) {$B_{p,b}$};
	\node[a] (y) at (3, -1) {$Y_{p,y}$};
	\node[a] (c) at (3, -2) {$C_{p,\gamma}$};
	\node[a] (ua) at (-3, 0.5) {$U_{a, c_{a}}$};
	\node[a] (ub) at (-3, -0.5) {$U_{b,c_b}$};

	\node at (0, -2) {$\Pi(p)=(a,x,b,y)$};
	\node at (-1.5, 0.6) {$a \in Z$};

	\foreach \n in {a, x, b, y, c, ua, ub}
		\draw (t) -- (\n);
\end{tikzpicture}

\caption{One edge-state stores only the tests used by its edge-bucket.
The test forces the tuple $\Pi(p)$, and the color-check vertex $C_{p,\gamma}$ forbids the monochromatic case.}
\label{fig:edge-index-target}
\end{figure}

\begin{lemma}\label{lem:target-size}
\[
        |V(\Gamma_r)| = 2^{O(r^2)}.
\]
\end{lemma}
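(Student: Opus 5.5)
We bound the three parts of $V(\Gamma_r)=S_B\sqcup S_E\sqcup T$ separately, using only $L=4r+1=O(r)$ from \Cref{lem:bucket-labels}; the edge-states will dominate.

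\emph{Bucket-states and test board.} Directly from the definition, $|S_B|=L\cdot 3^r=O(r)\cdot 2^{O(r)}$. For the test board,
\[
|T|=2rL+2r^2+3r=O(r^2).
\]
Both are far below $2^{O(r^2)}$.

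\emph{Edge-states.} An edge-state $W_{Z,c,\Pi}$ is determined by three choices, which we count independently.
\begin{itemize}
\item The set $Z\subseteq[L]$: there are at most $2^L=2^{O(r)}$ choices.
\item The tuple $c=(c_\ell)_{\ell\in Z}$: since $|Z|\le 2r$, there are at most $(3^r)^{2r}=3^{2r^2}=2^{O(r^2)}$ choices. This is the dominant term.
\item The partial map $\Pi\colon[r]\rightharpoonup Z\times[r]\times Z\times[r]$: each $p\in[r]$ is either undefined or mapped to one of at most $(2r\cdot r)^2$ tuples. This gives at most $(4r^4+1)^r=2^{O(r\log r)}$ choices.
\end{itemize}
Multiplying,
\[
|S_E|\le 2^{O(r)}\cdot 2^{O(r^2)}\cdot 2^{O(r\log r)}=2^{O(r^2)}.
\]

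\emph{Conclusion.} Summing the three bounds gives $|V(\Gamma_r)|=2^{O(r^2)}$. The only step needing care is the count of $c$: the constraint $|Z|\le 2r$ is what keeps the exponent at $O(r\cdot r)$ rather than $O(rL)$. Here $rL$ would also be $O(r^2)$, so the bound holds either way. No real obstacle is expected.
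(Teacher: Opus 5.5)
Your proof is correct and follows the paper's argument. Both split $V(\Gamma_r)$ into $S_B$, $T$, and $S_E$, and observe that the colour strings give the dominant $(3^r)^{2r}=2^{O(r^2)}$ factor, while the choices of $Z$ and $\Pi$ give only $2^{O(r\log r)}$. The one cosmetic difference is that the paper writes $|S_E|$ as an explicit sum over $\mu=|Z|$ and the domain size of $\Pi$, whereas you bound each factor by its maximum and multiply; this is equally valid.
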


\begin{proof}
	We have
	\[
			|S_B|=L3^r=\Theta(r\,3^r)
	\]
	and
	\[
			|T|=2Lr+2r^2+3r=O(r^2).
	\]
	For the edge-states, for $\mu=|Z|\le 2r$, first choose~$Z$, then choose $\mu$ color strings in $[3]^r$, choose a~domain for~$\Pi$, and finally choose a~value in $Z\times[r]\times Z\times[r]$ for each edge index in that domain.
	Hence
	\[
	|S_E|
	\le
	\sum_{\mu=1}^{2r}
	\binom{L}{\mu}(3^r)^\mu
	\sum_{q=0}^{r}\binom rq(\mu^2r^2)^q.
	\]
	Since $L=\Theta(r)$ and $\mu\le 2r$, the logarithm of this expression is $O(r^2)$: the color-string factor contributes $O(r^2)$ and the remaining combinatorial factors contribute only $O(r\log r)$.
	Thus $|S_E|\le 2^{O(r^2)}$, and therefore $|V(\Gamma_r)| = 2^{O(r^2)}$.
\end{proof}

\paragraph{The source.}

The source graph records the bucket structure of~$Q$: it has a~bucket vertex for each vertex-bucket, to be sent to a~bucket-state, and an~edge-bucket vertex for each edge-bucket, to be sent to an~edge-state.
A~single copy of the test board is shared by all edge-buckets.

The graph $\Lambda_r$ contains:
\begin{itemize}
    \item a copy $T'$ of the entire test board $T$;
    \item one bucket vertex $u_i$ for each vertex-bucket $B_i$;
    \item one edge-bucket vertex $w_j$ for each edge-bucket $E_j$.
\end{itemize}

Add the following edges.
First, join $w_j$ to $u_i$ if and only if $B_i$ participates in $E_j$.
Second, index the edges in each edge-bucket by distinct values $p\in[|E_j|]\subseteq[r]$ and orient them arbitrarily.
If the edge with index $p$ is oriented from coordinate $x$ in bucket $B_\alpha$ to coordinate $y$ in bucket $B_\beta$, put $a=\lambda(B_\alpha)$ and $b=\lambda(B_\beta)$, and add the following edges:
\[
        w_jA'_{p,a},\quad w_jX'_{p,x},\quad w_jB'_{p,b},\quad w_jY'_{p,y},
\]
and the three edges $w_jC'_{p,\gamma}$, for $\gamma\in[3]$.
The same global copied test vertex may be used by many edge-buckets.

\begin{lemma}\label{lem:list-source-size}
\[|V(\Lambda_r)|=O(N/r+r^2).\]
\end{lemma}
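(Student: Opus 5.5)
The bound follows by counting the three kinds of vertices of $\Lambda_r$ separately and adding the counts, since $V(\Lambda_r)$ is the disjoint union of the copied test board $T'$, the bucket vertices $u_1,\ldots,u_k$, and the edge-bucket vertices $w_1,\ldots,w_s$. No structural argument is needed; each count comes from a bound already established.

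First, $T'$ is a copy of $T$, so $|T'|=|T|=2Lr+2r^2+3r$, as computed in the proof of \Cref{lem:target-size}. With $L=4r+1$ this is $O(r^2)$. Second, the bucket vertices correspond to the vertex-buckets of \Cref{lem:bucket-labels}. That proof chose an arbitrary partition into buckets of size at most $r$ with $k=\ceil{N/r}$. Padding with dummy vertices does not change the number of buckets, so $k=O(N/r)$ for $r\ge 2$ and $N$ large. Third, the edge-bucket vertices correspond to the edge-buckets, and \Cref{lem:bucket-labels} guarantees $s=O(N/r+r)$.

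Summing gives $|V(\Lambda_r)|=O(r^2)+O(N/r)+O(N/r+r)=O(N/r+r^2)$, since $r\le r^2$.

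There is no real obstacle. The only care needed is to recall that the test board is shared by all edge-buckets rather than copied once per edge-bucket. This sharing is what keeps its contribution at an additive $O(r^2)$ instead of $O(s\cdot r^2)$.
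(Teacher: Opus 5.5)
Your proposal is correct and follows the paper's proof: you count the $\lceil N/r\rceil$ bucket vertices, use the $O(N/r+r)$ edge-bucket vertices from \Cref{lem:bucket-labels}, and add the $O(r^2)$ vertices of the single shared copy of the test board. Your remark that the shared test board keeps its contribution additive is consistent with the construction.
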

\begin{proof}
	There are $\ceil{N/r}$ bucket vertices and $O(N/r+r)$ edge-bucket vertices by Lemma~\ref{lem:bucket-labels}.
	The copied test board has $O(r^2)$ vertices.
\end{proof}

\paragraph{Markers and lists.}
We now introduce markers, which are auxiliary names used only to define list constraints.
Each source and target vertex receives a~set of markers.
A~source vertex $v$ may be mapped precisely to those target vertices $x$ for~which $A_{\Lambda}(v)\subseteq A_{\Gamma}(x)$, where $A_{\Lambda}$ and $A_{\Gamma}$ denote the marker sets associated with the vertices of~$\Lambda$ and~$\Gamma$, respectively.
Throughout this construction, we write $\beta_L\coloneqq\left\lceil\log_2 L\right\rceil$ and $\beta_r\coloneqq\left\lceil\log_2 r\right\rceil$ for the label- and coordinate-index lengths.
The markers are:
\begin{itemize}
	    \item selector markers $\rho_U,\rho_W,\rho_A,\rho_B,\rho_X,\rho_Y,\rho_C$;
	    \item for every label bit $h\in[\beta_L]$ and bit value $b\in\{0,1\}$, markers
	    $\mathsf D_{h,b}$, $\mathsf A_{h,b}$, and $\mathsf B_{h,b}$;
	    \item for every coordinate bit $h\in[\beta_r]$ and bit value $b\in\{0,1\}$,
	    markers $\mathsf X_{h,b}$ and $\mathsf Y_{h,b}$;
	    \item for every edge-index bit $h\in[\beta_r]$ and bit value $b\in\{0,1\}$, a~marker
	    $\mathsf P_{h,b}$;
	    \item color markers $\mathsf C_1,\mathsf C_2,\mathsf C_3$.
\end{itemize}
Thus the number of markers is
\[
        M \coloneqq 7+6\beta_L+ 6\beta_r+3=O(\log r).
\]
For $m\in\{L,r\}$ and $z\in[m]$, let $\operatorname{bin}_m(z)$ be the fixed-length binary encoding of $z-1\in\{0,\ldots,m-1\}$ with $\left\lceil\log_2 m\right\rceil$ bits.
For a label $\ell\in[L]$, define
\[
\begin{aligned}
        \mathsf D(\ell)&\coloneqq\{\mathsf D_{h,\operatorname{bin}_L(\ell)_h}\colon h\in[\beta_L]\},\\
        \mathsf A(\ell)&\coloneqq\{\mathsf A_{h,\operatorname{bin}_L(\ell)_h}\colon h\in[\beta_L]\},\\
        \mathsf B(\ell)&\coloneqq\{\mathsf B_{h,\operatorname{bin}_L(\ell)_h}\colon h\in[\beta_L]\}.
\end{aligned}
\]
For a~coordinate $x\in[r]$, define
\[
\begin{aligned}
        \mathsf X(x)&\coloneqq\{\mathsf X_{h,\operatorname{bin}_r(x)_h}\colon h\in[\beta_r]\},\\
        \mathsf Y(x)&\coloneqq\{\mathsf Y_{h,\operatorname{bin}_r(x)_h}\colon h\in[\beta_r]\}.
\end{aligned}
\]
For an~edge index $p\in[r]$, define
\[
        \mathsf{Pcode}(p)\coloneqq\{\mathsf P_{h,\operatorname{bin}_r(p)_h}\colon h\in[\beta_r]\}.
\]

\paragraph{Target marker sets.}
The target marker sets are
\[
        A_\Gamma(U_{\ell,c})=\{\rho_U\}\cup \mathsf D(\ell),
\]
\[
        A_\Gamma(W_{Z,c,\Pi})=\{\rho_W\},
\]
and, for test vertices,
\[
        A_\Gamma(A_{p,a})=\{\rho_A\}\cup\mathsf{Pcode}(p)\cup\mathsf A(a),\qquad
        A_\Gamma(B_{p,b})=\{\rho_B\}\cup\mathsf{Pcode}(p)\cup\mathsf B(b),
\]
\[
        A_\Gamma(X_{p,x})=\{\rho_X\}\cup\mathsf{Pcode}(p)\cup\mathsf X(x),\qquad
        A_\Gamma(Y_{p,y})=\{\rho_Y\}\cup\mathsf{Pcode}(p)\cup\mathsf Y(y),
\]
\[
        A_\Gamma(C_{p,\gamma})=\{\rho_C,\mathsf C_\gamma\}\cup\mathsf{Pcode}(p).
\]

\paragraph{Source marker sets.}
The source marker sets are
\begin{align*}
		A_\Lambda(u_i)&=\{\rho_U\}\cup \mathsf D(\lambda(B_i)), \\
		A_\Lambda(w_j)&=\{\rho_W\},
\end{align*}
and each copied test vertex has the marker set of its target namesake, for example, $A_\Lambda(A'_{p,a})=\{\rho_A\}\cup\mathsf{Pcode}(p)\cup\mathsf A(a)$, and the same applies to $B'_{p,b}$, $X'_{p,x}$, $Y'_{p,y}$, and $C'_{p,\gamma}$.

The lists are defined by marker containment:
\[
        \mathcal L_r(v)=\{x\in V(\Gamma_r)\colon A_\Lambda(v)\subseteq A_\Gamma(x)\}.
\]
Thus each bucket vertex $u_i$ lists precisely to bucket-states of label $\lambda(B_i)$, each copied test vertex has a~singleton list, and each edge-bucket vertex lists to all edge-states.

\paragraph{Correctness.}

For completeness, a~proper $3$-coloring of~$Q$ is read off into the target: each bucket vertex goes to the bucket-state holding its true color string, and each edge-bucket vertex goes to the edge-state holding those strings and the map for its edges.

\begin{lemma}
If $Q$ is $3$-colorable, then $(\Lambda_r,\Gamma_r,\mathcal L_r)$ has a~list homomorphism.
\end{lemma}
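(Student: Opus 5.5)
Fix a proper $3$-coloring $\kappa\colon V(Q)\to[3]$ and extend it arbitrarily to the dummy vertices used to pad the buckets. For each bucket $B_i$ this gives a string $c^{(i)}\in[3]^r$. We define $\phi$ on $\Lambda_r$ in three steps.

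\emph{Copied test vertices.} Each copied test vertex goes to its namesake in $T$.

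\emph{Bucket vertices.} Set $\phi(u_i)=U_{\lambda(B_i),c^{(i)}}$.

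\emph{Edge-bucket vertices.} Consider an edge-bucket $E_j$. Let $Z_j$ be the set of labels of the vertex-buckets participating in $E_j$. By \Cref{lem:bucket-labels} these labels are pairwise distinct, so each $\ell\in Z_j$ names a unique participating bucket $B_{i(\ell)}$. Also $1\le|Z_j|\le 2|E_j|\le 2r$. Let $d_\ell=c^{(i(\ell))}$. Define $\Pi_j$ on $[|E_j|]$ by $\Pi_j(p)=(a,x,b,y)$, read off from the oriented edge with index $p$, exactly as in the construction of $\Lambda_r$. Then $W_{Z_j,d,\Pi_j}\in S_E$, and we set $\phi(w_j)$ equal to it.

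\emph{Lists.} The marker sets of $u_i$ and $U_{\lambda(B_i),c}$ coincide. The marker set of $w_j$ is $\{\rho_W\}$, which is contained in the marker set of every edge-state. Each copied test vertex has the same markers as its namesake. So every list constraint holds.

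\emph{Edges.} We check the edge types one at a time.

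An edge $w_ju_i$ exists only if $B_i$ participates in $E_j$. Then $\ell=\lambda(B_i)\in Z_j$, and by distinctness of labels $i(\ell)=i$. Hence $d_\ell=c^{(i)}$, and the state edge $U_{\ell,c^{(i)}}W_{Z_j,d,\Pi_j}$ is present.

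The metadata edges $w_jA'_{p,a}$, $w_jX'_{p,x}$, $w_jB'_{p,b}$ and $w_jY'_{p,y}$ map to target edges because $\Pi_j(p)=(a,x,b,y)$ was defined to match them.

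The color-check edges $w_jC'_{p,\gamma}$ are the only point that uses properness, and they are the step I expect to need the most care. Coordinate $x$ of $d_a$ is the color of one endpoint of edge $p$, and coordinate $y$ of $d_b$ is the color of the other. This holds because $a$ and $b$ identify the correct buckets uniquely, including the case $a=b$ for an edge internal to one bucket. Since $\kappa$ is proper, the two colors differ, so $d_a[x]=\gamma=d_b[y]$ fails for every $\gamma\in[3]$. Hence all three edges $WC_{p,\gamma}$ are present in $\Gamma_r$.

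No other edges of $\Lambda_r$ exist, since the test board copy $T'$ has no internal edges. Therefore $\phi$ is a list homomorphism.
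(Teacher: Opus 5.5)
Your proposal is correct and follows the paper's proof essentially step by step. It uses the same bucket-states read off from the coloring strings, the same edge-state $W_{Z_j,d,\Pi_j}$ built from the distinct participating labels, and the same edge-by-edge check, with properness of $\kappa$ used only for the color-check edges; you also verify explicitly that $1\le|Z_j|\le 2r$, so the chosen edge-state lies in $S_E$, and that $T'$ has no internal edges, both of which the paper leaves implicit.
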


\begin{proof}
	Let $\kappa\colon V(Q)\to[3]$ be a proper coloring. Map the copied test board $T'$ to the identically named tests in $T$.

	For a bucket $B_i$, let $c_i\in[3]^r$ be any string satisfying $c_i[v]=\kappa(v)$ for every vertex $v\in B_i$; fill unused coordinates arbitrarily.
	Set
	\[
			\phi(u_i)=U_{\lambda(B_i),c_i}.
	\]

	For an edge-bucket $E_j$, let
	\[I_j=\{i\colon B_i\text{ participates in }E_j\}\]
	and define
	\[Z_j=\{\lambda(B_i)\colon i\in I_j\}.\]
	The labels in $Z_j$ are distinct by Lemma~\ref{lem:bucket-labels}. Define strings
	$d_\ell$ for $\ell\in Z_j$ by $d_{\lambda(B_i)}=c_i$ for $i\in I_j$. This is
	well-defined because the labels $\lambda(B_i)$, $i\in I_j$, are pairwise distinct.

	Define a partial function $\Pi_j$ as follows.
	If the edge with index $p$ in $E_j$ is oriented from coordinate $x$ in bucket $B_\alpha$ to coordinate $y$ in bucket $B_\beta$, then set
	\[
			\Pi_j(p)=(\lambda(B_\alpha),x,\lambda(B_\beta),y).
	\]
	Let $d=(d_\ell)_{\ell\in Z_j}$.
	Set
	\[
			\phi(w_j)=W_{Z_j,d,\Pi_j}.
	\]

	The assigned images lie in the required lists by the marker definitions.
	If $w_ju_i$ is an edge of $\Lambda_r$, then $B_i$ participates in $E_j$, so the chosen edge-state explicitly contains the bucket-state assigned to $u_i$; hence the state edge is present in $\Gamma_r$.

	Finally consider the test edges from $w_j$ for the edge with index $p$, oriented from coordinate $x$ in bucket $B_\alpha$ to coordinate $y$ in bucket $B_\beta$.
	Put $a=\lambda(B_\alpha)$ and $b=\lambda(B_\beta)$.
	By construction, $\Pi_j(p)=(a,x,b,y)$, so the metadata edges to $A_{p,a}$, $X_{p,x}$, $B_{p,b}$, and $Y_{p,y}$ are present in $\Gamma_r$.
	Since $\kappa$ is proper, the two endpoint colors are not both $\gamma$ for any $\gamma\in[3]$, and therefore all color-check edges to $C_{p,\gamma}$ are present.
\end{proof}

For soundness, the lists force every source vertex into its intended type, the state edges make neighboring buckets agree on their shared color strings, and a~missing color-check edge would witness a~monochromatic edge of~$Q$, which a~list homomorphism cannot produce.

\begin{lemma}
If $(\Lambda_r,\Gamma_r,\mathcal L_r)$ has a~list homomorphism, then $Q$ is $3$-colorable.
\end{lemma}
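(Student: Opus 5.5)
Let $\phi$ be a list homomorphism from $\Lambda_r$ to $\Gamma_r$. I will read a coloring off the bucket-states and check every edge of $Q$ using the test board.

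\textbf{Step 1: the lists pin down vertex types.} Every copied test vertex has a singleton list. Indeed, the selector marker fixes the family, $\mathsf{Pcode}(p)$ fixes $p$, and the label, coordinate or color markers fix the remaining index, because the fixed-length binary codes are injective. Hence $\phi$ sends each copied test vertex to its namesake in $T$. Each $u_i$ is sent to some $U_{\lambda(B_i),c_i}$, since $\mathsf D(\ell)\subseteq\mathsf D(\ell')$ forces $\ell=\ell'$. Each $w_j$ is sent to some $W_{Z_j,d^j,\Pi_j}$. For every vertex $v\in B_i$, define $\kappa(v)=c_i[v]$, the coordinate of $v$ in the padded bucket.

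\textbf{Step 2: edge-states are consistent with bucket-states.} If $B_i$ participates in $E_j$, then $w_ju_i\in E(\Lambda_r)$. The state-edge rule then gives $\lambda(B_i)\in Z_j$ and $d^j_{\lambda(B_i)}=c_i$. By \Cref{lem:bucket-labels}, distinct participating buckets carry distinct labels. So each participating bucket's string is read correctly from slot $\lambda(B_i)$ of the edge-state.

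\textbf{Step 3: the edge tests certify every edge.} Take an edge of $Q$ with index $p$ in $E_j$, oriented from coordinate $x$ of $B_\alpha$ to coordinate $y$ of $B_\beta$, and set $a=\lambda(B_\alpha)$, $b=\lambda(B_\beta)$. The source contains the edge $w_jA'_{p,a}$, and $A'_{p,a}$ is mapped to $A_{p,a}$. Target edges of the form $WA_{p,\cdot}$ exist only when $\Pi(p)$ is defined, with first label equal to the index of the $A$-vertex. Therefore $\Pi_j(p)$ is defined and its first entry is $a$. The edges to $X'_{p,x}$, $B'_{p,b}$ and $Y'_{p,y}$ likewise give $\Pi_j(p)=(a,x,b,y)$. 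Each $W$ has exactly one metadata neighbor of each kind per $p$, so this tuple is forced.

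\textbf{Step 4: no monochromatic edge.} The source also contains all three edges $w_jC'_{p,\gamma}$. So for every $\gamma\in[3]$ we have $\neg\bigl(d^j_a[x]=\gamma=d^j_b[y]\bigr)$, that is, $d^j_a[x]\ne d^j_b[y]$. Combining with Step 2, this reads $\kappa(\text{tail})\ne\kappa(\text{head})$. This also covers internal edges, where $a=b$ and the same string is used twice. Hence $\kappa$ is a proper $3$-coloring of $Q$.

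The main obstacle is Step 3: justifying that $\Pi_j(p)$ is forced to match the actual edge rather than an arbitrary tuple. This rests on the metadata test vertices being fixed by singleton lists and on each $W$ having only the metadata edges for $\Pi(p)$. I will check these carefully against the edge definitions, including that $\Pi(p)$ undefined yields no $A_{p,\cdot}$ neighbor.
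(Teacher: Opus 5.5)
Your proposal is correct and follows essentially the same argument as the paper: the lists fix the type of every source vertex (with singleton lists on the copied test board), the state edges make each edge-state carry the true bucket strings, the metadata edges force $\Pi_j(p)$ to be the actual edge tuple, and the color-check edges then rule out a monochromatic edge. The only cosmetic difference is that you argue directly that the three edges $w_jC'_{p,\gamma}$ force $d^j_a[x]\ne d^j_b[y]$, whereas the paper assumes a monochromatic edge and derives a contradiction from the missing color-check edge.
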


\begin{proof}
Let $\phi\colon \Lambda_r\to \Gamma_r$ be a list homomorphism.

Because $\phi(u_i)\in\mathcal L_r(u_i)$, the marker definitions imply that $\phi(u_i)$ is a bucket-state with label $\lambda(B_i)$.
Thus
\[
        \phi(u_i)=U_{\lambda(B_i),c_i}
\]
for some $c_i\in[3]^r$.

Similarly, $\phi(w_j)\in\mathcal L_r(w_j)$ implies
\[
        \phi(w_j)=W_{Z_j,d,\Pi_j}
\]
for some edge-state.
Since $w_ju_i$ is an edge whenever $B_i$ participates in $E_j$, the definition of state edges forces
\[
        \lambda(B_i)\in Z_j
        \quad\text{and}\quad
        d_{\lambda(B_i)}=c_i
        \qquad\text{for every participating }B_i.
\]

Each copied test vertex has a singleton list determined by its selector and binary code.
For example, $\mathcal L_r(A'_{p,a})=\{A_{p,a}\}$: the selector chooses the $A$-family, and the two binary codes determine the edge index and label.
Thus $A'_{p,a}$ maps to $A_{p,a}$, and the same argument applies to the $B$, $X$, $Y$, and $C$ tests.

The strings $c_i$ define a coloring of all real vertices of $Q$ by assigning each $v\in B_i$ the color $c_i[v]$.
Suppose some edge is monochromatic.
Let this edge lie in $E_j$, and suppose, with the orientation used in the construction, that its endpoints are position $x$ in $B_\alpha$ and position $y$ in $B_\beta$.
Put
\[\gamma=c_\alpha[x]=c_\beta[y].\]
Let $p$ be the index of this edge.
The graph $\Lambda_r$ contains the metadata edges from $w_j$ to $A'_{p,\lambda(B_\alpha)}$, $X'_{p,x}$, $B'_{p,\lambda(B_\beta)}$, and $Y'_{p,y}$, and it also contains the color-check edge $w_jC'_{p,\gamma}$.
The image of $w_j$ is an edge-state whose strings on the labels $\lambda(B_\alpha)$ and $\lambda(B_\beta)$ are exactly $c_\alpha$ and $c_\beta$, by the state edges to $u_\alpha$ and $u_\beta$.
The metadata edges force $\Pi_j(p)=(\lambda(B_\alpha),x,\lambda(B_\beta),y)$.
Hence the condition $c_\alpha[x]=\gamma=c_\beta[y]$ causes the corresponding color-check edge to $C_{p,\gamma}$ to be absent in $\Gamma_r$.
This contradicts that $\phi$ is a homomorphism.
Therefore the induced coloring of $Q$ is proper.
\end{proof}

\begin{corollary}
For every input graph $Q$ of maximum degree at most $4$,
\[
        Q \text{ is } 3\text{-colorable}
        \quad\Longleftrightarrow\quad
        (\Lambda_r,\Gamma_r,\mathcal L_r)\text{ has a list homomorphism}.
\]
\end{corollary}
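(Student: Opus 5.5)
The corollary follows by putting together the two directions proved in the two preceding lemmas. For the forward implication, I will take a proper $3$-coloring $\kappa$ of~$Q$ and use the first lemma (completeness). It builds the list homomorphism explicitly. Each bucket vertex $u_i$ goes to the bucket-state $U_{\lambda(B_i),c_i}$ that encodes $\kappa$ on $B_i$. Each edge-bucket vertex $w_j$ goes to the edge-state $W_{Z_j,d,\Pi_j}$. Each copied test vertex goes to its namesake.

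For the backward implication, I will apply the second lemma (soundness) to an arbitrary list homomorphism $\phi$. Then I read off the coloring $v\mapsto c_i[v]$ for $v\in B_i$, which that lemma shows is proper.

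The only things to check are bookkeeping points.

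\begin{itemize}
\item Both lemmas are stated for the same instance $(\Lambda_r,\Gamma_r,\mathcal L_r)$. That instance was built from the fixed bucketing and labeling supplied by \Cref{lem:bucket-labels}, and the lemma applies to every $Q$ with $\Delta(Q)\le 4$ once $N$ is large enough; for small $N$, I will just take the bucketing to be trivial.
\item The padding of buckets with dummy vertices does not affect either direction. In completeness, the dummy coordinates are filled arbitrarily, and no edge of $Q$ touches a dummy vertex, so no color check ever involves one. In soundness, the coloring is only read off on real vertices.
\item In soundness, the distinct-label property from \Cref{lem:bucket-labels} is what lets the strings $d_{\lambda(B_\alpha)}$ and $d_{\lambda(B_\beta)}$ be identified with $c_\alpha$ and $c_\beta$ without ambiguity. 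This identification is also valid when $\alpha=\beta$, i.e. for an edge internal to one bucket, because then both positions read the same string $c_a$, matching the convention in the definition of the color-check edges.
\end{itemize}

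I see no real obstacle, since the corollary is the conjunction of the two lemmas. The most delicate point is making sure the edge-state chosen in completeness actually lies in $S_E$. This needs $|Z_j|\le 2r$, which holds because at most $r$ edges, each with two endpoint buckets, participate. It also needs $\operatorname{dom}(\Pi_j)=[|E_j|]\subseteq[r]$, which holds since $|E_j|\le r$.
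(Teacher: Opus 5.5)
Your proposal is correct and matches the paper, which obtains the corollary directly as the conjunction of the completeness and soundness lemmas; your bookkeeping checks (that $W_{Z_j,d,\Pi_j}$ lies in $S_E$, that dummy padding is harmless, and the internal-edge case $a=b$) are all sound. One small correction: the distinct-label property is actually used in completeness, where it makes $d_{\lambda(B_i)}=c_i$ well defined, whereas in soundness the state edges $w_ju_\alpha$ and $w_ju_\beta$ already force $d_{\lambda(B_\alpha)}=c_\alpha$ and $d_{\lambda(B_\beta)}=c_\beta$ whether or not the labels are distinct.
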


\subsection{Removing the Lists}
\label{subsec:remove-lists}

A~\emph{frame} is a~small rigid graph equipped with many pairwise disjoint groups of vertices, the \emph{ports}.
Its purpose is to simulate, inside ordinary graph homomorphism, the list constraints that the reduction needs: we attach a~source vertex to the ports that encode its intended type, and rigidity guarantees that these incidences survive every homomorphism.
Concretely, if a~source vertex is complete to a~chosen set of ports (that is, adjacent to every vertex of each of these ports), then the image of that vertex under any homomorphism must be complete to the corresponding target ports.

We build the frame in~two steps.
First we construct a~$3$-uniform hypergraph that admits no nontrivial automorphism yet contains $M$ disjoint perfect matchings into triples.

\begin{lemma}\label{lem:rigid-triple-system}
	For every $M\ge1$, there is a~$3$-uniform hypergraph $\mathcal A=(C,\mathcal E)$ and pairwise edge-disjoint subfamilies $\mathcal C_1,\ldots,\mathcal C_M\subseteq\mathcal E$ such that $|C|=q=O(M)$, $q\ge6$, $3\mid q$, every $\mathcal C_i$ is a~partition of $C$ into triples, $|\mathcal E|=O(M^2)$, and $\operatorname{Aut}(\mathcal A)$ is trivial.
\end{lemma}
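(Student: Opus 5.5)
We will build the hypergraph explicitly: $M$ pairwise edge-disjoint triple partitions of a ground set of size $q=O(M)$, together with a small asymmetric ``decoration'' that kills all automorphisms. The construction has three steps.

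\textbf{Step 1: the base partitions.} Take $q=3m$ with $m$ a prime (or any value) satisfying $m\ge \max(2M+5,3)$; by Bertrand's postulate we can choose $m=O(M)$. Identify $C$ with $\mathbb Z_m\times\{0,1,2\}$. For $t\in\mathbb Z_m$, define
\[
\mathcal C_t=\bigl\{\{(x,0),(x+t,1),(x+2t,2)\}\colon x\in\mathbb Z_m\bigr\}.
\]
Each $\mathcal C_t$ is a partition of $C$ into triples, since each layer is hit bijectively. Two triples from different classes $t\ne t'$ share the element $(x,0)$ only if they also differ in the element of layer~$1$. Hence they are distinct sets, and the classes are pairwise edge-disjoint. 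We use $t=1,\dots,M$. This gives $mM=O(M^2)$ edges.

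\textbf{Step 2: destroying the symmetry.} The family built so far has many automorphisms, for instance translations $x\mapsto x+s$. We therefore add $O(M^2)$ further triples, disjoint from all $\mathcal C_t$, that make the hypergraph rigid.

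The plan is to make the degree of each vertex, together with its layer, essentially an identifying invariant. For vertex $(x,i)$, add $x+(i)m$ ``tag'' triples that contain it. These tags must avoid coinciding with the partition triples. One way is to choose triples inside a single layer, which never occur in any $\mathcal C_t$. Concretely, for $(x,0)$ add the triples $\{(x,0),(y,0),(y',0)\}$ for a prescribed set of pairs. The catch is that these tags raise the degrees of other vertices as well. To control this, I would use a cleaner alternative: a rigid graph on each layer, lifted to triples.

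\emph{Alternative.} Fix an asymmetric graph $R$ on $\mathbb Z_m$ with $O(m^2)$ edges, for example a path $0-1-\dots-(m-1)$ plus the extra edge $\{0,2\}$ to break the reflection. Then:
\begin{itemize}
\item For each edge $\{x,y\}$ of $R$, add the layer-$0$ triple $\{(x,0),(y,0),(z,0)\}$, where $z$ is a canonical third vertex.
\item Add distinguishing triples $\{(0,0),(1,0),(0,1)\}$-style gadgets that mark the layers with different multiplicities, so that layers have distinct degree profiles.
\end{itemize}

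\textbf{Step 3: verifying rigidity.} We argue that any automorphism preserves each layer, using degree counts: layer-internal tags are placed only in layer~$0$, and layers $1$ and $2$ are distinguished by small asymmetric extra triples. Once layer~$0$ is fixed pointwise by the rigidity of the tag structure, the classes $\mathcal C_t$ fix the other layers. Indeed, if $(x,0)$ is fixed, consider the triples of $\mathcal C_1$ through $(x,0)$ together with the triples of other classes. The element $(x+1,1)$ is the unique layer-$1$ vertex lying in a triple with both $(x,0)$ and $(x+2,2)$ across classes. This pins down layers $1$ and $2$.

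\textbf{Main obstacle.} The hardest part is the bookkeeping in Step~3. We must show the decoration is genuinely asymmetric even though automorphisms may permute the classes $\mathcal C_t$ and mix tag triples with partition triples. I would handle this by first proving that the tag triples are exactly the triples lying inside a single layer. This gives a structural characterization that every automorphism must respect, and it reduces the problem to the rigidity of the layer-$0$ graph $R$.
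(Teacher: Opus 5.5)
\textbf{There is a genuine gap: the rigidity argument is missing, and the decoration you sketch fails as stated.} Your Step~1 is fine and is essentially the paper's base construction (cyclic shifts across three layers). But the real content of the lemma, that $\operatorname{Aut}(\mathcal A)$ is trivial, is not established. Concretely:
(i) Your example $R$ (the path $0-1-\dots-(m-1)$ plus $\{0,2\}$) is not asymmetric. Vertices $0$ and $1$ are both adjacent exactly to each other and to $2$, so the transposition $0\leftrightarrow 1$ is an automorphism.
(ii) Even for a genuinely asymmetric $R$, lifting each edge $\{x,y\}$ to a triple $\{x,y,z\}$ forgets which two vertices formed the edge. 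Distinct edges can even yield the same triple. So a hypergraph automorphism need not induce an automorphism of $R$, and the reduction ``to the rigidity of $R$'' is unjustified.
(iii) The characterization ``tag triples are exactly the single-layer triples'' is contradicted by your own cross-layer gadget $\{(0,0),(1,0),(0,1)\}$. Yet some such gadget is needed. With tags only in layer~$0$, every vertex of layers $1$ and $2$ has degree exactly $M$. For $M=1$, the map fixing layer~$0$ and swapping $(x+1,1)\leftrightarrow(x+2,2)$ is then a nontrivial automorphism.
(iv) The final pinning step is circular. To single out $(x+1,1)$ you use that $(x+2,2)$ is already fixed, i.e.\ that layer~$2$ is fixed.

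\textbf{The missing idea is the one you considered first and dropped: make all vertex degrees distinct.} Every vertex lies in exactly one triple of each $\mathcal C_i$, so its degree is $M$ plus its degree in the added family. It therefore suffices to add $O(q^2)$ extra triples, none equal to a partition triple, whose induced degrees are pairwise distinct. Any automorphism preserves degrees and is then the identity, with no layer bookkeeping at all.

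The side effect on other vertices' degrees, which made you abandon the idea, is controlled by a triangular scheme. The paper takes $s$ hub vertices $a_1,\dots,a_s$ in one layer with $\binom{s}{2}\ge q$, and lists the hub pairs lexicographically as $P_1,P_2,\dots$. It enumerates the remaining vertices as $v_1,\dots,v_\ell$ and adds $P_t\cup\{v_j\}$ for all $t\le j$. Then:
\begin{itemize}
\item each $v_j$ has added degree exactly $j$;
\item the hub degrees satisfy $d_1>\dots>d_s$ by the lexicographic order;
\item $d_s>\ell$ once $q$ is large enough (the paper takes $q\ge 30$ and uses the three pairs $\{a_i,a_s\}$, $i=1,2,3$).
\end{itemize}
These triples contain two vertices of the same layer, so they never coincide with partition triples. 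There are at most $\binom{q+1}{2}=O(M^2)$ of them.
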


\begin{proof}
	Let $N=\max\{M,10\}$ and put
	\[
			C=C_A\sqcup C_B\sqcup C_C, \text{ where } C_A=\{a_1,\ldots,a_N\},\quad C_B=\{b_1,\ldots,b_N\},\quad C_C=\{c_1,\ldots,c_N\}.
	\]
	Thus
	\[
			q=|C|=3N=O(M),\qquad q\ge 30,\qquad 3\mid q.
	\]

	For $i\in[M]$, set $r_i=i-1$ and define
	\[
			\mathcal C_i=\bigl\{\{a_x,b_{x+r_i},c_{x+2r_i}\}\colon x\in[N]\bigr\},
	\] where subscripts are read modulo $N$, with representatives in $[N]$.

	Each $\mathcal C_i$ is a~partition of $C$ into triples.
	The families $\mathcal C_1,\ldots,\mathcal C_M$ are pairwise edge-disjoint.
	Indeed, if
	\[
			\{a_x,b_{x+r_i},c_{x+2r_i}\}=\{a_y,b_{y+r_j},c_{y+2r_j}\},
	\]
	then the disjointness of $C_A,C_B,C_C$ gives $x=y$ and $r_i\equiv r_j\pmod N$.
	Since $0\le r_i,r_j\le M-1\le N-1$, we get $r_i=r_j$, and hence $i=j$.

	It remains to add marker edges that destroy all automorphisms.
	Let $s$ be the least integer such that $\binom{s}{2}\ge q$.
	Since $N\ge 10$, we have $\binom N2\ge 3N=q$, so $s\le N$.
	Set
	\[
			S=\{a_1,\ldots,a_s\}\subseteq C_A,\qquad \ell=|C\setminus S|=q-s.
	\]
	List the pairs of $S$ in lexicographic order as $P_1,P_2,\ldots,P_{\binom{s}{2}}$.
	Enumerate $C\setminus S$ as $\{v_1,\ldots,v_\ell\}$.
	For each $j\in[\ell]$, add the marker edges (since $\ell\le q\le\binom{s}{2}$, the first $\ell$ pairs are defined)
	\[
			P_t\cup\{v_j\},\qquad 1\le t\le j.
	\]
	Let $\mathcal R$ be the resulting marker family, and define
	\[
			\mathcal E=\mathcal R\sqcup \mathcal C_1\sqcup\cdots\sqcup\mathcal C_M.
	\]

	We now prove that marker degrees distinguish all vertices.
	For a~vertex $v_j\in C\setminus S$, we have $\deg_{\mathcal R}(v_j)=j$, so their degrees are precisely $[\ell]$.
	For others, write
	\[
			d_i=\deg_{\mathcal R}(a_i),\qquad 1\le i\le s.
	\]
	If $P_t$ contains $a_i$, then this pair contributes $\ell-t+1$ to $d_i$, because it appears with precisely the vertices $v_j$ satisfying $j\ge t$.

	Since $P_1, \ldots, P_{\binom{ s }{ 2 }}$ are in lexicographic order, $d_1>d_2>\cdots>d_s$.
	We also claim that every $d_i$ exceeds $\ell$.
	Since $q\ge 30$, the minimality of $s$ implies $s\ge 9$.
	If $s=9$, then $q\ge 30=4s-6$.
	If $s\ge 10$, then
	\[
			q\ge \binom{s-1}{2}+1\ge 4s-6.
	\]
	Thus $\ell=q-s\ge 3s-6$.
	The pairs $\{a_1,a_s\},\ \{a_2,a_s\},\ \{a_3,a_s\}$ occur in the lexicographic list at positions $s-1$, $2s-3$, and $3s-6$, respectively.
	All three are therefore used, and they all contain~$a_s$.
	Consequently
	\[
			d_s \ge \bigl(\ell-(s-1)+1\bigr)+\bigl(\ell-(2s-3)+1\bigr)+\bigl(\ell-(3s-6)+1\bigr)=3\ell-6s+13\ge \ell+1.
	\]
	Hence $d_1>d_2>\cdots>d_s>\ell$.
	All vertices of $C$ therefore have distinct marker degrees.

	Every vertex belongs to exactly one edge of each $\mathcal C_i$, so every vertex has total degree $M+\deg_{\mathcal R}(v)$ in $\mathcal A=(C,\mathcal E)$.
	The total degrees are distinct, and every automorphism preserves total degree.
	Thus every automorphism fixes every vertex, and $\operatorname{Aut}(\mathcal A)=\{1\}$.

	Finally,
	\[
			|\mathcal E|=MN+|\mathcal R|=MN+\sum_{j=1}^{\ell}j\le MN+\frac{q(q+1)}2=O(M^2).
	\]
\end{proof}

Now we turn this hypergraph skeleton into a~graph by blowing up each hyperedge into a~rigid graph gadget, which yields the frame together with its $M$ ports.
The gadget is a~copy of the fixed graph $J$ provided by the next lemma.

\begin{lemma}\label{lem:core}
There is a~connected, triangle-free graph $J$ with $\chi(J)=4$ that is a~core, that is, every endomorphism of $J$ is an~automorphism.
\end{lemma}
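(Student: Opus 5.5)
We take $J$ to be the Grötzsch graph $M(C_5)$, the Mycielskian of the $5$-cycle, and derive the core property from vertex-criticality. This graph has $11$ vertices: the cycle vertices $v_1,\dots,v_5$, the shadow vertices $u_1,\dots,u_5$, where $u_i$ is adjacent to the two cycle-neighbours of $v_i$, and a hub $z$ adjacent to all $u_i$. Three facts are needed.
\begin{enumerate}
\item $J$ is connected and triangle-free. This is immediate from the construction. The shadow vertices are pairwise nonadjacent, the hub sees only shadows, and a triangle $u_i v_j v_k$ would force a triangle $v_i v_j v_k$ in $C_5$.
\item $\chi(J)=4$. This is the standard Mycielski argument. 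If $J$ had a proper $3$-coloring, recolour each $v_i$ with the colour of $u_i$ whenever $v_i$ has the hub's colour. This yields a proper $2$-coloring of $C_5$, which is a contradiction. An explicit $4$-coloring gives the upper bound.
\item $J$ is $4$-vertex-critical, i.e.\ $\chi(J-x)\le 3$ for every vertex $x$.
\end{enumerate}

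For the third fact we use the rotational symmetry of $J$. The vertices fall into three orbits, namely the hub, the shadows and the cycle vertices. So it suffices to exhibit three explicit proper $3$-colorings: of $J-z$, of $J-u_1$ and of $J-v_1$. For $J-z$, colour each $u_i$ like $v_i$, using a proper $3$-coloring of $C_5$. For $J-v_1$, colour the path $v_2\dots v_5$ with two colours, give every $u_i$ a colour compatible with its cycle-neighbours while avoiding a third colour reserved for $z$, and then adjust. The case $J-u_1$ is handled by a similar short case check, or we can cite the known fact that the Grötzsch graph is $4$-critical. This is the only step that requires routine verification, and we expect it to be the main (minor) obstacle.

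Finally, the core property follows from criticality. Let $f\colon J\to J$ be an endomorphism and let $J'=J[f(V(J))]$. Then $f$ is a homomorphism $J\to J'$, so $\chi(J')\ge\chi(J)=4$. If $f$ were not surjective, $J'$ would be a subgraph of $J-x$ for some vertex $x$, and hence $\chi(J')\le 3$ by criticality, a contradiction. Therefore $f$ is a bijection on vertices. A vertex-bijective homomorphism maps $E(J)$ injectively into $E(J)$. Since $E(J)$ is finite, this map is onto, so $f^{-1}$ also preserves edges and $f$ is an automorphism.
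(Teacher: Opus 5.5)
Your proposal is correct and takes the same approach as the paper: both take $J$ to be the Gr\"otzsch graph and use criticality, so that a non-surjective endomorphism would give a $3$-coloring of $J$. You only need vertex-criticality, and you also spell out the edge-counting step showing that a vertex-bijective endomorphism of a finite graph is an automorphism, which the paper leaves implicit. One small caution: your recipe for $J-v_1$ fails as literally stated. If $v_2v_3v_4v_5$ is colored $1,2,1,2$ and $z$ gets color $3$, then $u_1$ sees colors $1,2$ on $v_2,v_5$ and has no color left. Giving $z$ color $1$ instead fixes it, with $u_1=u_2=u_4=3$ and $u_3,u_5\in\{2,3\}$; alternatively, simply cite $4$-criticality as the paper does.
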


\begin{figure}[ht]
\centering
\begin{tikzpicture}[
    x=1cm,
    y=1cm,
    label distance=3pt,
    vert/.style={circle, draw=black!60, fill=black!8, minimum size=5pt, inner sep=0pt},
    centervertex/.style={circle, draw=orange!80!black, fill=orange!18, minimum size=6pt, inner sep=0pt},
    edge/.style={line width=0.5pt, draw=black!65, line cap=round, line join=round},
    colorone/.style={draw=blue!70!black, fill=blue!18},
    colortwo/.style={draw=teal!70!black, fill=teal!18},
    colorthree/.style={draw=violet!75!black, fill=violet!16},
    colorfour/.style={draw=orange!85!black, fill=orange!22},
    vertexlabel/.style={font=\scriptsize, inner sep=1pt}
]
    \foreach \i/\ang in {1/90,2/162,3/234,4/306,5/18} {
        \coordinate (u\i) at (\ang:2.2cm);
        \coordinate (v\i) at (\ang:1.1cm);
    }
    \coordinate (w) at (0,0);
    \foreach \i/\j in {1/2,2/3,3/4,4/5,5/1} {\draw[edge] (u\i) -- (u\j);}
    \draw[edge] (v1)--(u5); \draw[edge] (v1)--(u2);
    \draw[edge] (v2)--(u1); \draw[edge] (v2)--(u3);
    \draw[edge] (v3)--(u2); \draw[edge] (v3)--(u4);
    \draw[edge] (v4)--(u3); \draw[edge] (v4)--(u5);
    \draw[edge] (v5)--(u4); \draw[edge] (v5)--(u1);
    \foreach \i in {1,...,5} {\draw[edge] (w) -- (v\i);}
    \foreach \i/\ang/\col in {1/90/colorone,2/162/colortwo,3/234/colorone,4/306/colortwo,5/18/colorthree} {
        \node[vert, \col, label={[vertexlabel]\ang:$u_{\i}$}] at (u\i) {};
        \node[vert, \col, label={[vertexlabel]\ang:$v_{\i}$}] at (v\i) {};
    }
    \node[centervertex, colorfour, label={[vertexlabel]270:$w$}] at (w) {};
\end{tikzpicture}
\caption{The Gr{\"o}tzsch graph.
	It is connected, triangle-free, $4$-chromatic, and every endomorphism is an~automorphism.}
\label{fig:grotzsch}
\end{figure}
\begin{proof}
	Let~$J$ be the Gr{\"o}tzsch graph~\cite{Grotzsch59}, depicted in~\Cref{fig:grotzsch}.
	It~is connected, triangle-free, and satisfies $\chi(J)=4$.
	Moreover, every proper subgraph of~$J$ is $3$-colorable.
	If an~endomorphism were not onto the whole vertex set, its image would be a~proper subgraph and hence $3$-colorable, giving a~$3$-coloring of~$J$, a~contradiction.
	If it is onto the whole vertex set, then, since $J$ is finite, it is bijective and therefore an~automorphism.
\end{proof}

We turn the construction in~\Cref{lem:rigid-triple-system} into a~graph frame.
The vertex set $C$ of the hypergraph becomes a~clique, whose vertices we call \emph{central}; being the unique maximum clique, it must be preserved by every homomorphism, so a~homomorphism induces a~permutation $\sigma$ of~$C$.
Each hyperedge is then realized by a~graph gadget, which is a~copy of a~fixed template, attached so that it can survive only if $\sigma$ maps hyperedges to hyperedges.

\begin{lemma}\label{lem:frame}
	For every integer $M\ge 1$, there~is a~graph $F_M$ with pairwise disjoint non-empty port sets $P_1,\ldots,P_M\subseteq V(F_M)$ such that the following hold.
	\begin{itemize}
		\item $|V(F_M)|=O(M^2)$ and $|P_i|=O(M)$ for every $i$.
		\item No vertex of $F_M$ is adjacent to every vertex of any port $P_i$.
		\item Let $B$ be any bipartite graph. Form a graph $X$ from the disjoint union
		of $F_M$ and $B$ by making each $b\in V(B)$ complete to a union of ports
		$\bigcup_{i\in A(b)}P_i$, where $A(b)\subseteq [M]$. Then every homomorphism
		$F_M\to X$ is an automorphism of $F_M$ and maps every port $P_i$ onto itself.
	\end{itemize}
\end{lemma}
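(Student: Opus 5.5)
We construct $F_M$ from the rigid triple system of \Cref{lem:rigid-triple-system}, $\mathcal A=(C,\mathcal E)$ with $|C|=q=O(M)$, $q\ge 6$. The vertices of $C$ form a clique $K_C$ of central vertices. For every hyperedge $e=\{x,y,z\}\in\mathcal E$ we add a fresh copy $J_e$ of the Gr\"otzsch graph $J$ from \Cref{lem:core}. We fix three vertices of $J$ that receive different colors in some proper $4$-coloring, say $u_1,u_2,u_5$. In $J_e$ these copies are identified with (or joined completely to the complements of) $x,y,z$ in a fixed canonical way. The canonical way is chosen so that $e$ is recorded as an unordered triple, e.g.\ we use the ordering of $C$ to decide which of $x,y,z$ plays which role. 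Marker hyperedges in $\mathcal R$ and matching hyperedges in the $\mathcal C_i$ use the same template.

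The port $P_i$ is taken to be a set of $O(M)$ non-central gadget vertices, one chosen vertex (say the image of $w$) from each $J_e$ with $e\in\mathcal C_i$. The subfamilies $\mathcal C_i$ are edge-disjoint, so the ports are disjoint and nonempty. Since $|\mathcal E|=O(M^2)$ and $J$ has $11$ vertices, we get $|V(F_M)|=O(M^2)$ and $|P_i|=|C|/3=O(M)$. The second item holds because the port vertices lie in distinct gadgets. A vertex adjacent to all of $P_i$ would have to be central. We arrange that each central vertex is adjacent to the $w$-copy of at most one gadget containing it and none of the others; for instance, we make the attachment vertices among $u_j$ only, so that $w$-copies have no central neighbours at all. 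Then the condition holds since $q\ge 6$ gives $|P_i|\ge 2$.

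For the rigidity item, let $\psi\colon F_M\to X$ be a homomorphism. The graph $X$ is obtained by attaching a bipartite $B$, and each $b$ is adjacent only to port vertices and to $B$. So any clique of $X$ through a vertex of $B$ has size at most $3$ (two vertices of $B$ plus a port vertex), because port vertices are pairwise nonadjacent. The gadgets are triangle-free, so $K_C$ is the unique clique of size $q\ge 6$ in $X$. Hence $\psi$ maps $K_C$ bijectively onto itself, inducing a permutation $\sigma$ of $C$. Next we show that each gadget $J_e$ is mapped into $F_M$ as a gadget $J_{e'}$ with $e'=\sigma(e)$. Since $\chi(J)=4$, the image of $J_e$ cannot lie in a $3$-colourable part of $X$ attached to it. $B$ together with the port vertices is $3$-colourable (bipartite plus an independent set). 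Any image mixing $B$ with gadgets must be analysed using the attachment of $J_e$ to central vertices, which are fixed up to $\sigma$. Using that $J$ is a core, we conclude that $\psi|_{J_e}$ is an isomorphism onto some $J_{e'}$ whose attachment triple is $\sigma(e)$. Thus $\sigma\in\operatorname{Aut}(\mathcal A)$, so $\sigma$ is the identity. Then each $J_e$ maps onto itself respecting its attachment points, hence identically, because $J$ has no nontrivial automorphism fixing three such vertices (we choose them so). Therefore $\psi$ is the identity on $F_M$ and fixes every port.

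The main obstacle is the step forcing each gadget image to be a whole gadget rather than folding into $B$, the ports, or across several gadgets. I would first try to exploit that $J$ is $4$-critical: every proper subgraph is $3$-colourable. Then I would show that the subgraph of $X$ induced on $V(B)\cup\bigcup_i P_i$, together with everything outside gadgets sharing the attachment triple, admits a $3$-colouring compatible with the fixed images of the attachment vertices. This forces the image to use a full copy of $J$ with the correct attachments. If this fails, the fallback is to strengthen the gadget, for instance by attaching each gadget vertex to the central vertices outside $e$. This makes gadget vertices adjacent to many central vertices, which $B$-vertices are not, and so pins the images directly.
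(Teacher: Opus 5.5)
\textbf{There is a genuine gap, and it is exactly the step you call the main obstacle.} For your primary construction the rigidity item is false.

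Whichever reading you take (identifying $u_1,u_2,u_5$ of $J_e$ with $x,y,z$, or joining only these three vertices to parts of $C$), the other eight vertices of $J_e$ have no central neighbours. So nothing stops them from being mapped into the clique $C$, which has $q\ge 6>4$ vertices. Here is a concrete fold:
\begin{itemize}
\item Take the proper $4$-colouring of $J$ shown in \Cref{fig:grotzsch}. It gives $u_1,u_2,u_5$ three distinct colours, and it stays proper if the identification adds the edge $u_2u_5$.
\item Map $C$ identically.
\item In each $J_e$, send the colour classes of $u_1,u_2,u_5$ to $x,y,z$ and the fourth class to some $c\in C\setminus e$.
\end{itemize}
This is a homomorphism $F_M\to F_M\subseteq X$ with image $C$. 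It is not an automorphism, and it does not preserve your $w$-ports.

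Your plan of $3$-colouring $V(B)\cup\bigcup_iP_i$ and using $4$-criticality of $J$ only rules out folding into $B$. It cannot rule out folding into $K_C$, which has far more than four colours. If instead each attachment vertex is joined to $C$ minus its own vertex, then $\{u_1\}\cup(C\setminus\{x\})$ is a second $q$-clique, so the uniqueness step fails as well. The sentence ``using that $J$ is a core, we conclude\dots'' therefore has no argument behind it.

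\textbf{The missing idea is to confine the possible central images of \emph{every} gadget vertex to a triangle.} This is your fallback, and it is exactly the paper's construction: make every vertex of $J_e$ adjacent to all of $C\setminus e$. The argument then runs as follows.
\begin{itemize}
\item A clique not contained in $C$ has at most $2+(q-3)=q-1$ vertices, so $\psi(C)=C$ and $\psi$ induces a permutation $\sigma$ of $C$.
\item Every vertex of $J_e$ must go to a common neighbour of $C\setminus\sigma(e)$. The only candidates are the central triple $\sigma(e)$ and, if $\sigma(e)\in\mathcal E$, the vertices of $J_{\sigma(e)}$. Vertices of $B$ have no central neighbours.
\item There are no edges between that triple and $J_{\sigma(e)}$. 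Since $J_e$ is connected and $J\not\to K_3$, the gadget lands in $J_{\sigma(e)}$.
\item Hence $\sigma\in\operatorname{Aut}(\mathcal A)$, so $\sigma$ is the identity. Each $J_e\to J_e$ is then an endomorphism of a core, hence an automorphism.
\end{itemize}

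If you carry this out, your single-$w$-copy ports still work, with two adjustments:
\begin{itemize}
\item No central vertex is complete to $P_i$, but now because $\mathcal C_i$ covers $C$. Your claim that $w$-copies have no central neighbours no longer holds.
\item Port preservation needs the extra fact that every automorphism of the Gr\"otzsch graph fixes $w$, its unique vertex of degree $5$.
\end{itemize}
The paper instead takes $P_i=\bigcup_{e\in\mathcal C_i}V(J_e)$, so port preservation is immediate from each gadget being mapped onto itself.
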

\begin{proof}
	Let $J$ be the graph from~\Cref{lem:core}, and denote $g=|V(J)|$.
	Apply Lemma~\ref{lem:rigid-triple-system} and let $\mathcal A=(C,\mathcal E)$ be the resulting $3$-uniform hypergraph, with $q=|C|$.
	Construct $F_M$ as follows. Make $C$ a clique. For every hyperedge
	$e\in\mathcal E$, add a fresh copy $J_e$ of $J$. There are no edges between
	distinct copies $J_e$ and $J_f$. For $x\in V(J_e)$ and $c\in C$, add the edge $xc$
	exactly when $c\notin e$. Thus every vertex of $J_e$ is adjacent to all central
	vertices except the three vertices of $e$.

	For $i\in[M]$, define
	\[
			P_i=\bigcup_{e\in\mathcal C_i}V(J_e).
	\]
	The ports are non-empty and pairwise disjoint because the families $\mathcal C_i$
	are pairwise edge-disjoint. Also
	\[
			|V(F_M)|=|C|+g|\mathcal E|=O(M^2)
	\]
	and, since each $\mathcal C_i$ partitions $C$ into triples,
	\[
			|P_i|=g|\mathcal C_i|=O(M).
	\]

	We next verify that no vertex of $F_M$ is complete to any port.
	Fix $i\in[M]$.
	For $c\in C$, the partition $\mathcal C_i$ contains some triple $e$ with
	$c\in e$, and every vertex of $J_e\subseteq P_i$ is non-adjacent to $c$.
	For $x\in V(J_f)$ for some $f\in\mathcal E$, since $q\ge 6$, $\mathcal C_i$ contains
	at least two triples. Choose $e\in\mathcal C_i$ with $e\ne f$. There are no edges
	between distinct gadget copies, so $x$ has no neighbors in $J_e\subseteq P_i$.

	It remains to prove the rigidity property. Let $B$ be bipartite, and form $X$ from
	the disjoint union of $F_M$ and $B$ by making every $b\in V(B)$ complete to the
	ports indexed by $A(b)\subseteq[M]$, with no other edges between $B$ and $F_M$.
	Let $\varphi\colon F_M\to X$ be a homomorphism.

	First, $C$ is the unique $q$-clique of $X$, since inside
	$F_M$, any clique not contained in $C$ uses vertices from at most one copy $J_e$.
	Because $J$ is triangle-free, it uses at most two vertices from $J_e$, and it can
	also use only central vertices from $C\setminus e$. Hence its size is at most
	\[
			2+|C\setminus e|=q-1.
	\]
	If a clique uses vertices of $B$, then it uses no central vertex, because no vertex
	of $B$ is adjacent to $C$. Moreover $B$ has clique number at most $2$, and the
	clique can use vertices from at most one copy $J_e$, again at most two of them.
	Thus such a clique has size at most $4<q$. Therefore $C$ is the unique $q$-clique
	of $X$.

	Since the image of the clique $C$ under $\varphi$ is a $q$-clique in the loopless
	graph $X$, we have $\varphi(C)=C$. Let $\sigma=\varphi|_C$, viewed as a permutation
	of $C$.

	Fix $e\in\mathcal E$. Every vertex of $J_e$ is adjacent to every central vertex in
	$C\setminus e$, so every image of a vertex of $J_e$ is adjacent to every vertex of
	\[
			\varphi(C\setminus e)=C\setminus\sigma(e).
	\]
	Now consider different possibilities for $\varphi$ on $J_e$.
	No vertex of $B$ qualifies, because vertices of $B$ have no neighbors in $C$.
	A~central vertex
	$c\in C$ qualifies exactly when $c\in\sigma(e)$, since the graph is loopless.
	A~gadget vertex in $J_f$ qualifies exactly when
	\[
			C\setminus\sigma(e)\subseteq C\setminus f,
	\]
	or equivalently $f\subseteq\sigma(e)$. Since both sets have size $3$, this means
	$f=\sigma(e)$.

	Consequently, $\varphi$ can map a~vertex of $J_e$ to either the part of central clique $K_3$ on $\sigma(e)$ or to $J_{\sigma(e)}$ (if hyperedge $\sigma(e)$ exists, i.e. $\sigma(e) \in \mathcal{E}$).
	Note that there are no edges between the central triple $\sigma(e)$	and $J_{\sigma(e)}$.

	The graph $J_e$ is connected, so its image lies in one connected component of this
	common-neighbor set. It cannot lie in the central $K_3$, because there is no
	homomorphism $J\to K_3$. Hence $\sigma(e)\in\mathcal E$ and
	\[
			\varphi(V(J_e))\subseteq V(J_{\sigma(e)}).
	\]

	Thus $\sigma(e)\in\mathcal E$ for every $e\in\mathcal E$. Since $\sigma$ is a
	permutation of $C$, this gives $\sigma(\mathcal E)=\mathcal E$, so $\sigma$ is an
	automorphism of the hypergraph $\mathcal A$, therefore
	$\sigma=\operatorname{id}_C$.
	It follows that $\varphi$ fixes $C$ pointwise and maps
	every copy $J_e$ isomorphically onto itself, since the map $J_e\to J_{\sigma(e)}$ is an~endomorphism and hence it is an~automorphism.

	Therefore $\varphi$ is an automorphism of $F_M$ whose image is contained in $F_M$.
	Finally, every port is a union of whole copies $J_e$,
	\[
			P_i=\bigcup_{e\in\mathcal C_i}V(J_e),
	\]
	and each such copy is mapped onto itself. Hence $\varphi(P_i)=P_i$ for every
	$i\in[M]$.
\end{proof}

For a vertex $z$ outside the frame and a set $A\subseteq [M]$, we say that $z$ has
anchor set $A$, meaning that $z$ is adjacent to every vertex in every port $P_i$ with
$i\in A$, and to no other vertex of the frame.

\begin{corollary}\label{lem:anchor}
Let $X$ be as in Lemma~\ref{lem:frame}.
Let $X'$ contain a copy $F'_M$ of $F_M$, and
let $\phi\colon X'\to X$ be a homomorphism. Then $\phi$ maps every port of $F'_M$ onto the
corresponding port of $F_M$. Hence, if a vertex $z\in V(X')\setminus V(F'_M)$ is
complete to the ports indexed by $A$, then $\phi(z)$ is complete to the same ports
in $X$.
\end{corollary}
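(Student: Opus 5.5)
The plan is to reduce the corollary to the rigidity clause of Lemma~\ref{lem:frame}, applied to the restriction of $\phi$ to the copy $F'_M$. Identify $F'_M$ with $F_M$ via the fixed isomorphism, so that its ports $P'_i$ correspond to $P_i$. The restriction $\phi|_{F'_M}$ is then a homomorphism $F_M\to X$. Here $X$ is exactly the graph described in the lemma: the disjoint union of $F_M$ with a bipartite $B$, where each $b\in V(B)$ is complete to the ports indexed by $A(b)$.

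By the third item of Lemma~\ref{lem:frame}, $\phi|_{F'_M}$ is an automorphism of $F_M$ that maps every port $P_i$ onto itself. So $\phi(P'_i)=P_i$ for every $i\in[M]$, which is the first claim. The only subtlety is that the lemma requires the full structure of $X$, in particular that $B$ is bipartite and attaches only through complete-to-port anchors. This is precisely the hypothesis ``Let $X$ be as in Lemma~\ref{lem:frame}'', so nothing beyond the lemma is needed.

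For the second claim, take $z\in V(X')\setminus V(F'_M)$ complete to $\bigcup_{i\in A}P'_i$, fix $i\in A$, and pick any $y\in P_i$. Since $\phi(P'_i)=P_i$ is onto, there is $y'\in P'_i$ with $\phi(y')=y$. Because $zy'$ is an edge of $X'$, the pair $\phi(z)y$ is an edge of $X$. As $y$ was arbitrary, $\phi(z)$ is complete to $P_i$ for every $i\in A$.

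The step needing care is surjectivity onto the ports. Knowing only $\phi(P'_i)\subseteq P_i$ would not suffice, and the lemma supplies the equality $\phi(P_i)=P_i$. One could additionally remark, via the second item of the lemma, that $\phi(z)\notin V(F_M)$, since no frame vertex is complete to a port; but this is not part of the claim, so I would at most mention it (and only when $A\neq\varnothing$).
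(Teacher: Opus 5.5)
Your proposal is correct and matches the paper's proof. You restrict $\phi$ to the frame copy and apply the rigidity clause of Lemma~\ref{lem:frame} to get $\phi(P'_i)=P_i$; you then use this surjectivity onto each port to transfer completeness from $z$ to $\phi(z)$, which is exactly the step the paper relies on.
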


\begin{proof}
By Lemma~\ref{lem:frame}, the restriction of $\phi$ to $F'_M$ is a port-preserving
isomorphism onto the target copy of $F_M$. Therefore every port is mapped onto its
corresponding port. If $z$ is complete to the ports indexed by $A$, then for every
$i\in A$ and every $p\in P_i$, the edge $zp$ of $X'$ maps to an edge between $\phi(z)$
and $\phi(p)$. Since $\phi(P_i)=P_i$, the vertex $\phi(z)$ is complete to $P_i$ for
every $i\in A$.
\end{proof}

\paragraph{Compiling the lists.}

We now convert the marked list-homomorphism instance into an~ordinary homomorphism instance.
The needed property for the reduction is that every list is defined by containment of a~non-empty marker set over the same $M=O(\log r)$ markers.

\begin{lemma}\label{lem:frame-compilation}
There are graphs $G_r$ and $H_r$ such that
\[
        (\Lambda_r,\Gamma_r,\mathcal L_r)\text{ has a list homomorphism}
        \quad\Longleftrightarrow\quad
        G_r\to H_r.
\]
Moreover, $|V(G_r)|=|V(\Lambda_r)|+O((\log r)^2)$ and $|V(H_r)|=|V(\Gamma_r)|+O((\log r)^2)$.
\end{lemma}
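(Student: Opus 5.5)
The plan is to compile the lists with the frame $F_M$ of \Cref{lem:frame}, where $M=O(\log r)$ is the number of markers and we identify the markers with $[M]$. We set $G_r$ to be the disjoint union of $\Lambda_r$ and a copy $F'_M$ of $F_M$, and make every $v\in V(\Lambda_r)$ complete to the ports $\bigcup_{i\in A_\Lambda(v)}P'_i$, so that $v$ has anchor set $A_\Lambda(v)$. Symmetrically, $H_r$ is the disjoint union of $\Gamma_r$ and $F_M$, with every $x\in V(\Gamma_r)$ having anchor set $A_\Gamma(x)$. The size bounds then follow at once from $|V(F_M)|=O(M^2)=O((\log r)^2)$.

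Before invoking \Cref{lem:frame} with $X=H_r$, I would check its hypothesis that the non-frame part of $H_r$ is bipartite. By the definition of the edges of $\Gamma_r$, every edge has exactly one endpoint in $S_E$: state edges join $S_B$ to $S_E$, and test edges join $S_E$ to $T$. So $(S_E,\ S_B\cup T)$ is a bipartition of $\Gamma_r$.

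Completeness is direct. Given a list homomorphism $\phi$, extend it by the identity on $F'_M\to F_M$. Edges inside $\Lambda_r$ map to edges of $\Gamma_r$, and edges inside the frame are preserved. For an edge $vp$ with $p\in P'_i$ and $i\in A_\Lambda(v)$, the list condition gives $i\in A_\Lambda(v)\subseteq A_\Gamma(\phi(v))$. Hence $\phi(v)$ is complete to $P_i$, and the image of $vp$ is an edge of $H_r$.

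For soundness, let $\psi\colon G_r\to H_r$ be a homomorphism, and proceed in three steps.
\begin{enumerate}
\item By \Cref{lem:frame} and \Cref{lem:anchor}, $\psi$ maps each port $P'_i$ onto $P_i$.
\item Fix $v\in V(\Lambda_r)$. Its marker set $A_\Lambda(v)$ is non-empty, since it always contains a selector marker. So $\psi(v)$ is complete to at least one port $P_i$. By the second item of \Cref{lem:frame}, no frame vertex is complete to any port, hence $\psi(v)\in V(\Gamma_r)$.
\item A vertex $x\in V(\Gamma_r)$ is adjacent to frame vertices only inside the ports indexed by $A_\Gamma(x)$, and the ports are non-empty and pairwise disjoint. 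Therefore completeness of $\psi(v)$ to $P_i$ forces $i\in A_\Gamma(\psi(v))$, so $A_\Lambda(v)\subseteq A_\Gamma(\psi(v))$, i.e.\ $\psi(v)\in\mathcal L_r(v)$.
\end{enumerate}
Since both endpoints of every edge of $\Lambda_r$ now land in $\Gamma_r$, and $H_r[V(\Gamma_r)]=\Gamma_r$, the restriction $\psi|_{\Lambda_r}$ is a list homomorphism.

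The main obstacle is the rigidity step: it needs both the bipartiteness of $\Gamma_r$ and the fact that no vertex, in particular no frame vertex, can absorb the image of a list-constrained vertex. Both reduce to checks already made above and to \Cref{lem:frame}.
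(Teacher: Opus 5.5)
Your proposal is correct and follows the paper's proof essentially step for step. It uses the same frame compilation of $\Lambda_r$ and $\Gamma_r$, the same bipartition $(S_E,\,S_B\cup T)$ to meet the hypothesis of \Cref{lem:frame}, and the same soundness argument via \Cref{lem:anchor}, non-empty marker sets, and the fact that no frame vertex is complete to a port. Your appeal to the non-emptiness and pairwise disjointness of the ports to get $i\in A_\Gamma(\psi(v))$ just spells out what the paper leaves as ``by construction of $H_r$''.
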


\begin{proof}
	Take the frame $F_M$ from \Cref{lem:frame}, and identify its ports with the $M$ markers.
	Construct $H_r$ from the disjoint union of $F_M$ and $\Gamma_r$ by making every target vertex $x\in V(\Gamma_r)$ complete to precisely the ports in $A_\Gamma(x)$.
	Construct $G_r$ from the disjoint union of a~copy $F'_M$ of the frame and $\Lambda_r$ by making every source vertex $v\in V(\Lambda_r)$ complete to precisely the ports in $A_\Lambda(v)$.
	No other edges are added.
	The graph $\Gamma_r$ is bipartite with sides $S_E$ and $S_B\cup T$: all state and testing edges run between these two sides, and there are no edges inside $S_B$, inside $S_E$, inside $T$, or between $S_B$ and $T$.
	Thus the target has the form required by \Cref{lem:frame,lem:anchor}.

	If $\psi\colon\Lambda_r\to \Gamma_r$ is a~list homomorphism, extend it by any fixed isomorphism $F'_M\to F_M$.
	The edges of $\Lambda_r$ are preserved because $\psi$ is a~homomorphism.
	For every source vertex $v\in V(\Lambda_r)$, the list condition gives $A_\Lambda(v)\subseteq A_\Gamma(\psi(v))$, so all frame edges incident with $v$ are also preserved.
	Thus the extension is a~homomorphism $G_r\to H_r$.

	Conversely, let $\phi\colon G_r\to H_r$ be a~homomorphism.
	By \Cref{lem:anchor}, the copied frame maps portwise to the target frame.
	Every marker set $A_\Lambda(v)$ is non-empty, and no frame vertex is complete to any port, so $\phi(v)$ cannot be a~frame vertex for $v\in V(\Lambda_r)$.
	Hence $\phi(v)\in V(\Gamma_r)$.
	Because $v$ is complete to every port in $A_\Lambda(v)$ and the frame ports are preserved, $\phi(v)$ is complete to every corresponding target port.
	By construction of $H_r$, this means $A_\Lambda(v)\subseteq A_\Gamma(\phi(v))$, so $\phi(v)\in\mathcal L_r(v)$.
	Finally, every edge of $\Lambda_r$ maps to an edge of $\Gamma_r$, since the only non-frame edges of $H_r$ inside $V(\Gamma_r)$ are the edges of $\Gamma_r$.
	Thus $\phi|_{V(\Lambda_r)}$ is a~list homomorphism.

	The size bounds follow from $|V(F_M)|=O(M^2)=O((\log r)^2)$.
\end{proof}

\begin{corollary}
For every input graph $Q$ of maximum degree at most $4$,
\[
        Q \text{ is } 3\text{-colorable}
        \quad\Longleftrightarrow\quad
        G_r\to H_r .
\]
\end{corollary}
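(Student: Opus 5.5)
This corollary follows by chaining the two equivalences already established, so the plan is a short composition.

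First, by the earlier corollary of \Cref{subsec:list-instance}, for every $Q$ with $\Delta(Q)\le 4$, $Q$ is $3$-colorable if and only if the instance $(\Lambda_r,\Gamma_r,\mathcal L_r)$ built from the bucketing of \Cref{lem:bucket-labels} has a list homomorphism. That corollary combines the completeness lemma (read off the true color strings and the maps $\Pi_j$) with the soundness lemma (a missing color-check edge $W C_{p,\gamma}$ would certify a monochromatic edge).

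Second, \Cref{lem:frame-compilation} states that $(\Lambda_r,\Gamma_r,\mathcal L_r)$ has a list homomorphism if and only if $G_r\to H_r$. The graphs $G_r$ and $H_r$ are obtained by attaching the rigid frame $F_M$ with $M=O(\log r)$ ports, and every source vertex is joined to the ports indexed by its nonempty marker set.

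Composing the two biconditionals gives $Q$ is $3$-colorable $\iff G_r\to H_r$. The only thing to check is that the hypotheses of \Cref{lem:frame-compilation} hold for the specific instance built from $Q$. This is the one point I would treat as the main step, though it is routine. Three conditions must hold.
\begin{itemize}
\item Every list must be given by marker containment, which is true by the definition of $\mathcal L_r$.
\item Every source marker set must be nonempty. Each one contains a selector $\rho_\ast$.
\item $\Gamma_r$ must be bipartite, with sides $S_E$ and $S_B\cup T$, so that \Cref{lem:frame} and \Cref{lem:anchor} apply to $H_r$.
\end{itemize}
All three were already verified inside the proof of \Cref{lem:frame-compilation}, so no new argument is needed beyond citing these results in order.
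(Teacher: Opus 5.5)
Your proposal is correct and follows the paper's route: the corollary is just the composition of the list-homomorphism equivalence at the end of \Cref{subsec:list-instance} with \Cref{lem:frame-compilation}. The paper states it without further proof. The checks you list, namely bipartiteness of $\Gamma_r$, nonempty marker sets, and lists defined by marker containment, are already carried out inside the proof of \Cref{lem:frame-compilation}, as you note.
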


\begin{lemma}
\[|V(G_r)|=O(N/r+r^2).\]
\end{lemma}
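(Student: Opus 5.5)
This bound follows by combining two earlier results. By \Cref{lem:frame-compilation}, $G_r$ consists of $\Lambda_r$ together with a copy $F'_M$ of the frame. Hence
\[
|V(G_r)|=|V(\Lambda_r)|+|V(F_M)|=|V(\Lambda_r)|+O(M^2).
\]
The marker count is $M=7+6\beta_L+6\beta_r+3=O(\log r)$, because $L=4r+1$. The frame from \Cref{lem:frame} has $O(M^2)$ vertices, and this comes from \Cref{lem:rigid-triple-system}: there are $q=O(M)$ central vertices and $O(M^2)$ hyperedges, each blown up into a copy of the constant-size Gr\"otzsch graph. So the frame contributes only $O((\log r)^2)$ vertices.

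Next I use \Cref{lem:list-source-size}, which gives $|V(\Lambda_r)|=O(N/r+r^2)$. To recall where this comes from: there are $\lceil N/r\rceil$ bucket vertices $u_i$. There are $s=O(N/r+r)$ edge-bucket vertices $w_j$, by \Cref{lem:bucket-labels}. Finally, the single shared copy $T'$ of the test board has $|T|=2Lr+2r^2+3r=O(r^2)$ vertices.

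Adding the two contributions gives
\[
|V(G_r)|=O(N/r+r^2)+O((\log r)^2)=O(N/r+r^2),
\]
since $(\log r)^2=O(r^2)$.

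There is no real obstacle here. The only point to check is that the frame is attached once and not once per source vertex. This holds by construction: every source vertex is joined to the same copy $F'_M$ through its anchor set, so the frame's vertex count is added a single time.
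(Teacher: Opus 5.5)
Your proof is correct and matches the paper's argument. The paper also combines \Cref{lem:list-source-size} with the frame bound $|V(F_M)|=O((\log r)^2)=O(r^2)$.
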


\begin{proof}
This follows from \Cref{lem:list-source-size} and $|V(F_M)|=O((\log r)^2)=O(r^2)$.
\end{proof}

\begin{lemma}
\[
        |V(H_r)|\le 2^{O(r^2)}.
\]
\end{lemma}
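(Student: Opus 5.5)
The bound follows by combining \Cref{lem:frame-compilation} with \Cref{lem:target-size}. By \Cref{lem:frame-compilation}, $|V(H_r)|=|V(\Gamma_r)|+O((\log r)^2)$. Here the additive term is exactly the vertex count $|V(F_M)|=O(M^2)$ of the frame from \Cref{lem:frame}, and the frame is built with $M=7+6\beta_L+6\beta_r+3=O(\log r)$ markers. So it suffices to bound $|V(\Gamma_r)|$ and absorb the frame.

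By \Cref{lem:target-size}, $|V(\Gamma_r)|=2^{O(r^2)}$. I will briefly recall why this count is not disturbed, since it is the only substantive step. The set $S_B$ contributes $L3^r=2^{O(r)}$ vertices and the test board $T$ contributes $O(r^2)$. For $S_E$, we sum over $\mu=|Z|\le 2r$: the color strings give $(3^r)^{\mu}\le 3^{2r^2}$, and choosing $Z$ and the partial map $\Pi$ gives at most $2^{L}\cdot 2^r\cdot((2r)^2r^2)^r=2^{O(r\log r)}$. Multiplying by the $2r$ choices of $\mu$ keeps the total at $2^{O(r^2)}$.

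It remains to add the two contributions: $|V(H_r)|\le 2^{O(r^2)}+O((\log r)^2)\le 2^{O(r^2)}$, since $r\ge2$ makes the polylogarithmic frame term negligible. No step is a real obstacle. The one point to check is that the frame size depends only on $M=O(\log r)$, and not on $N$ or on $|V(\Gamma_r)|$, so the bound is uniform in $N$, as \Cref{thm:main-lb-formal} requires.
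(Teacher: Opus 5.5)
Your proof is correct and matches the paper's argument: \Cref{lem:frame-compilation} adds only the $O((\log r)^2)$ frame vertices to $|V(\Gamma_r)|$, and \Cref{lem:target-size} gives $|V(\Gamma_r)|=2^{O(r^2)}$. Your recount of $|S_E|$ is accurate but not needed, since \Cref{lem:target-size} already provides that bound.
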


\begin{proof}
By \Cref{lem:target-size} and \Cref{lem:frame-compilation},
\[
        |V(H_r)|=|V(\Gamma_r)|+O((\log r)^2)\le 2^{O(r^2)}.
\]
\end{proof}

\subsection{Finishing the Lower Bound}
\label{subsec:parameters-eth}

\paragraph{Target degeneracy.}
Recall that degeneracy is the least~$d$ for which the vertices can be ordered so that each has at most~$d$ neighbors later in~the order.
We eliminate the edge-states first: although there may be $2^{O(r^2)}$ of them, each sees only the bucket-states of the labels it stores and the tests for its own edge indices, so it has degree $O(r)$ (\Cref{lem:edge-state-degree}).
Once the edge-states are gone, every remaining non-frame vertex sees only its $O((\log r)^2)$ frame anchors (\Cref{lem:residual}), and the frame itself has only $O((\log r)^2)$ vertices.
The whole order therefore witnesses degeneracy $O(r)$.
This upper bound is the only degeneracy estimate used in~the lower-bound proof.

\begin{lemma}\label{lem:edge-state-degree}
Every edge-state has degree $O(r)$ in $H_r$.
\end{lemma}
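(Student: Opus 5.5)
We count the neighbours of a fixed edge-state $W=W_{Z,c,\Pi}$ in $H_r$ by type, using that $H_r$ is obtained from the disjoint union of $F_M$ and $\Gamma_r$ by adding only anchor edges (\Cref{lem:frame-compilation}). The neighbourhood of $W$ therefore splits into three parts: bucket-states, test vertices, and frame vertices. There are no edges inside $S_E$, since $\Gamma_r$ is bipartite with sides $S_E$ and $S_B\cup T$.

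\emph{Bucket-state neighbours.} By the definition of state edges, $U_{\ell,c'}$ is adjacent to $W$ only if $\ell\in Z$ and $c'=c_\ell$. So for each label $\ell\in Z$ there is exactly one neighbouring bucket-state, namely $U_{\ell,c_\ell}$. This contributes $|Z|\le 2r$ neighbours.

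\emph{Test neighbours.} All test edges at $W$ are indexed by some $p\in\operatorname{dom}(\Pi)\subseteq[r]$. For each such $p$, with $\Pi(p)=(a,x,b,y)$, $W$ receives exactly four metadata neighbours, $A_{p,a},X_{p,x},B_{p,b},Y_{p,y}$, and at most three colour-check neighbours $C_{p,\gamma}$. No test vertex $A_{p,a'}$ with $a'\ne a$ is adjacent to $W$, because the metadata edges are determined by the single value $\Pi(p)$. Hence this part contributes at most $7r$ neighbours.

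\emph{Frame neighbours.} Here $A_\Gamma(W)=\{\rho_W\}$, so $W$ is complete to the single port $P_{\rho_W}$ and adjacent to nothing else in the frame. By \Cref{lem:frame}, this port has size $O(M)=O(\log r)$.

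Summing the three parts gives $\deg_{H_r}(W)\le 2r+7r+O(\log r)=O(r)$. The only step needing care is the test count: we must confirm that the $A,B,X,Y$ neighbours are pinned to one per family for each $p$, rather than ranging over all labels or coordinates, and this follows directly from $\Pi$ being a function. Otherwise the argument is direct bookkeeping.
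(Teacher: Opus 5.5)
Your proof is correct and follows the paper's argument essentially line for line: the same three-way split into bucket-state neighbours (at most $|Z|\le 2r$), test neighbours (at most $7|\operatorname{dom}(\Pi)|\le 7r$), and frame neighbours confined to the single port for $\rho_W$ of size $O(\log r)$. Your explicit remark that there are no edges inside $S_E$ is a step the paper leaves implicit.
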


\begin{proof}
Let $W=W_{Z,c,\Pi}\in S_E$.
Its neighbors in $S_B$ are exactly the bucket-states $U_{\ell,c_\ell}$ with $\ell\in Z$.
Hence
\[
        |N(W)\cap S_B|\le |Z|\le 2r.
\]
For every edge index in $\operatorname{dom}(\Pi)$, the edge-state has at most four metadata neighbors and at most three color-check neighbors.
Therefore
\[
        |N(W)\cap T|\le 7|\operatorname{dom}(\Pi)|\le 7r.
\]
Its frame neighbors lie only in the port $\rho_W$.
Since that port has size $O(\log r)$,
\[
        |N(W)\cap V(F_M)|=O(\log r).
\]
Thus $\deg_{H_r}(W)=O(r)$.
\end{proof}

\begin{lemma}\label{lem:residual}
In $H_r-S_E$, every vertex of $S_B\cup T$ has degree $O((\log r)^2)$.
\end{lemma}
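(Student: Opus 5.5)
The plan is to read off the neighborhood of each vertex of $S_B\cup T$ in $H_r-S_E$ directly from the construction in \Cref{lem:frame-compilation}. Recall that $H_r$ is obtained from the disjoint union of $F_M$ and $\Gamma_r$ by adding, for each $x\in V(\Gamma_r)$, all edges from $x$ to the ports indexed by $A_\Gamma(x)$. No other edges are added.

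First, I will show that a vertex $x\in S_B\cup T$ has no neighbors in $V(\Gamma_r)\setminus S_E$. As noted in the proof of \Cref{lem:frame-compilation}, $\Gamma_r$ is bipartite with sides $S_E$ and $S_B\cup T$. Every state edge joins some $U_{\ell,c}$ to some $W_{Z,d,\Pi}$, and every test edge joins some $W$ to a test vertex. So once $S_E$ is deleted, $S_B\cup T$ is an independent set. Hence every neighbor of $x$ in $H_r-S_E$ is a frame vertex, and, by construction, precisely a vertex of $\bigcup_{i\in A_\Gamma(x)}P_i$.

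Second, I will bound this union. By \Cref{lem:frame}, each port has size $|P_i|=O(M)$, and $M=O(\log r)$. So the degree of $x$ is at most $|A_\Gamma(x)|\cdot O(\log r)$. I then bound $|A_\Gamma(x)|$ case by case from the listed marker sets:
\begin{itemize}
\item for a bucket-state, $|A_\Gamma(U_{\ell,c})|=1+\beta_L$;
\item for a test vertex in $T_A\cup T_B\cup T_X\cup T_Y$, the set has size at most $1+\beta_r+\max(\beta_L,\beta_r)$;
\item for $C_{p,\gamma}$, it has size $2+\beta_r$.
\end{itemize}
Since $L=4r+1$, all of these are $O(\log r)$. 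Even the crude bound $|A_\Gamma(x)|\le M$ suffices. Multiplying gives degree $O(\log r)\cdot O(\log r)=O((\log r)^2)$. Equivalently, the whole frame has only $|V(F_M)|=O(M^2)=O((\log r)^2)$ vertices, which gives the bound immediately.

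There is no real obstacle here. The only point needing care is confirming that no edges exist inside $S_B\cup T$, that is, between bucket-states and tests or among tests. That follows because every edge in the definition of $\Gamma_r$ has an edge-state endpoint, and the frame compilation adds only vertex-to-port edges.
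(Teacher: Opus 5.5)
Your proof is correct and follows the paper's argument. Once $S_E$ is removed, no state or test edges remain, so each vertex of $S_B\cup T$ is adjacent only to the $O(\log r)$ ports indexed by its marker set, each of size $O(\log r)$; your side remark that the whole frame has only $O(M^2)=O((\log r)^2)$ vertices gives the same bound even more directly.
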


\begin{proof}
After deleting $S_E$, there are no state edges and no testing edges left.
Every bucket-state or test vertex is adjacent only to $O(\log r)$ frame ports, and every port has size $O(\log r)$.
\end{proof}

\begin{theorem}\label{thm:target-deg}
\[
        \operatorname{degen}(H_r)=O(r).
\]
\end{theorem}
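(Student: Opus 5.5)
We exhibit an explicit elimination ordering of $V(H_r)$ in which every vertex has $O(r)$ neighbors appearing later; by the equivalent characterization of degeneracy in the preliminaries, this gives $\operatorname{degen}(H_r)=O(r)$. The ordering has three blocks: first all edge-states $S_E$ (in any internal order), then all vertices of $S_B\cup T$ (in any order), and finally all vertices of the frame $F_M$.

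For the first block we bound the number of later neighbors of each $W\in S_E$ by its full degree in $H_r$, which is $O(r)$ by \Cref{lem:edge-state-degree}. For the second block, every vertex of $S_B\cup T$ has, among the later vertices, only neighbors in $S_B\cup T\cup V(F_M)$. These are exactly its neighbors in $H_r-S_E$, so their number is $O((\log r)^2)$ by \Cref{lem:residual}. Here we also use the observation from the proof of \Cref{lem:frame-compilation} that $\Gamma_r$ is bipartite with sides $S_E$ and $S_B\cup T$, so there are no edges inside $S_B\cup T$. For the last block, each frame vertex has at most $|V(F_M)|-1=O(M^2)=O((\log r)^2)$ later neighbors, by \Cref{lem:frame} with $M=O(\log r)$.

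Taking the maximum over the three blocks gives $\max\{O(r),O((\log r)^2)\}=O(r)$ later neighbors for every vertex. There is no real obstacle here. The only care needed is in the first block: the bound on $W$ must include its frame neighbors (the port for $\rho_W$, of size $O(\log r)$), and \Cref{lem:edge-state-degree} already accounts for this.
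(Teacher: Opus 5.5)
Your proposal is correct and is essentially the paper's proof: the same elimination order (all of $S_E$, then $S_B\cup T$, then the frame), with the three blocks bounded by \Cref{lem:edge-state-degree}, \Cref{lem:residual}, and $|V(F_M)|=O((\log r)^2)$, respectively. Your remark that $\Gamma_r$ has no edges inside $S_B\cup T$ is what makes the later neighbors in the second block coincide with neighbors in $H_r-S_E$; the paper uses this implicitly.
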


\begin{proof}
Use the following elimination order.
First remove all vertices of $S_E$ in an arbitrary order.
By Lemma~\ref{lem:edge-state-degree}, each such vertex has at most $O(r)$ surviving neighbors when it is removed.

Then remove all vertices of $S_B\cup T$.
By Lemma~\ref{lem:residual}, after the edge-states are deleted their residual degree is $O((\log r)^2)=O(r)$.

	Finally remove the frame. The frame has $O((\log r)^2)=O(r)$ vertices, so every
	remaining frame vertex has residual degree at most $O(r)$. Hence every vertex is
	removed with at most $O(r)$ later neighbors.
\end{proof}

\paragraph{ETH consequence.}

\mainDegLBFormal*
\begin{proof}[Proof of \Cref{thm:main-lb-formal}]
	Fix a~nondecreasing unbounded function $D\colon\mathbb N\to\mathbb N$ satisfying $D(n)=O(n^{1/3})$ and $D(2n)=O(D(n))$.
	Suppose that an~algorithm with the claimed running time exists on every instance $(G,H)$ satisfying $\operatorname{degen}(H)\le D(n)$.
	Let $Q$ be an~$N$-vertex graph of maximum degree at most~$4$, and choose $m$ minimally so~that $mD(m)\ge N$.
	Monotonicity and $D(2m)=O(D(m))$ imply $mD(m)=\Theta(N)$.
	Let $\kappa\ge1$ be a~constant from~\Cref{thm:target-deg} such that $\operatorname{degen}(H_r)\le\kappa r$, and set $r=\lfloor D(m)/\kappa\rfloor$.
	Since $D$ is unbounded and $D(m)=O(m^{1/3})$, we have $r\to\infty$ and $r^3=O(m)=o(N)$, so the construction yields
	\[|V(G_r)|=O(N/D(m)+D(m)^2)=O(m),\qquad |V(H_r)|=2^{O(D(m)^2)}=2^{o(N)}.\]
	Choose a~constant integer $c$ such that $|V(G_r)|\le cm$ for all sufficiently large~$N$, and pad $G_r$ with isolated vertices to a~graph $\widehat G_r$ on $n=cm$ vertices.
	Since $H_r$ has an~edge, this padding preserves whether a~homomorphism exists, while monotonicity and repeated use of the doubling condition give
	\[\operatorname{degen}(H_r)\le\kappa r\le D(m)\le D(n),\qquad D(n)n=O(D(m)m)=O(N).\]
	Thus the assumed algorithm, together with the construction, solves $Q$ in~time $2^{o(N)}$, contradicting \Cref{thm:bdcolor}.
\end{proof}

\section{The Degeneracy Threshold from One to Two}
\label{sec:constant-degenerate-nlogn}

This section proves the super-exponential lower bound for $\problemabbr{HOM}(\cdot,\mathcal{H})$ even when $\mathcal{H}$ has constant degeneracy.
The reduction uses the~following bounded-alphabet $2$-\problemabbr{CSP} lower bound of~\cite{Marx10,KMPS24}.

\begin{theorem}[{\cite[Theorem 1.6]{KMPS24}}]
\label{thm:sparse-bounded-alphabet-csp}
	Let $f$ be a~nondecreasing, unbounded, and polynomial-time computable function.
	Unless $\cc{ETH}$ fails, there is no algorithm running in~time
	\[f(k-1)^{o(k/\log k)}\]
	that decides all $2$-\problemabbr{CSP} instances whose constraint graph is a~simple bipartite $3$-regular graph on~$k$ vertices and whose alphabet size is smaller than $f(k)$.
\end{theorem}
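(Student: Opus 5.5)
The plan is to follow Marx's embedding-based reduction, in the form used by~\cite{KMPS24}. We start from sparse $3$-$\problemabbr{SAT}$: by the Sparsification Lemma it has $N$ variables and $O(N)$ clauses and no $2^{o(N)}$ algorithm under~$\cc{ETH}$. We first turn it into a binary CSP $P$ with constant alphabet, $O(N)$ variables, and a constraint graph $R$ of bounded degree with $m=O(N)$ edges. Concretely, $P$ has one variable per clause ranging over its at most $7$ satisfying assignments, one variable per $3$-$\problemabbr{SAT}$ variable, and equality-type binary constraints between them. Vertices of large degree are split into paths of copies joined by equality constraints, which keeps the degree bounded and the edge count linear.

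Next we fix, for every $k$, a simple bipartite $3$-regular graph $X_k$ on $k$ vertices with treewidth $\Omega(k)$; bipartite cubic expanders (for instance random bipartite cubic graphs, or an explicit polynomial-time constructible family) will do. The central tool is Marx's embedding theorem: a graph $R$ with $m$ edges and no isolated vertices has an embedding into $X_k$ of depth $q=O(m\log k/\operatorname{tw}(X_k))=O(m\log k/k)$. Such an embedding maps each vertex of $R$ to a connected vertex set of $X_k$, so that adjacent vertices of $R$ get touching sets and every vertex of $X_k$ lies in at most $q$ images. It can be computed in time polynomial in $m$ and $k$.

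Given the embedding, we build the target CSP on $X_k$. A vertex $t$ of $X_k$ takes as value a tuple assigning a value of $P$ to each variable of $R$ whose image contains $t$, and the tuple must satisfy all constraints of $P$ among these variables. The alphabet size is then at most $c^{q}$ for a constant $c$. For every edge $tt'$ of $X_k$ we impose two conditions. First, the two tuples must agree on every $R$-variable whose image contains both $t$ and $t'$; by connectivity of the images, this forces a single global value for each variable of $R$. Second, every constraint of $P$ on an $R$-edge $uv$ whose images touch via $tt'$ must be satisfied. Since every $R$-edge has such a touching pair, correctness in both directions is routine.

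The remaining work is parameter bookkeeping, and this is the step I expect to be the main obstacle. Given $N$, we choose $k$ as the least value satisfying $k\log f(k-1)\ge C\,N\log k$ for a large constant $C$; this choice is computable because $f$ is polynomial-time computable and monotone. With this $k$ the alphabet $c^{q}=2^{O(N\log k/k)}$ must come out smaller than $f(k)$ (using $f(k-1)\le f(k)$ and padding the alphabet if needed), and an $f(k-1)^{o(k/\log k)}$ algorithm must run in time $2^{o(N)}$. The delicate points are the following. First, minimality of $k$ must be used to bound $k\log f(k-1)=O(N\log k)$ from above, which is needed for the running time, while the defining inequality bounds $q$ against $\log f(k)$ from below, which is needed for the alphabet. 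Second, the case $k>m$ (or, more generally, $k\gg N$) has to be handled, which I would do by padding $R$ with dummy equality paths. Third, the overhead of computing the embedding and the expander must stay at $2^{o(N)}$. I would first try to carry out this choice of $k$ directly and check that both inequalities survive the $\log k$ versus $\log N$ discrepancy.
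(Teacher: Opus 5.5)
The paper gives no proof of this statement; it is quoted from \cite{KMPS24}, so your proposal has to stand on its own. Your architecture is the standard Marx route, and its correctness part is fine: sparse $3$-SAT, then a bounded-degree binary CSP $P$ with constraint graph $R$, then an embedding of $R$ into a cubic bipartite expander $X_k$, then the tuple CSP on $X_k$. But the real content of this bounded-alphabet version is the bookkeeping, and there the plan fails as written.

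First, your choice of $k$ is off by one. If $k$ is least with $k\log f(k-1)\ge CN\log k$, minimality yields only $(k-1)\log f(k-2)<CN\log(k-1)$. This says nothing about $f(k-1)$: a nondecreasing $f$ may jump by an arbitrary amount between $k-2$ and $k-1$. Then $f(k-1)^{o(k/\log k)}$ need not be $2^{o(N)}$, and no contradiction with \cc{ETH} arises. The defining inequality must involve $f(k)$ instead. Take the least $k$ with $k\log f(k)\ge CN\log k$. The defining inequality gives $N\log k/k\le\log f(k)/C$, so an embedding of depth $O(1+m\log k/k)$ yields an alphabet below $f(k)$. Minimality at $k-1$ gives $(k-1)\log f(k-1)<CN\log(k-1)$, hence $f(k-1)^{o(k/\log k)}=2^{o(N)}$. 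This asymmetry is exactly why the statement pairs $f(k)$ with $f(k-1)$.

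You also ignore that $X_k$ exists only for even $k$. If the minimal $k$ is odd, passing to $k+1$ loses all control of $f(k)$ in the running time. Passing to $k-1$ loses the lower bound on $f(k-1)$ needed for the alphabet. So parity is not a rounding detail and has to be addressed explicitly.

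Second, the case $k>m$ is not a corner case, and padding does not handle it. Since $k\ge CN\log k/\log f(k)$, one has $k>m=\Theta(N)$ whenever $f(k)$ is below a suitable fixed power of $k$. For slowly growing $f$ such as $f(k)=\log k$, even $k\gg m$. In that regime a host vertex may carry only about $\log f(k)$ genuine variables of $R$. After padding $R$ to roughly $k$ edges, the embedding theorem bounds only the total depth, now $O(\log k)$. Nothing prevents $\Theta(\log k)$ genuine variables from sharing a host vertex, so the only alphabet bound you obtain is $k^{O(1)}$, which is not below $f(k)=\log k$.

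What the argument actually needs is an embedding of the unpadded $R$ into $X_k$ with the following properties:
\begin{itemize}
\item depth $O(1+m\log k/k)$ for every $m$, including the range $k/\log k\ll m<k$ (constant depth is affordable because $f$ is unbounded);
\item computable in time $2^{o(N)}$, even though $X_k$ grows with $N$ and may be much larger than $R$.
\end{itemize}
Marx's embedding theorem does not provide this off the shelf. It is set up for a fixed host with $m$ large relative to it, and its algorithm has an unspecified dependence on the host. So your assertion that the embedding is computable in time polynomial in $m$ and $k$ is also unjustified. Such an embedding should be obtainable for explicit bounded-degree expanders, by interpolating between the minor-universality of expanders for $m=O(k/\log k)$ and Marx's bound for $m\ge k$. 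That is a separate argument, and it is missing from your proposal.
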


\begin{theorem}[Formal version of \Cref{thm:constant-deg-lb}]
	\label{thm:constant-deg-lb-formal}
	Let $f$ be a~nondecreasing, unbounded, and polynomial-time computable function.
	There is a~polynomial-time reduction\footnote{Here, polynomial time means polynomial in~the~combined size of~the~input and~the~output.} that maps every $2$-\problemabbr{CSP} instance covered by \Cref{thm:sparse-bounded-alphabet-csp}, whose constraint graph has~$k$ vertices, to~an~equivalent instance $(G,H)$ of~$\problemabbr{HOM}$ with $N=|V(G)|=\Theta(k)$ and
	\[\operatorname{degen}(H)\le2,\qquad |V(H)|+|E(H)|=O(kf(k)^2),\]
	for which no~algorithm with running time
	\[O^*\left(f(k-1)^{o(k/\log k)}\right)\]
	exists, unless $\cc{ETH}$ fails.

\end{theorem}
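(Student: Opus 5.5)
We reduce from the instances of \Cref{thm:sparse-bounded-alphabet-csp}. The plan is to first build a list-homomorphism instance in which both source and target are sparse. Then we remove the lists with a bounded-size rigid frame, in the spirit of \Cref{lem:frame-compilation}. Finally we subdivide every edge of both graphs into a path of length three, which makes the target $2$-degenerate.

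Let the CSP have constraint graph $K$, which is simple, bipartite and $3$-regular on $k$ vertices, with alphabet $[h]$ where $h<f(k)$. Fix a proper $4$-edge-coloring of $K$; since $K$ is bipartite and $3$-regular, even $3$ colors suffice by K\"onig's theorem, and such a coloring is computable in polynomial time. Orient each edge from one side of the bipartition to the other.

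The source $\Lambda$ has one vertex $x_v$ for each variable $v$ and one vertex $y_e$ for each constraint $e$, with $y_e$ adjacent to its two endpoints. Thus $\Lambda$ is the subdivision of $K$ and has $\Theta(k)$ vertices.

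The target $\Gamma$ has value vertices $U_{s,a}$, where $s\in\{\text{left},\text{right}\}$ and $a\in[h]$. For each edge color $\gamma$ it has constraint-state vertices $W_{\gamma,a,b}$ for $a,b\in[h]$, adjacent to $U_{\text{left},a}$ and $U_{\text{right},b}$. This target is generic in the sense that it does not depend on which constraint is being checked. The constraint relations $R_e$ must therefore be encoded in $\Gamma$ instance-specifically. To do so, I would use one state vertex per constraint $e$ and satisfying pair, $W_{e,a,b}$ with $(a,b)\in R_e$, adjacent to $U_{\text{left},a}$ and $U_{\text{right},b}$. Lists then force $y_e$ into $\{W_{e,\cdot,\cdot}\}$, $x_v$ into the value vertices of the side of $v$, and also into the allowed unary values.

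This target is bipartite with $|V(\Gamma)|+|E(\Gamma)|=O(kh^2)$. It is sparse but not degenerate: a value vertex $U_{s,a}$ has degree up to $kh$. That is exactly where the subdivision step enters.

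The lists must be implemented with a frame of constant size, or at least with $O(1)$ markers per vertex. Per-constraint identities cannot be encoded by $O(\log k)$ markers without blowing up degeneracy, because each vertex would be anchored to $\Theta(\log k)$ ports of size $\Theta(\log k)$. I would handle this by replacing identity lists with local structure. Distinguishing constraints is in fact unnecessary: $y_e$ may map to any constraint-state $W_{e',a,b}$ whose neighbors are consistent with it. The key fix is to make each $x_v$ encode its own identity in its image, using value vertices $U_{v,a}$, one per variable and value. The constraint state $W_{e,a,b}$ is then adjacent to $U_{u,a}$ and $U_{w,b}$ for $e=uw$. Now $\Gamma$ is the ``microstructure'' graph, with every vertex of degree at most $3h$ on the $U$ side and degree $2$ on the $W$ side.

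To pin $x_v$ to $\{U_{v,\cdot}\}$ without lists, I would use a rigid core that contains a copy of $K$ itself. Concretely, take the frame to be a rigid gadget whose structure, combined with the subdivision, forces identity on the indices. This is the main obstacle. The alternative I would try first is to attach to each $x_v$ and each $U_{v,a}$ a distinct rigid pendant tag of constant degeneracy that is unique to $v$, for instance cycles of pairwise distinct odd lengths. Since odd cycles map only to shorter-or-equal odd cycles, the lengths would have to be selected so that only the matching tag fits. This costs $O(k)$ extra size per tag and preserves degeneracy $\le2$, since cycles are $2$-degenerate. Checking that no tag can fold onto a different part of $H$ is the delicate step.

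Finally we subdivide every edge of both $G$ and $H$ into paths of length three. This preserves the existence of homomorphisms in both directions, by mapping middle vertices along the corresponding paths and noting that paths of length three between original vertices force images at odd distance compatible with original edges (this uses bipartite structure plus the tags). Every subdivision vertex has degree $2$, and the original vertices form an independent set after subdivision. Peeling the subdivision vertices first, then the rest, gives $\operatorname{degen}(H)\le2$. The source still has $\Theta(k)$ vertices, because $K$ has $O(k)$ edges, provided the tags on the source side are kept of total size $O(k)$ by sharing them. Padding to $N=\Theta(k)$ and invoking \Cref{thm:sparse-bounded-alphabet-csp} then yields the stated lower bound.
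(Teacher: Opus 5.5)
You have the same skeleton as the paper: the microstructure target ($U_{v,a}$ and one $W_{e,a,b}$ per allowed pair), a source that is essentially the subdivision of $K$, and a final $S_3$ subdivision of both graphs to get $\operatorname{degen}(H)\le 2$, with sparsity keeping $N=\Theta(k)$. Your remark that $y_e$ needs no pin is also fine, since $K$ is simple.

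\textbf{The gap: pinning the variable vertices.} The missing step is the one you flag yourself: forcing $x_v$ into $\{U_{v,\cdot}\}$ without lists. This is essential, not a technicality. Without it the source, a subdivision of the bipartite graph $K$, maps to a single edge of the target, so every instance becomes a yes-instance.

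Your concrete mechanism does not work. Odd cycles form a chain in the homomorphism order ($C_{2m+1}\to C_{2\ell+1}$ iff $\ell\le m$). So a pendant odd-cycle tag of length $L_v$ maps onto every target tag of length at most $L_v$. No choice of distinct lengths makes only the matching tag fit: the longest tag fits everywhere. A copy of $K$ contributes no rigidity either, since its core is $K_2$.

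There is also a counting obstruction. There are only finitely many graphs of each size, so $k$ pairwise distinguishable private tags cannot all have bounded size. Private tags therefore push the source to $\omega(k)$ vertices. Your own estimate of $O(k)$ per tag gives $\Theta(k^2)$ on both sides, which also exceeds the claimed $|V(H)|+|E(H)|=O(kf(k)^2)$ whenever $f(k)^2=o(k)$. ``Sharing'' the tags is exactly the part that needs a construction.

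\textbf{What the paper does instead.} It uses one shared rigid frame with $\Theta(k)$ vertices and $\Theta(k)$ positions that every homomorphism fixes, and that stays rigid after the value and constraint vertices are attached. This is the chain of $5$-wheels from \cite{FGKM15}: hub $z_i$ is joined to all vertices of the $5$-cycle $z_{i-1}a_ib_ic_id_i$.
\begin{itemize}
\item The hubs are the only vertices with non-bipartite neighborhoods, and the allowed (``safe'') attachments preserve this.
\item For $i\ge2$, the unique non-bipartite component of $N(z_i)$ contains exactly one hub, $z_{i-1}$. Hence $\psi(z_i)=z_{r_i}$ forces $r_{i-1}=r_i-1$, and therefore $\psi(z_i)=z_i$ for all $i$ (\Cref{lem:four-pinning}).
\item Each variable gets two pins $p_v,q_v$ spaced so that they have no common frame neighbor. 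The value vertices are their only added common neighbors.
\item Every source vertex costs only $O(1)$ edges.
\end{itemize}
This edge count, not degeneracy, is why $\Theta(\log k)$ markers per vertex are unaffordable. They cost $\Theta(\log^2 k)$ source edges per vertex and inflate $N$ after subdivision, whereas degeneracy is handled by $S_3$ anyway.

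\textbf{The subdivision step.} Your appeal to ``odd distance'' plus bipartiteness and tags is not an argument, because original source vertices may land on subdivision vertices of the target. The clean proof (\Cref{lem:three-subdivision-hom}) uses $\rho\colon S_3(H)\to H$, which sends the path $a-x_{ab}-y_{ab}-b$ to $a,b,a,b$. Under $\rho$, every walk of length $3$ in $S_3(H)$ has adjacent endpoint images. This holds for all loopless graphs, so the frame only needs to be rigid before subdividing.
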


At a~high level, the proof first reduces a~sparse $\problemabbr{CSP}$ instance $\Gamma$ from~\Cref{thm:sparse-bounded-alphabet-csp} to $\problemabbr{HOM}(G_{\Gamma}, H_\Gamma)$.
Then, we take a~subdivision of both graphs (i.e., we replace every edge of both graphs by a~path of length three).
This fixed subdivision preserves the existence of a~homomorphism and makes the target $2$-degenerate.

However, the price of this step is an~increase in size.
After subdivision, the source has $|V(G_\Gamma)|+O(|E(G_\Gamma)|)$ vertices, and the target has $|V(H_\Gamma)|+O(|E(H_\Gamma)|)$ vertices.
Thus the subdivision can be used in~a fine-grained lower bound only when the source graph produced by the reduction is sparse.

\subsection{Sparse Pinning Frame}

We begin with the~construction of~a~frame from~\cite{FGKM15} and then state its~properties.
For an~integer $s\ge1$, define a~graph $F_s$ as follows.
It has $s + 1$ distinguished vertices $z_0,z_1,\ldots,z_s$.
For every $i\in[s]$, add four more vertices $a_i,b_i,c_i,d_i$.
Add the~edges of the~$5$-cycle
\[z_{i-1}a_i,\quad a_ib_i,\quad b_ic_i,\quad c_id_i,\quad d_iz_{i-1},\]
and add the~five edges from $z_i$ to this cycle,
\[z_iz_{i-1},\quad z_ia_i,\quad z_ib_i,\quad z_ic_i,\quad z_id_i.\]
Let $Z_s=\{z_1,\ldots,z_s\}$ be the~set of these cycle-completing vertices.
\Cref{fig:htree} gives an~example.

\begin{figure}
    \begin{center}
        \begin{tikzpicture}
            \tikzstyle{v} = [circle, draw, inner sep=0mm, minimum size=6mm]

            \foreach \a/\i/\x/\y in {z/0/0/0,
                a/1/1/1, b/1/2/1, z/1/3/0, c/1/2/-1, d/1/1/-1,
                a/2/4/1, b/2/5/1, z/2/6/0, c/2/5/-1, d/2/4/-1,
                a/3/7/1, b/3/8/1, z/3/9/0, c/3/8/-1, d/3/7/-1,
                a/4/10/1, b/4/11/1, z/4/12/0, c/4/11/-1, d/4/10/-1
            }
                \node[v] (\a\i) at (\x, \y) {$\a_\i$};

            \foreach \i in {1,...,4} {
                \tikzmath{\j=int(\i-1);}
                \draw (z\i) -- (z\j);
                \draw (z\i) -- (a\i);
                \draw (z\i) -- (b\i);
                \draw (z\i) -- (c\i);
                \draw (z\i) -- (d\i);
                \draw (z\j) -- (a\i) -- (b\i) -- (c\i) -- (d\i) -- (z\j);
            }
        \end{tikzpicture}
    \end{center}
    \caption{The graph~$F_4$.}
    \label{fig:htree}
\end{figure}

\begin{lemma}
\label{lem:four-frame}
	The graph $F_s$ has degeneracy at most $3$.
	Moreover, the~following statements hold.
	\begin{enumerate}
		\item If $x\notin Z_s$, then $F_s[N_{F_s}(x)]$ is bipartite.
		\item If $x=z_i\in Z_s$, then $F_s[N_{F_s}(z_i)]$ contains the~$5$-cycle $C_i=z_{i-1}a_ib_ic_id_iz_{i-1}$ and is non-bipartite.
		\item For every $i\in[s]$, the~unique non-bipartite connected component of $F_s[N_{F_s}(z_i)]$ is $C_i$.
		If $i\ge2$, the~only vertex of~$Z_s$ in~this component is $z_{i-1}$; if $i=1$, this component contains no vertex of~$Z_s$.
	\end{enumerate}
\end{lemma}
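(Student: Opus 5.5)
The plan is to first write out the neighborhoods of $F_s$ explicitly, and then read off all three claims from them.

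\emph{Neighborhoods.} The vertex $a_i$ is adjacent to $z_{i-1},b_i,z_i$; similarly $d_i\sim z_{i-1},c_i,z_i$. The vertex $b_i$ is adjacent to $a_i,c_i,z_i$, and $c_i\sim b_i,d_i,z_i$. The vertex $z_0$ is adjacent to $a_1,d_1,z_1$. For $1\le i<s$, the neighbors of $z_i$ are $z_{i-1},a_i,b_i,c_i,d_i$ together with $z_{i+1},a_{i+1},d_{i+1}$; for $z_s$ the last three are absent.

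\emph{Degeneracy.} I use the elimination order $b_i,c_i$ (all $i$), then $a_i,d_i$, then $z_s,z_{s-1},\ldots,z_0$. The vertices $b_i,c_i$ have degree $3$. Once they are removed, $a_i$ and $d_i$ have degree $2$. Once all of these are removed, the $z$'s form a path, so each has at most $2$ remaining neighbors. Hence every vertex has at most $3$ later neighbors, and $\operatorname{degen}(F_s)\le 3$.

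\emph{Claim 1.} I check each vertex $x\notin Z_s$ directly.
\begin{itemize}
\item For $x=a_i$, the induced graph on $\{z_{i-1},b_i,z_i\}$ has edges $z_{i-1}z_i$ and $b_iz_i$ only, since $z_{i-1}\not\sim b_i$. This is a path, so it is bipartite. The case $x=d_i$ is symmetric.
\item For $x=b_i$, the induced graph on $\{a_i,c_i,z_i\}$ has edges $a_iz_i$ and $c_iz_i$, while $a_i\not\sim c_i$. This is a path; the case $x=c_i$ is symmetric.
\item For $x=z_0$, the set $\{a_1,d_1,z_1\}$ induces the path $a_1z_1d_1$, because $a_1\not\sim d_1$.
\end{itemize}
The point to verify in every case is that the relevant pair is non-adjacent.

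\emph{Claims 2 and 3.} In $N(z_i)$ there are two groups of vertices, $\{z_{i-1},a_i,b_i,c_i,d_i\}$ and $\{z_{i+1},a_{i+1},d_{i+1}\}$. I show there are no edges between the groups. The vertex $z_{i-1}$ is not adjacent to anything with index $i+1$. The vertices $a_i,b_i,c_i,d_i$ have no neighbors among $z_{i+1},a_{i+1},d_{i+1}$. Note that $z_i$ itself is excluded from the neighborhood, so the edge $a_{i+1}z_i$ plays no role.

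The first group induces exactly the $5$-cycle $C_i$. Its only candidate chords are $z_{i-1}b_i$, $z_{i-1}c_i$, $a_ic_i$, $a_id_i$ and $b_id_i$, and none of these is an edge. An odd cycle is non-bipartite, which gives Claim 2.

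The second group induces the path $a_{i+1}z_{i+1}d_{i+1}$, since $a_{i+1}\not\sim d_{i+1}$. This component is bipartite, so $C_i$ is the unique non-bipartite component. The vertices of $Z_s$ in $C_i$ are just $z_{i-1}$ when $i\ge2$. When $i=1$, the only candidate is $z_0$, and $z_0\notin Z_s$, so the component contains no vertex of $Z_s$. This gives Claim 3.

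The main care point is the bookkeeping of non-edges, which is where mistakes would occur; there is no conceptual obstacle.
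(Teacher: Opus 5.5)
Your proof is correct and follows essentially the same route as the paper. Degeneracy comes from the degree-$3$ gadget vertices followed by the path on $z_0,\ldots,z_s$; you phrase this as an elimination order, while the paper uses minimum degree in subgraphs. The neighborhoods are checked by direct inspection, with $N(z_i)$ splitting into the induced $5$-cycle $C_i$ and the path $a_{i+1}z_{i+1}d_{i+1}$, and your explicit listing of non-chords and cross non-edges simply spells out the inspection the paper leaves implicit.
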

\begin{proof}
	Every vertex among $a_i,b_i,c_i,d_i$ has degree exactly $3$ in~$F_s$.
	Hence every subgraph containing one of these vertices has a~vertex of degree at most $3$.
	If a~subgraph contains no such vertex, then it is a~subgraph of the~path on~$z_0,z_1,\ldots,z_s$, and hence has a~vertex of degree at most $1$ unless it is empty.
	Therefore $F_s$ is $3$-degenerate.

	The neighborhood of every vertex outside $Z_s$ is bipartite by direct inspection.
	For example, $N(z_0)=\{a_1,d_1,z_1\}$ induces the~path $a_1-z_1-d_1$, and $N(a_i)=\{z_{i-1},b_i,z_i\}$ induces a~path with center $z_i$.
	The cases $b_i,c_i,d_i$ are the~same up to symmetry along the~cycle.

	Inside $N(z_i)$, the~vertices of $C_i$ induce exactly that $5$-cycle.
	The other frame neighbors of $z_i$, when they exist, lie in~the three-vertex path $a_{i+1}-z_{i+1}-d_{i+1}$, which is disjoint from $C_i$ inside the~open neighborhood.
	Thus $C_i$ is the~unique non-bipartite connected component.
	For $i\ge2$, the~only vertex of~$Z_s$ among the~vertices of~$C_i$ is $z_{i-1}$; for $i=1$, the~cycle $C_i$ contains no vertex of~$Z_s$.
\end{proof}

The~next step is to~use the~distinguished vertices in~$Z_s$ as pin locations for the~types that occur in~the \problemabbr{CSP} reduction.
In that reduction, \problemabbr{CSP} variables and \problemabbr{CSP} edges both serve as types: for a~\problemabbr{CSP} variable $v$ and a~possible value $a$, the~vertex $A_{v,a}$ has type $v$.
For a~\problemabbr{CSP} edge $e=(u,v)$ and an~allowed pair $(a,b)$, where $a$ corresponds to $u$ and $b$ corresponds to $v$, the~vertex $R_{e,a,b}$ has type $e$.
The vertex $R_{e,a,b}$ is adjacent to the~pin $p_e$ and to the~value vertices $A_{u,a}$ and $A_{v,b}$, whose types are $u$ and~$v$, respectively; both are different from the~constraint type $e$.

Let $T$ be a~finite set of types, let $L=|T|$, put $s=10L+10$ and construct $F_s$, then enumerate $T=\{t_1,\ldots,t_L\}$.
For $t_j$, define two pins
\[p_{t_j}=z_{10j},\qquad q_{t_j}=z_{10j+5}.\]
The spacing ensures that all pins are distinct and that the~two pins of any type have no common frame neighbor.
For an~edge type $e$, only the~pin $p_e$ will be used by constraint vertices.

We say that a~graph is obtained from $F_s$ by safe attachments if new vertices are added only in~the following two forms.
A~value vertex of type $t$ is a~new vertex adjacent to the~two pins $p_t$ and $q_t$.
A~constraint vertex of type $t$ is a~new vertex adjacent to the~pin $p_t$ and to two value vertices whose types are different from~$t$.
See~\Cref{fig:safe-attachments}.

\begin{figure}
\centering
\begin{tikzpicture}[
    x=1cm,
    y=1cm,
    line cap=round,
    line join=round,
    pinvertex/.style={circle, draw=black!65, fill=white, minimum size=6.2pt, inner sep=0pt},
    valuevertex/.style={circle, draw=black!65, fill=white, minimum size=7.2pt, inner sep=0pt},
    constraintvertex/.style={circle, draw=black!65, fill=white, minimum size=7.8pt, inner sep=0pt},
    pinedge/.style={draw=black!62, line width=0.55pt},
    constraintedge/.style={draw=black!62, line width=0.6pt},
    vertexlabel/.style={font=\scriptsize, inner sep=1pt, text=black!78},
    typelabel/.style={font=\scriptsize, inner sep=1pt, text=black!62}
]
    \path[use as bounding box] (-3.65,-1.95) rectangle (3.65,1.72);

    \coordinate (pu) at (-2.8,0.9);
    \coordinate (qu) at (-2.8,-0.9);
    \coordinate (pe) at (0,0.9);
    \coordinate (qe) at (0,-0.9);
    \coordinate (pv) at (2.8,0.9);
    \coordinate (qv) at (2.8,-0.9);

    \coordinate (Au) at (-1.45,0);
    \coordinate (Re) at (0,0);
    \coordinate (Av) at (1.45,0);

    \draw[pinedge] (Au) -- (pu);
    \draw[pinedge] (Au) -- (qu);
    \draw[pinedge] (Av) -- (pv);
    \draw[pinedge] (Av) -- (qv);
    \draw[constraintedge] (Re) -- (pe);
    \draw[constraintedge] (Re) -- (Au);
    \draw[constraintedge] (Re) -- (Av);

    \node[pinvertex, label={[vertexlabel]above:$p_u$}] at (pu) {};
    \node[pinvertex, label={[vertexlabel]below:$q_u$}] at (qu) {};
    \node[pinvertex, label={[vertexlabel]above:$p_v$}] at (pv) {};
    \node[pinvertex, label={[vertexlabel]below:$q_v$}] at (qv) {};
    \node[pinvertex, label={[vertexlabel]above:$p_e$}] at (pe) {};
    \node[pinvertex, label={[vertexlabel]below:$q_e$}] at (qe) {};

    \node[valuevertex, label={[vertexlabel]above:$A_{u,a}$}] at (Au) {};
    \node[valuevertex, label={[vertexlabel]above:$A_{v,b}$}] at (Av) {};
    \node[constraintvertex, label={[vertexlabel, xshift=2pt]above right:$R_{e,a,b}$}] at (Re) {};

    \node[typelabel] at (-1.45,-1.5) {type $u$};
    \node[typelabel] at (0,-1.5) {type $e$};
    \node[typelabel] at (1.45,-1.5) {type $v$};
\end{tikzpicture}

\caption{A local view of safe attachments.}
\label{fig:safe-attachments}
\end{figure}

\begin{lemma}
\label{lem:four-pinning}
	Let $H$ be any graph obtained from $F_s$ by safe attachments.
	Then every homomorphism $\psi\colon F_s\to H$ fixes every vertex of~$Z_s$:
	\[\psi(z_i)=z_i,\qquad\text{for all }i\in[s].\]
\end{lemma}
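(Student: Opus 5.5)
The plan is a local-recognition argument. First show that the vertices of $Z_s$ are the only vertices of $H$ whose open neighborhood induces a non-bipartite graph. Then use the odd cycles $C_i$ to show that $\psi$ sends $Z_s$ into $Z_s$ by a shift, and conclude that the shift is trivial.

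\emph{Step 1: neighborhoods in $H$.} Safe attachments create edges only from new vertices to pins and from constraint vertices to value vertices. Hence the neighborhood of a frame vertex that is not a pin is the same in $H$ as in $F_s$. All pins lie in $Z_s$, since the indices $10j$ and $10j+5$ are at least $10$. By \Cref{lem:four-frame}(1), every frame vertex outside $Z_s$ therefore has a bipartite neighborhood in $H$. We then check the new vertices.
\begin{itemize}
\item A value vertex of type $t$ has neighborhood $\{p_t,q_t\}$, possibly together with constraint vertices. This set spans no edges: the pins $p_t,q_t$ are nonadjacent and have no common neighbor; a constraint vertex is adjacent to no pin of another type's value, and the constraint vertices are pairwise nonadjacent.
\item A constraint vertex of type $t$ has neighbors $p_t$ and two value vertices of types different from $t$. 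These value vertices are not adjacent to $p_t$ or to each other, so the neighborhood is again edgeless.
\end{itemize}

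For $z_i\in Z_s$, the vertices added to $N_H(z_i)$ are value and constraint vertices attached to the pin $z_i$. We claim they are isolated in $H[N_H(z_i)]$.
\begin{itemize}
\item A value vertex of type $t$ attached at $p_t=z_i$ has its other frame neighbor $q_t$ outside $N(z_i)$, by the spacing of the pins.
\item Its constraint-vertex neighbors have types different from $t$, so they are not adjacent to $p_t$.
\item A constraint vertex of type $t$ sees only value vertices of other types, which are not adjacent to $p_t$.
\end{itemize}
Therefore \Cref{lem:four-frame}(2),(3) still hold in $H$. The unique non-bipartite component of $H[N_H(z_i)]$ is $C_i$, and it meets $Z_s$ exactly in $z_{i-1}$ when $i\ge2$, and not at all when $i=1$.

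\emph{Step 2: $\psi$ maps $Z_s$ to $Z_s$ by a shift.} Fix $i\in[s]$. The $5$-cycle $C_i$ lies in $N_{F_s}(z_i)$, so $\psi(C_i)$ is a closed walk of odd length inside $N_H(\psi(z_i))$. Since $H$ is loopless, $\psi(z_i)$ is distinct from every vertex of this walk. The walk is connected and contains an odd cycle, so it lies in a single non-bipartite component of $H[N_H(\psi(z_i))]$. By Step 1, $\psi(z_i)=z_j$ for some $j$, and $\psi(C_i)\subseteq C_j$. Write $\sigma(i)=j$. For $i\ge2$ we have $z_{i-1}\in C_i$, and by the same argument $\psi(z_{i-1})=z_{\sigma(i-1)}\in Z_s$. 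So $\psi(z_{i-1})$ is a vertex of $Z_s$ lying in $C_{\sigma(i)}$. By Step 1 this forces $\sigma(i)\ge2$ and $\sigma(i-1)=\sigma(i)-1$.

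\emph{Step 3: the shift is trivial.} Step 2 gives $\sigma(i)=\sigma(1)+i-1$ for all $i\in[s]$. Since $\sigma(s)\le s$, we get $\sigma(1)=1$, so $\sigma$ is the identity and $\psi(z_i)=z_i$ for every $i$.

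The main obstacle is the bookkeeping in Step 1. We must verify that no safe attachment creates an edge inside the neighborhood of a non-pin vertex, and none inside $N(p_t)$ or $N(q_t)$ that could merge an attached vertex into $C_i$ or create a new odd cycle. This is exactly where the two spacing facts are used: the pins of one type share no frame neighbor, and a constraint vertex only touches value vertices of other types.
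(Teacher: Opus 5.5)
Your proposal is correct and follows the paper's proof essentially step for step: you show that $Z_s$ consists of the only vertices with non-bipartite open neighborhoods, that attached vertices are isolated in $H[N_H(z_i)]$ so $C_i$ remains the unique non-bipartite component, and then use $z_{i-1}\in C_i$ to get $\sigma(i-1)=\sigma(i)-1$, concluding from $\sigma(s)\le s$. Your Step 1 only treats a value vertex attached at $p_t=z_i$; the case $q_t=z_i$ is symmetric and goes through the same way.
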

\begin{proof}
	First, the~only vertices of $H$ whose open neighborhoods are non-bipartite are $z_1,\ldots,z_s$.
	Indeed, frame vertices outside $Z_s$ keep the~same open neighborhoods as in~$F_s$, because attachments are made only to vertices of~$Z_s$.
	A~value vertex has a~bipartite open neighborhood: apart from constraint vertices, it sees only its two nonadjacent pins, and the~constraint vertices adjacent to it are not adjacent to those pins.
	A~constraint vertex has an~independent open neighborhood consisting of one pin and two value vertices.

	Moreover, for $z_i\in Z_s$, the~added vertices in~$N_H(z_i)$ are isolated in~the induced graph $H[N_H(z_i)]$.
	If an~added value vertex is adjacent to $z_i$, then its other pin is not adjacent to $z_i$ by the~spacing of pins.
	If an~added constraint vertex is adjacent to $z_i$, then its other neighbors are two value vertices, none of which is adjacent to $z_i$.
	Consequently, the~unique non-bipartite component of $H[N_H(z_i)]$ is still the~cycle component $C_i$ from Lemma~\ref{lem:four-frame}.

	Let $\psi\colon F_s\to H$ be a~homomorphism.
	For each $i\in[s]$, the~cycle $C_i$ is contained in~$N_{F_s}(z_i)$.
	Its image is an~odd closed walk in~$H[N_H(\psi(z_i))]$.
	Hence $H[N_H(\psi(z_i))]$ is non-bipartite, and therefore $\psi(z_i)\in Z_s$.
	Let $\psi(z_i)=z_{r_i}$ with $r_i\in[s]$.

	For $i\ge2$, the~cycle $C_i$ contains the~vertex $z_{i-1}\in Z_s$.
	Since $C_i$ is connected and contains an~odd cycle, its image lies in~the unique non-bipartite component $C_{r_i}$ of $H[N_H(z_{r_i})]$.
	The vertex $z_{i-1}$ maps to a~vertex of~$Z_s$, and the~only vertex of~$Z_s$ in~$C_{r_i}$ is $z_{r_i-1}$.
	Thus $r_i\ge2$ and $\psi(z_{i-1})=z_{r_i-1}$.
	Equivalently, $r_{i-1}=r_i-1$ for every $i=2,\ldots,s$.
	Since $1\le r_s\le s$, we get $r_1\le1$.
	Also $r_1\ge1$, so $r_1=1$ and $r_i=i$ for every $i\in[s]$.
\end{proof}

\subsection{Reduction}
\label{subsec:reduction}

Let
\[\Gamma=(P,\Sigma,\{C_e\}_{e\in E(P)})\]
be a~$2$-\problemabbr{CSP} instance whose constraint graph $P$ is a~simple bipartite $3$-regular graph on~$k$ vertices.
Let $q=|\Sigma|$.
Since $P$ is $3$-regular, $|E(P)|=3k/2$.
Fix an~arbitrary orientation of every edge $e\in E(P)$.

Define the~type set
\[T=V(P)\cup E(P).\]
Then $|T|=k+3k/2=5k/2$.
Build the~pinning frame $F_s$ with $s=10|T|+10$ and with the~pin pairs $(p_t,q_t)$ for $t\in T$ as above.

The source graph $G_\Gamma$ starts with a~copy of $F_s$.
For every \problemabbr{CSP} variable $v\in V(P)$, add a~vertex $x_v$ and the~two pin edges $x_vp_v$ and $x_vq_v$.
For every oriented constraint edge $e=(u,v)\in E(P)$, add a~vertex $y_e$ and the~three edges
\[y_ep_e,\qquad y_ex_u,\qquad y_ex_v.\]

The target graph $H_\Gamma$ starts with a~copy of the~same frame $F_s$.
For every \problemabbr{CSP} variable $v\in V(P)$ and every value $a\in\Sigma$, add a~value vertex $A_{v,a}$ and the~two pin edges $A_{v,a}p_v$ and $A_{v,a}q_v$.
For every oriented constraint edge $e=(u,v)$ and every allowed pair $(a,b)\in C_e$, add a~constraint vertex $R_{e,a,b}$ and the~three edges
\[R_{e,a,b}p_e,\qquad R_{e,a,b}A_{u,a},\qquad R_{e,a,b}A_{v,b}.\]

\begin{lemma}
\label{lem:four-correctness}
	The \problemabbr{CSP} instance $\Gamma$ is satisfiable if and only if $G_\Gamma\to H_\Gamma$.
\end{lemma}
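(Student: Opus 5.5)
We prove the two directions separately. For the forward direction, let $\sigma\colon V(P)\to\Sigma$ be a satisfying assignment. We define $\phi\colon G_\Gamma\to H_\Gamma$ by taking the identity on the frame copy $F_s$ and setting $\phi(x_v)=A_{v,\sigma(v)}$ for every variable. For every oriented edge $e=(u,v)$ we set $\phi(y_e)=R_{e,\sigma(u),\sigma(v)}$; this vertex exists because $(\sigma(u),\sigma(v))\in C_e$. Checking edges is immediate. Frame edges map identically. The edges $x_vp_v$ and $x_vq_v$ map to $A_{v,a}p_v$ and $A_{v,a}q_v$. The edges $y_ep_e$, $y_ex_u$, $y_ex_v$ map to the three edges of $R_{e,\sigma(u),\sigma(v)}$ by construction.

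For the converse, let $\phi\colon G_\Gamma\to H_\Gamma$ be a homomorphism. The plan has four steps.

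\textbf{Step 1: the frame is fixed.} The target $H_\Gamma$ is obtained from $F_s$ by safe attachments. The value vertices $A_{v,a}$ are attached exactly to $p_v,q_v$. The constraint vertices $R_{e,a,b}$ are attached to $p_e$ and to value vertices of types $u,v\neq e$. The restriction of $\phi$ to the source frame copy is a homomorphism $F_s\to H_\Gamma$, so by \Cref{lem:four-pinning} it fixes every $z_i$, and in particular every pin.

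\textbf{Step 2: locate $\phi(x_v)$.} The image $\phi(x_v)$ must be a common neighbor of $p_v$ and $q_v$ in $H_\Gamma$. These pins are at distance $5$ along the $z$-path, so no frame vertex is adjacent to both: frame neighbors of $z_i$ lie among $z_{i\pm1}$ and the gadget vertices $a,b,c,d$ of blocks $i$ and $i+1$. A constraint vertex is adjacent to only one pin. Value vertices of another type $t\neq v$ are adjacent to $p_t,q_t$, which differ from $p_v$ because pins are distinct. Hence $\phi(x_v)=A_{v,a_v}$ for some $a_v\in\Sigma$.

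\textbf{Step 3: locate $\phi(y_e)$.} For $e=(u,v)$, the image $\phi(y_e)$ is adjacent to $p_e$, $A_{u,a_u}$ and $A_{v,a_v}$. The neighbors of $A_{u,a_u}$ are $p_u$, $q_u$ and constraint vertices $R_{e',\cdot,\cdot}$ with $e'$ incident to $u$. Neither $p_u$ nor $q_u$ is adjacent to $p_e$, by the spacing of pins (distinct types' pins are at least $5$ apart on the path). So $\phi(y_e)=R_{e',\alpha,\beta}$ for some $e'$. Adjacency to $p_e$ forces $e'=e$. Then adjacency to $A_{u,a_u}$ and $A_{v,a_v}$, together with the fixed orientation, forces $\alpha=a_u$ and $\beta=a_v$. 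Since $P$ is simple and bipartite, $u\neq v$, so there is no ambiguity. Therefore $(a_u,a_v)\in C_e$, and $v\mapsto a_v$ satisfies $\Gamma$.

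\textbf{Step 4: the main obstacle.} The delicate part is the neighborhood bookkeeping in Steps 2--3, which must rule out every frame vertex as a common neighbor. I will handle it by an explicit list of $N_{H_\Gamma}(z_i)$:
\[
z_{i-1},\ z_{i+1},\ a_i,\ b_i,\ c_i,\ d_i,\ a_{i+1},\ b_{i+1},\ c_{i+1},\ d_{i+1},
\]
plus attached vertices. Using the pin indices $10j$ and $10j+5$, these lists are disjoint for any two pins.
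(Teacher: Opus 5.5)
Your proof is correct and follows essentially the same route as the paper: fix the pins via \Cref{lem:four-pinning}, show that $\phi(x_v)$ must be some $A_{v,a_v}$ because these are the only common neighbors of $p_v,q_v$, and then force $\phi(y_e)=R_{e,a_u,a_v}$. The differences are cosmetic: you exclude $p_u,q_u$ by their non-adjacency to $p_e$, whereas the paper uses $\{p_u,q_u\}\cap\{p_v,q_v\}=\varnothing$ and the fact that value vertices have no value neighbors; also, in block $i+1$ the vertex $z_i$ is adjacent only to $a_{i+1},d_{i+1},z_{i+1}$, not to $b_{i+1},c_{i+1}$, but your list is then a superset of $N(z_i)\cap V(F_s)$, so the disjointness conclusion is unaffected.
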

\begin{proof}
	Suppose first that $\Gamma$ has a~satisfying assignment $\alpha\colon V(P)\to\Sigma$.
	Map the~frame in~$G_\Gamma$ identically to the~frame in~$H_\Gamma$.
	For every variable vertex $x_v$, set $\phi(x_v)=A_{v,\alpha(v)}$.
	For every constraint edge $e=(u,v)$, the~pair $(\alpha(u),\alpha(v))$ belongs to $C_e$, so the~constraint vertex $R_{e,\alpha(u),\alpha(v)}$ exists; set $\phi(y_e)=R_{e,\alpha(u),\alpha(v)}$.
	All frame edges, pin edges, and incidence edges are preserved by construction, so $\phi$ is a~homomorphism.

	Conversely, let $\phi\colon G_\Gamma\to H_\Gamma$ be a~homomorphism.
	The restriction of $\phi$ to the~frame copy in~$G_\Gamma$ fixes every vertex of~$Z_s$ by Lemma~\ref{lem:four-pinning}.
	In particular, all pins are fixed.

	For a~\problemabbr{CSP} variable $v$, the~vertex $x_v$ is adjacent to both $p_v$ and $q_v$.
	Therefore $\phi(x_v)$ is a~common neighbor of $p_v$ and $q_v$ in~$H_\Gamma$.
	The pin pair has no common neighbor inside the~frame, and the~only added common neighbors are the~value vertices $A_{v,a}$ with $a\in\Sigma$.
	Thus there is a~unique value $\alpha(v)\in\Sigma$ such that $\phi(x_v)=A_{v,\alpha(v)}$.

	Now let $e=(u,v)$ be a~constraint edge.
	Since $y_e$ is adjacent to $p_e$, $x_u$, and $x_v$, the~vertex $\phi(y_e)$ is a~common neighbor of $p_e$, $A_{u,\alpha(u)}$, and $A_{v,\alpha(v)}$.
	No frame vertex has this property: a~frame neighbor of $A_{u,\alpha(u)}$ must be $p_u$ or $q_u$, while a~frame neighbor of $A_{v,\alpha(v)}$ must be $p_v$ or $q_v$, and these four pins are distinct by construction because $u\ne v$.
	No value vertex has this property either, since value vertices have no value neighbors.
	Hence $\phi(y_e)$ is a~constraint vertex.
	A~constraint vertex adjacent to $p_e$ has type $e$, and its adjacency to $A_{u,\alpha(u)}$ and $A_{v,\alpha(v)}$ forces this vertex to be $R_{e,\alpha(u),\alpha(v)}$.
	Therefore $(\alpha(u),\alpha(v))\in C_e$.
	This holds for every constraint edge, so $\alpha$ satisfies $\Gamma$.
\end{proof}

\begin{lemma}
\label{lem:four-size}
	The reduction takes $\Gamma = (P, \Sigma, \{ C_e \}_{e \in E(P)})$, constructs $G_\Gamma$ and $H_\Gamma$ in~time polynomial in~the output size and satisfies
	\[|V(G_\Gamma)|=\Theta(k),\qquad |E(G_\Gamma)|=O(k)\]
	and
	\[|V(H_\Gamma)|+|E(H_\Gamma)|=O(kq^2),\]
	where $k = |V(P)|$, and $q = |\Sigma|$.
\end{lemma}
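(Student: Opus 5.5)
The plan is to count vertices and edges of both graphs directly from the construction, using $s=10|T|+10=25k+10=\Theta(k)$ and $|E(P)|=3k/2$. Running time will follow since every vertex and edge is produced by a single explicit rule whose membership test (e.g.\ whether $(a,b)\in C_e$) reads the input constraint tables.

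\textbf{Source graph.} The frame $F_s$ has $5s+1$ vertices: the $z_i$ plus four vertices $a_i,b_i,c_i,d_i$ per index. It has $10s$ edges, namely five cycle edges and five edges from $z_i$ per index. We then add $k$ vertices $x_v$ with $2k$ pin edges and $3k/2$ vertices $y_e$ with $3\cdot 3k/2$ edges. Hence
\[
|V(G_\Gamma)|=5s+1+k+3k/2=\Theta(k),\qquad |E(G_\Gamma)|=10s+2k+9k/2=O(k).
\]
The lower bound $\Omega(k)$ is immediate, since there is already one vertex $x_v$ per variable.

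\textbf{Target graph.} The target again contains the frame, with $O(k)$ vertices and edges. It adds $kq$ value vertices $A_{v,a}$ with $2kq$ pin edges. For the constraint vertices, note that each $C_e\subseteq\Sigma^2$ has size at most $q^2$, so there are at most $(3k/2)q^2$ vertices $R_{e,a,b}$, each contributing three edges. Summing gives
\[
|V(H_\Gamma)|+|E(H_\Gamma)|=O(k)+3kq+4\cdot\tfrac{3k}{2}q^2=O(kq^2),
\]
using $q\ge1$.

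\textbf{Construction time.} Building $F_s$ and assigning the pins $p_{t_j}=z_{10j}$, $q_{t_j}=z_{10j+5}$ only requires enumerating $T$, which takes $\operatorname{poly}(k)$ time. Generating the value vertices takes $O(kq)$ time. For the constraint vertices, we iterate over the listed pairs of each $C_e$, which is polynomial in the input size and hence in the output size. There is no real obstacle here; the only point to check is the bound $|C_e|\le q^2$ per edge together with $|E(P)|=O(k)$, which holds because $P$ is $3$-regular.
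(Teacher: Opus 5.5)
Your proposal is correct and matches the paper's proof: both count directly the frame ($5s+1$ vertices and $O(s)$ edges, with $s=25k+10$), the $k+3k/2$ added source vertices, the $kq$ value vertices, and at most $|E(P)|q^2$ constraint vertices, each contributing a constant number of edges. You leave implicit why ``polynomial in the input size'' implies ``polynomial in the output size''; this holds because the encoding of $\Gamma$ has size $O(k\log k+kq^2)\le\operatorname{poly}(kq)$, the output has at least $kq$ value vertices, and the paper does not spell this out either.
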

\begin{proof}
	The frame has $|V(F_s)|=(s+1)+4s=5s+1$ vertices, where $s=10|T|+10=25k+10$, so $|V(F_s)|=\Theta(k)$.
	The source graph adds one vertex for every vertex of $P$ and one vertex for every edge of $P$, hence adds $k+3k/2=O(k)$ vertices.
	Therefore $|V(G_\Gamma)|=\Theta(k)$.
	The source graph has the~$O(k)$ frame edges, two pin edges for every \problemabbr{CSP} variable, and three edges for every \problemabbr{CSP} edge vertex.
	Thus $|E(G_\Gamma)|=O(k)$.

	The target graph has the~frame, $kq$ value vertices, and at most
	\[\sum_{e\in E(P)} |C_e|\le |E(P)|q^2=O(kq^2)\]
	constraint vertices.
	Each added value vertex contributes two pin edges, and each constraint vertex contributes three edges.
	The frame has $O(k)$ edges.
	Therefore $|V(H_\Gamma)|+|E(H_\Gamma)|=O(kq^2)$, and the~construction time is polynomial in~this output size.
\end{proof}

For a~graph $X$, let $S_3(X)$ denote the graph obtained from $X$ by replacing every edge by a~path of length~$3$.

\begin{lemma}
\label{lem:three-subdivision-hom}
	For all finite loopless graphs $G$ and $H$,
	\[G\to H\qquad\Longleftrightarrow\qquad S_3(G)\to S_3(H).\]
\end{lemma}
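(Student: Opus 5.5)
The plan is to prove the two implications separately. The forward direction is a direct construction. The converse first reduces to connected components and then uses a parity argument on closed walks in $S_3(H)$.

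\emph{Forward direction.} For an edge $uv\in E(G)$, write the corresponding path in $S_3(G)$ as $u,s_{uv}^u,s_{uv}^v,v$, where $s_{uv}^u$ is the subdivision vertex next to~$u$. Write the paths of $S_3(H)$ the same way. Given $\phi\colon G\to H$, let $\psi$ agree with $\phi$ on original vertices. Set $\psi(s_{uv}^u)=s_{\phi(u)\phi(v)}^{\phi(u)}$ and $\psi(s_{uv}^v)=s_{\phi(u)\phi(v)}^{\phi(v)}$. These are well defined because $\phi(u)\ne\phi(v)$, as $H$ is loopless. All three edges of each path are then visibly preserved.

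\emph{Converse: reduction to components.} Let $\psi\colon S_3(G)\to S_3(H)$, and treat the connected components $K$ of~$G$ separately, since a homomorphism can be assembled componentwise.
\begin{itemize}
\item If $K$ is a single vertex, send it to any vertex of~$H$.
\item If $K$ is bipartite with at least one edge, it suffices that $H$ has an edge. This holds because $S_3(K)$ has an edge, so $S_3(H)$ has an edge, so $H$ has one.
\end{itemize}
So the real case is a non-bipartite component~$K$. Here $S_3(K)$ contains an odd cycle, namely the subdivision of an odd cycle of length~$\ell$, which has length $3\ell$.

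\emph{Converse: the non-bipartite case.} The key structural fact is this. Any walk in $S_3(H)$ projects to a walk in~$H$. Its length equals $3\times$ (number of complete traversals of a subdivided path) plus an even amount coming from back-and-forth moves inside paths. Hence a closed walk in $S_3(H)$ of length $3\ell$ with $\ell$ odd must project to a closed walk in~$H$ with an odd number of full traversals.

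I would use this to show the following: for every original vertex $u$ of~$K$, $\psi(u)$ is an original vertex of $S_3(H)$; if that fails, $\psi$ can be modified so that it holds. To do this, look at the length-$3$ walk $\psi(u),\psi(s^u_{uv}),\psi(s^v_{uv}),\psi(v)$ for each edge $uv$. A length-$3$ walk either is a full traversal between two originals, or has even net progress plus one step. I would track a "phase" in $\mathbb Z_3$ that measures the offset from the nearest original vertex along the walk. Consistency of this phase around the odd cycle, together with propagation along the connected component, should force phase~$0$ at every original vertex.

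Once originals map to originals, consider any edge $uv$. Its length-$3$ walk goes from the original $\psi(u)$ to the original $\psi(v)$. The only such walks in $S_3(H)$ are full traversals of a subdivided edge; a backtracking walk from an original returns to an original only after an even number of steps. Therefore $\psi(u)\psi(v)\in E(H)$, and $\phi=\psi|_{V(K)}$ is the desired homomorphism.

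\emph{Main obstacle.} The main obstacle is the phase argument in the non-bipartite case, i.e.\ ruling out that $\psi$ sends an original vertex into the interior of a subdivided path. This is delicate exactly at low-degree vertices: degree-$2$ originals look locally like subdivision vertices. My first attempt would be to define the phase carefully via the projected walk and show it is constant on~$K$ by induction along edges. If that fails, my fallback is to replace $\psi$ on each component by the map "round to the endpoint reached in the traversal direction". I would then verify directly that this rounded map still sends each length-$3$ walk to a full traversal; the odd cycle is what excludes degenerate folding, as in the bipartite example of $P_4$ folding onto itself.
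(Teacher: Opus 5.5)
\paragraph{Verdict.} The forward direction and the handling of isolated and bipartite components are fine. However, the non-bipartite case, which carries all the content of the converse, is not proved, and your phase argument aims at claims that are false.

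\paragraph{Phase $0$ is not forced.} Take $K=H=K_3$ on $\{1,2,3\}$ and $\{a,b,c\}$, so $S_3(K)=S_3(H)=C_9$. Rotating $C_9$ by one step sends every original vertex of $K$ to a subdivision vertex, even though $K$ contains an odd cycle. For instance, $\psi(1)=t$ is the subdivision vertex of the path of $ab$ adjacent to $a$.

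\paragraph{Phase is not constant on $K$.} Attach a pendant vertex $4$ to vertex $1$. The subdivided pendant edge can be mapped to the walk $t,a,t,a$. Then $\psi(4)=a$ has phase $0$ while $\psi(1)$ does not. The same example shows that the odd cycle does not exclude folding: backtracking walks occur on edges off the cycle.

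\paragraph{The fallback rounding is not well defined.} The vertex $\psi(u)$ starts one length-$3$ walk per edge at $u$, and these walks may leave $\psi(u)$ in opposite directions. In the rotation example, the walks for the edges $12$ and $13$ leave $t$ towards $b$ and towards $a$, respectively. The natural readings of your rule are ``round to the nearest original'' or ``round to the first original the walk reaches''. Both send $1$ and $4$ to $a$ in the pendant example, which is not an edge of the loopless $H$. The modification you hope for does exist, but only because the lemma is true; constructing it is the whole difficulty.

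\paragraph{The missing idea.} Use one fixed rounding $\rho\colon V(S_3(H))\to V(H)$. It fixes originals and sends each subdivision vertex to the \emph{far} endpoint of its path, the one it is not adjacent to. This map has two properties:
\begin{itemize}
\item $\rho$ is a homomorphism $S_3(H)\to H$;
\item both neighbours of a subdivision vertex have the same $\rho$-image.
\end{itemize}
Now take a length-$3$ walk $w_0w_1w_2w_3$. Originals are pairwise nonadjacent in $S_3(H)$, so one of $w_1,w_2$ is a subdivision vertex. The two properties then give $\rho(w_0)\rho(w_3)\in E(H)$. Hence $\rho\circ\psi$ restricted to $V(G)$ is already a homomorphism $G\to H$. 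No split into components, no parity count, and no requirement that originals land on originals is needed; this is exactly the paper's proof. In the pendant example it gives $1\mapsto b$ and $4\mapsto a$, which is an edge.
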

\begin{proof}
	The forward implication is immediate.
	A~homomorphism $\varphi\colon G\to H$ maps each edge $uv\in E(G)$ to an~edge $\varphi(u)\varphi(v)\in E(H)$, and the subdivided path replacing $uv$ can be mapped to the subdivided path replacing $\varphi(u)\varphi(v)$.

	For the reverse implication, choose an arbitrary orientation of each edge of $H$.
	If an oriented edge $ab$ is replaced in $S_3(H)$ by the path $a-x_{ab}-y_{ab}-b$, define a~map $\rho\colon V(S_3(H))\to V(H)$ by
	\[\rho(a)=a\text{ for original vertices }a,\qquad \rho(x_{ab})=b,\qquad \rho(y_{ab})=a.\]
	This map sends every edge of $S_3(H)$ to an~edge of $H$.

	We claim that if $w_0w_1w_2w_3$ is a~walk of length~$3$ in $S_3(H)$, then $\rho(w_0)\rho(w_3)\in E(H)$.
	Indeed, every two-step walk centered at an internal subdivision vertex has endpoints with the same $\rho$-image.
	If $w_1$ is internal, then $\rho(w_0)=\rho(w_2)$, and the edge $w_2w_3$ gives $\rho(w_2)\rho(w_3)\in E(H)$.
	If $w_1$ is original, then $w_2$ is internal, so $\rho(w_1)=\rho(w_3)$, and the edge $w_0w_1$ gives $\rho(w_0)\rho(w_1)\in E(H)$.
	This proves the claim.

	Now let $\psi\colon S_3(G)\to S_3(H)$ be a~homomorphism.
	For every original vertex $v\in V(G)$, define $\varphi(v)=\rho(\psi(v))$.
	If $uv\in E(G)$, then the subdivided path replacing $uv$ in $S_3(G)$ maps under $\psi$ to a~walk of length~$3$ in $S_3(H)$ from $\psi(u)$ to $\psi(v)$.
	Hence, $\varphi(u)\varphi(v)\in E(H)$ and $\varphi \colon G\to H$ is a~homomorphism.
\end{proof}

\begin{lemma}
\label{lem:three-subdivision-parameters}
	For every graph $H$, the graph $S_3(H)$ satisfies
	\[\operatorname{degen}(S_3(H))\le2.\]
\end{lemma}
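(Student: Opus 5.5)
I will prove this using the elimination-order characterization of degeneracy from the preliminaries: it suffices to find an ordering of $V(S_3(H))$ in which every vertex has at most two neighbors later in the order. Equivalently, I will show that every nonempty subgraph of $S_3(H)$ has a vertex of degree at most~$2$.

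The first step is to record the structure of $S_3(H)$. Its vertex set splits into the original vertices $V(H)$ and the subdivision vertices $x_{ab},y_{ab}$, two for each edge $ab\in E(H)$. Each subdivision vertex has degree exactly $2$ in $S_3(H)$: $x_{ab}$ is adjacent to $a$ and $y_{ab}$, and $y_{ab}$ is adjacent to $x_{ab}$ and $b$. No two original vertices are adjacent, since every edge of $H$ has been replaced by a path with two internal vertices.

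The ordering then places all subdivision vertices first, in any order, followed by all original vertices, in any order. A subdivision vertex has total degree $2$, so it has at most $2$ later neighbors. An original vertex has all of its neighbors among the subdivision vertices, which appear earlier, so it has $0$ later neighbors. Hence $\operatorname{degen}(S_3(H))\le2$.

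Alternatively, by the subgraph definition: if a nonempty subgraph contains a subdivision vertex, that vertex has degree at most $2$ there. Otherwise the subgraph consists only of original vertices, which form an independent set, so every vertex has degree $0$.

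I do not expect a real obstacle. The only point to check is that the claim is independent of the chosen orientation and of whether $H$ has edges at all; both are immediate from the structure described above. The bound is also tight whenever $H$ contains a cycle, because the subdivided cycle is $2$-regular, but tightness is not needed for the statement.
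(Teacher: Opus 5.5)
Your proof is correct and is essentially the paper's argument. The paper gives exactly your second version: a subgraph containing an internal subdivision vertex has a vertex of degree at most~$2$, and otherwise it is an independent set of original vertices. Your elimination order (subdivision vertices first, then original vertices) is the same observation phrased through the equivalent ordering characterization.
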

\begin{proof}
	Let $J$ be a~non-empty subgraph of $S_3(H)$.
	If $J$ contains an internal subdivision vertex, then that vertex has degree at most~$2$ in $S_3(H)$, and hence degree at most~$2$ in $J$.
	If $J$ contains no internal subdivision vertex, then $J$ is an~independent set of original vertices.
	Thus every non-empty subgraph of $S_3(H)$ has a~vertex of degree at most~$2$.
\end{proof}

\begin{proof}[Proof of Theorem~\ref{thm:constant-deg-lb-formal}]
	Apply Theorem~\ref{thm:sparse-bounded-alphabet-csp} with the function $f$ from the statement, and let $\Gamma$ be an~instance with constraint graph on~$k$ vertices and alphabet size $q<f(k)$.

	Construct $(G_\Gamma,H_\Gamma)$ as in~\Cref{subsec:reduction}, in~time polynomial in~the output size by~\Cref{lem:four-size}, and then set
	\[\widehat G_\Gamma=S_3(G_\Gamma),\qquad \widehat H_\Gamma=S_3(H_\Gamma).\]
	By Lemma~\ref{lem:four-correctness} and Lemma~\ref{lem:three-subdivision-hom}, $\Gamma$ is satisfiable if and only if $\widehat G_\Gamma\to \widehat H_\Gamma$.
	By Lemma~\ref{lem:three-subdivision-parameters}, the target satisfies $\operatorname{degen}(\widehat H_\Gamma)\le2$.
	Let $N=|V(\widehat G_\Gamma)|$.
	By Lemma~\ref{lem:four-size},
	\[N=|V(G_\Gamma)|+2|E(G_\Gamma)|=\Theta(k).\]
	The same size bound gives
	\[|V(\widehat H_\Gamma)|+|E(\widehat H_\Gamma)|=O(|V(H_\Gamma)|+|E(H_\Gamma)|)=O(kq^2)=O(kf(k)^2).\]
	The construction time is polynomial in~this output size.
	Therefore, an~algorithm solving all constructed instances in~time
	\[f(k-1)^{o(k/\log k)}\cdot (|V(G)|+|V(H)|)^{O(1)}\]
	would give an~$f(k-1)^{o(k/\log k)}$-time algorithm for the \problemabbr{CSP} instances covered by Theorem~\ref{thm:sparse-bounded-alphabet-csp}, contradicting that theorem.
\end{proof}

\section{No-Compression Barriers for SAT Reductions}
\label{sec:no-compression}

\Cref{thm:main-lb} shows that degeneracy is visible linearly in the exponent of $\problemabbr{HOM}$, but the hard targets behind it are large: the construction needs a~target of size $h=2^{\Theta(D^2)}$ to realize degeneracy scale~$D$.
This is polynomial in~$n$ only while $D=O(\sqrt{\log n})$; for larger degeneracy the target becomes superpolynomial.
It is therefore natural to ask whether the same linear dependence can be forced while the target stays polynomially bounded.

This section gives a~conditional reason to expect that it cannot, at least not through the standard route.
Fine-grained lower bounds for $\problemabbr{HOM}$, like those for many neighboring problems, are ultimately reductions from sparse $3$-$\problemabbr{SAT}$ via $\cc{ETH}$ and the Sparsification Lemma.
We prove that a~polynomial target reduction would ``compress'' the $\problemabbr{SAT}$ instance, thereby contradicting \Cref{hyp:nocomp}.

The argument has four parts.
First, we discuss why an~explicit sparse $3$-$\problemabbr{SAT}$ instance carries hardness only at scale $s/\log s$, where $s$~is its bit length, and we recall the No-Compression Hypothesis stating no ordinary reduction does better.
Second, we apply this argument to sparse $\problemabbr{CSP}$ instances with a~linear number of constraints, showing
that one cannot improve the known lower bounds for~it under~\Cref{hyp:nocomp}.
Third, we show that degenerate targets compress the boundary information of homomorphism into short list records; this lets~us reduce $\problemabbr{HOM}$ to another problem which is compressible.
Finally, we show the resulting no-compression barrier for $\problemabbr{HOM}$.

\subsection{Sparse-SAT Information Scale}

The Sparsification Lemma~\cite{IPZ01} lets one run $\cc{ETH}$ through sparse $3$-$\problemabbr{SAT}$: formulas on $N$ variables with $O(N)$ clauses still require $2^{\Omega(N)}$ time under $\cc{ETH}$.
If the problem is measured by an~explicit bit length~$s$, then to encode a~sparse formula one needs $s = \Theta(N\log N)$.
Hence the hardness of sparse $3$-$\problemabbr{SAT}$ is \[2^{\Omega(N)}=2^{\Omega(s/\log s)}.\] under $\cc{ETH}$.

To prove a~lower bound \emph{above} this scale for an~$s$-bit target problem, a~sparse-$\problemabbr{SAT}$ reduction must do some compression of the formula.
A~broad variety of fine-grained lower bounds is known, yet to our knowledge not one of them establishes a~bound better than $2^{\Omega(s/\log s)}$ in the input bit length~$s$.

This distinction has a~precedent in the barrier of Kulikov and Mihajlin for the \problemname{Edge Coloring} problem~\cite{KM24}.
The input to~this problem has size $n^2$ and
the best known upper bound is~of the~form $2^{O(n^2)}$.
Proving an~$\alpha^{n^2}$ lower bound for this problem under $\cc{SETH}$ for any $\alpha > 0$ would break $\cc{SETH}$.
The hypothesis we~use is~the $\cc{ETH}$ analogue of that barrier, and it does not follow from their argument.

We now state the formal version of the no-compression principle used in~the rest of~the section.
\begingroup
\renewcommand{\thetheorem}{\ref{hyp:nocomp}}
\renewcommand{\theHtheorem}{hyp.nocomp.formal}
\begin{hypothesis}[No-Compression Hypothesis]\label{hyp:nocomp-formal}
	Fix an~explicit encoding of a~decision problem $\mathcal A$, and let $s$ be the bit length of an~encoded instance.
	There is no fine-grained reduction from sparse $3$-$\problemabbr{SAT}$ on $N$ variables to $\mathcal A$, and no function $\Psi(s)=\omega(s/\log s)$, with the following property: if $\mathcal A$ has an~algorithm running in time $2^{o(\Psi(s))}$ on $s$-bit instances, then sparse $3$-$\problemabbr{SAT}$ has an~algorithm running in time $2^{o(N)}$.

	Equivalently, there is no reduction establishing under $\cc{ETH}$ a~lower bound \[2^{\Omega(\Psi(s))}\] for any $\Psi(s)=\omega(s/\log s)$.
\end{hypothesis}
\addtocounter{theorem}{-1}
\endgroup

\subsection{Application to Sparse CSP}
\label{subsec:sparse-csp-nocomp}

\begin{theorem}
	Assume \Cref{hyp:nocomp-formal}.
	Let $k=k(n)$ and $h=h(n)$.
	Set $B(n)=k\log n+h^k$.
	Suppose that
	\[\log h=\omega\!\left(\frac{B(n)}{\log n + \log B(n)}\right).\]
	Then no~fine-grained reduction from sparse $3$-$\problemabbr{SAT}$ can establish, under $\cc{ETH}$, an~$h^{\Omega(n)}$-time lower bound for sparse $(k,h)$-$\problemabbr{CSP}$.
\end{theorem}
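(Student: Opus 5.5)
Argue by contradiction. Suppose some fine-grained reduction from sparse $3$-$\problemabbr{SAT}$ establishes, under $\cc{ETH}$, an $h^{\Omega(n)}$ lower bound for sparse $(k,h)$-$\problemabbr{CSP}$. Such a reduction means the following: an algorithm running in time $h^{o(n)}=2^{o(n\log h)}$ on sparse $(k,h)$-$\problemabbr{CSP}$ would yield a $2^{o(N)}$ algorithm for sparse $3$-$\problemabbr{SAT}$. The plan is to reinterpret this bound in terms of the bit length $s$ of the $\problemabbr{CSP}$ encoding. We will then show that it is a $2^{\Omega(\Psi(s))}$ lower bound with $\Psi(s)=\omega(s/\log s)$, which \Cref{hyp:nocomp-formal} forbids.

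The first step is to bound $s$. A sparse instance has $O(n)$ constraints. Each constraint is encoded by its scope of at most $k$ variable names, which takes $k\lceil\log n\rceil$ bits, and its truth table, which takes at most $h^k$ bits. Hence
\[
s=O\bigl(n(k\log n+h^k)\bigr)=O(nB(n)),
\]
assuming $B(n)\ge1$ (we may add $O(n\log n)$ bits for headers, which are absorbed because $k\ge1$). Consequently $\log s=O(\log n+\log B(n))$, and also $s=\Omega(n)$.

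The second step compares the two exponents. Define $\Psi(s):=n\log h$, which is well defined once we express $n$ as a function of the encoding size. One can do this by padding instances to the canonical length for a given $n$, or more simply by working with $\Psi$ along the instance family. By the hypothesis on $\log h$,
\[
\frac{\Psi(s)}{s/\log s}\;\ge\;\Omega\!\left(\frac{n\log h\,(\log n+\log B(n))}{nB(n)}\right)\cdot\frac{\log s}{\log n+\log B(n)}.
\]
The first factor is $\omega(1)$ by assumption. The second factor is $\Omega(1)$ because $\log s=\Omega(\log n)$ and $\log s\ge\log B(n)-O(1)$. Therefore $\Psi(s)=\omega(s/\log s)$.

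The conclusion is then immediate. The assumed reduction shows that a $2^{o(\Psi(s))}$ algorithm for sparse $(k,h)$-$\problemabbr{CSP}$ would give a $2^{o(N)}$ algorithm for sparse $3$-$\problemabbr{SAT}$. This is exactly the kind of reduction that \Cref{hyp:nocomp-formal} rules out.

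The main technical obstacle is making $\Psi$ a genuine function of $s$ alone, as \Cref{hyp:nocomp-formal} requires. Instances with the same bit length can have different $n$. My plan is to fix a canonical encoding in which an instance with parameter $n$ is always padded to length exactly $s(n):=c\,n(k\log n+h^k)$. Provided $s(n)$ is monotone, this makes $n$ recoverable from $s$, and padding only increases $s$ by a constant factor, so the estimate above is unaffected. If $s(n)$ fails to be injective, I would choose $\Psi(s)$ as the minimum of $n\log h$ over all $n$ with $s(n)=s$, and check that the $\omega(s/\log s)$ bound still holds uniformly.
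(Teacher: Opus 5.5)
Your proof is correct and follows essentially the paper's argument: bound the encoding length by $s=O(nB(n))$, then use the assumption on $\log h$ to get $n\log h=\omega(s/\log s)$, contradicting \Cref{hyp:nocomp-formal}. The paper's comparison is more direct, since monotonicity of $x/\log x$ needs only the upper bound on $s$; your lower bound $\log s\ge\log B(n)-O(1)$ is justified only after your padding step, and the paper skips padding (making $\Psi$ a function of $s$ alone) entirely.
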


\begin{proof}
	Consider a~sparse $(k,h)$-$\problemabbr{CSP}$ instance with $n$ variables.
	Each constraint has arity at most~$k$, and there are at most~$Cn$ constraints for some $C > 0$.
	Thus the constraint scopes can be encoded using $O_C(nk\log n)$ bits.
	Each constraint relation has arity at most~$k$ and is written as a~truth table of size at most~$h^k$.
	Since the number of constraints is at most~$Cn$, all truth tables use $O_C(nh^k)$ bits.
	Hence the total bit length is
	\[s=O_C(nB(n)).\]

	Now suppose that a~fine-grained reduction established an~$h^{\Omega(n)}$ lower bound for this $\problemabbr{CSP}$ family under $\cc{ETH}$.
	By the assumption, $n\log h=\omega(nB(n)/\log(nB(n)))$.
	Since $s=O_C(nB(n))$, this implies $n\log h=\omega(s/\log s)$, contradicting \Cref{hyp:nocomp-formal}.
\end{proof}

Note that for fixed~$k$, the condition holds for every $h=h(n)\to\infty$ with $h^k=O(\log n)$.
However, for constant~$h$, one has $n\log h=\Theta(n)$ and $s/\log s=\Theta(n)$, so it does not rule out $\cc{ETH}$ lower bounds.

\subsection{Succinct List-Extension Homomorphism Problem}

Suppose a~partial homomorphism has already mapped some neighbors of an~unmapped source vertex.
The vertices of~$H$ still available to that vertex are exactly the common neighborhood in~$H$ of the images of its mapped neighbors.
The lemma says that in a~degenerate target this common neighborhood is either small or is cut out by few constraints.

\begin{lemma}\label{lem:small-list}
	Let $D\in\mathbb Z_{\ge 0}$, and let $H$ be a loopless graph with $\operatorname{degen}(H)\le D$. For every set $C\subseteq V(H)$, define
	\[
			A(C)=\bigcap_{c\in C}N_H(c),
	\]
	with the convention $A(\varnothing)=V(H)$. Then either
	\[
			|A(C)|\le D\qquad\text{or}\qquad|C|\le D.
	\]
\end{lemma}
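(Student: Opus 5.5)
The plan is a contradiction argument on a single subgraph of~$H$. Suppose both $|A(C)|\ge D+1$ and $|C|\ge D+1$. Every vertex of $A(C)$ is adjacent to every vertex of~$C$, so the sets are disjoint. Indeed, if $x\in A(C)\cap C$, then $x\in N_H(x)$, which is impossible because $H$ is loopless.

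Now choose $C'\subseteq C$ and $A'\subseteq A(C)$ with $|C'|=|A'|=D+1$. Let $K$ be the subgraph of~$H$ with vertex set $C'\cup A'$ whose edges are all pairs $ca$ with $c\in C'$ and $a\in A'$. These are edges of~$H$ by the definition of $A(C)$, so $K$ is a subgraph of~$H$ isomorphic to $K_{D+1,D+1}$.

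In $K$, every vertex has degree exactly $D+1$. But $\operatorname{degen}(H)\le D$ means every nonempty subgraph of~$H$ has a vertex of degree at most~$D$, and $K$ is nonempty. This is a contradiction, so $|A(C)|\le D$ or $|C|\le D$.

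The convention $A(\varnothing)=V(H)$ needs no separate treatment. If $C=\varnothing$, then $|C|=0\le D$ holds directly. The argument above only uses $C\neq\varnothing$ implicitly, through the assumption $|C|\ge D+1\ge 1$.

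There is no real obstacle here. The only points that need care are the disjointness of $C$ and $A(C)$, which comes from looplessness, and using the subgraph form of the degeneracy definition rather than the induced-subgraph form. With the subgraph form, the non-induced complete bipartite subgraph $K$ is enough.
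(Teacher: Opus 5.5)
Your proposal is correct and follows essentially the same argument as the paper: looplessness makes $C$ and $A(C)$ disjoint, and if both have size at least $D+1$ you get a $K_{D+1,D+1}$ subgraph in which every vertex has degree $D+1$, contradicting $\operatorname{degen}(H)\le D$. Your remarks on the empty-$C$ convention and on using the subgraph (not induced-subgraph) form of degeneracy are accurate.
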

\begin{proof}
	Suppose not. Then $|A(C)|\ge D+1$ and $|C|\ge D+1$.
	Since $H$ is loopless,
	$A(C)\cap C=\varnothing$: if $c\in C$, then $c\notin N_H(c)$, so $c\notin A(C)$.

	Choose subsets $A'\subseteq A(C)$ and $C'\subseteq C$ with $|A'|=|C'|=D+1$. By
	definition of $A(C)$, every vertex of $A'$ is adjacent to every vertex of $C'$. Thus
	$H$ contains $K_{D+1,D+1}$ as a subgraph. This graph has minimum degree $D+1$, and
	hence degeneracy would be at least $D+1$.
\end{proof}

Succinct list extension is the bookkeeping problem for exactly the information that survives this compression.
Each source vertex carries either the short list of colors still available to it or a~short certificate of that list, and the task is to complete the partial homomorphism subject to these lists.

\begin{definition}
	Fix functions $D=D(n)$ and $T=T(n)$.
	An instance of $\problemabbr{SLEH}_{D,T}$ consists of:
	\begin{enumerate}
		\item a source graph $G$ on at most $n$ vertices and at most $Tn$ edges;
		\item a~target graph $H$ with $h\le n$ and $\operatorname{degen}(H)\le D$;
		\item for every $x\in V(G)$, a succinct list record of one of the following two
		forms:
		\begin{enumerate}
			\item an explicit list $S_x\subseteq V(H)$ with $|S_x|\le D$, interpreted as
			$L(x)=S_x$;
			\item an intersection certificate $C_x\subseteq V(H)$ with $|C_x|\le D$,
			interpreted as
			\[
				L(x)=\bigcap_{c\in C_x}N_H(c).
			\]
		\end{enumerate}
	\end{enumerate}
	The task is to decide whether there exists a homomorphism $\psi\colon G\to H$ such that
	$\psi(x)\in L(x)$ for every $x\in V(G)$.
\end{definition}

Since $h\le n$, a~target vertex can be named using $O(\log n)$ bits, so each list record costs $O(D\log n)$ bits.
The source edge set costs $O(Tn\log n)$ bits, and the target, being $D$-degenerate, has at most $Dh$ edges and so is stored in $O(Dh\log h)=O(Dn\log n)$ bits.
Hence every such instance has bit size
\[s=O((T+D)n\log n).\]

\subsection{Reduction}

\begin{theorem}\label{thm:scheduled-core}
	Let $D=D(n)\ge 1$ and $\tau=\tau(n)\ge 1$.
	There is a~fine-grained Turing reduction from $\problemabbr{HOM}$ instances $(G,H)$ with source size~$n$ and target size~$h$ satisfying
	\[h\le n,\qquad \operatorname{degen}(H)\le D\]
	to $\problemabbr{SLEH}_{D,\tau}$ such that every oracle query has bit size
	\[s=O(n(D+\tau)\log n),\]
	and the total number of oracle queries, as well as the total non-oracle work, is at most
	\[2^{O\left(n\log D+\frac n\tau(\log h+\log \tau)\right)}.\]
\end{theorem}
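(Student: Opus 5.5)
The plan is to interleave guessing with the list compression of \Cref{lem:small-list}. We maintain a partial map $\psi$ on a set $M\subseteq V(G)$ of mapped source vertices. For every unmapped vertex $x$, let $C_x=\psi(N_G(x)\cap M)$ be the set of distinct images of its already-mapped neighbours. The vertices still available to $x$ are exactly $A(C_x)=\bigcap_{c\in C_x}N_H(c)$. By \Cref{lem:small-list}, either $|A(C_x)|\le D$ or $|C_x|\le D$. In the first case $x$ carries an explicit list $S_x=A(C_x)$; in the second, the certificate $C_x$ itself. So at every moment each unmapped vertex has a succinct list record of $O(D\log n)$ bits, and edges between $M$ and $V(G)\setminus M$ can be forgotten once recorded in these lists.

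The branching procedure has two rules, applied while the unmapped part $G-M$ has a vertex of degree $>\tau$.
\begin{itemize}
\item \emph{Cheap rule, applied first.} If some unmapped $x$ has an explicit record, i.e.\ $|A(C_x)|\le D$, branch over its at most $D$ images and add it to $M$. Each vertex is branched this way at most once, so this contributes the factor $D^{n}=2^{n\log D}$.
\item \emph{Expensive rule.} Otherwise pick an unmapped $v$ with $\deg_{G-M}(v)>\tau$ and branch over all $h$ images of $v$, plus a factor $\mathrm{poly}(\tau)$ of bookkeeping for which of its neighbours are affected. This is where the $\log h+\log\tau$ term comes from.
\end{itemize}
When neither rule applies, $G-M$ has maximum degree at most $\tau$, hence at most $\tau n$ edges. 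We then emit an $\problemabbr{SLEH}_{D,\tau}$ query with source $G-M$, target $H$ (which has at most $Dh\le Dn$ edges), and the current records. Each mapped vertex whose image is fixed is checked directly against its mapped neighbours in polynomial time. The encoding estimate after the definition of $\problemabbr{SLEH}$ gives query size $O(n(D+\tau)\log n)$. Correctness is immediate: every homomorphism extending $\psi$ respects the lists, and conversely any list homomorphism of $G-M$ combined with $\psi$ is a homomorphism, since lists encode exactly the cross edges.

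The main obstacle is bounding the number of expensive steps on a branch by $O(n/\tau)$. I would use a potential argument. When $v$ is mapped to $c$, each of its more than $\tau$ unmapped neighbours $u$ either gains the new element $c$ in $C_u$, or already had $c$ there. Distinct-image growth is bounded: once $|C_u|>D$, \Cref{lem:small-list} forces $|A(C_u)|\le D$ and the cheap rule removes $u$. So the potential $\sum_u |C_u|$ increases by at most $n(D+1)$ in total.

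Repeated images are the real difficulty. To handle them, I would first try to also delete from $G-M$, via the record, the edges whose contribution is redundant. That is, I would show that after $v$ is mapped, every neighbour $u$ whose $C_u$ did not grow can have the edge $uv$ dropped from the active graph at no cost, since it is already encoded. Then each expensive step destroys more than $\tau$ active edges or raises the potential by more than $\tau$. If necessary, I would first restrict attention to a degeneracy-type ordering of $G$ so that active edges number $O(n\tau)$ amortized, which again yields $O(n/\tau)$ expensive steps up to the $\log\tau$ slack. If this amortization fails, the fallback is to guess in one shot a set of at most $n/\tau$ high-degree vertices together with their images, costing $\binom{n}{n/\tau}h^{n/\tau}=2^{O(\frac n\tau(\log h+\log\tau))}$, and to argue that the remainder has at most $\tau n$ edges.
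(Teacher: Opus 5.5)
There is a genuine gap, and it is exactly where you place the main obstacle. The statement allows only $O(n/\tau)$ branchings of width $h$, and none of your arguments establishes this.

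\begin{itemize}
\item The potential argument loses a factor $D$. The sum $\sum_u|C_u|$ can grow by up to $D+1$ per vertex before the cheap rule removes it. So even if every expensive step raised it by $\tau$, you would get up to $n(D+1)/\tau$ expensive steps and a factor $h^{O(nD/\tau)}$. For $\tau\approx D$, the choice made in \Cref{thm:unified-nocomp-formal}, this is $h^{\Theta(n)}$, i.e.\ no gain over brute force.
\item Steps whose image is already in the $C_u$'s are not controlled at all.
\item The edge-dropping idea is vacuous. Once $v\in M$, the edges $uv$ are no longer in $G-M$ anyway. The only amortization left is that each expensive step removes more than $\tau$ edges of $G-M$, which bounds the number of steps by $|E(G)|/\tau$. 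But $G$ need not be sparse: $G=K_{n/2,n/2}$ maps to $K_2$ (so $D=1$), yet it has $\Theta(n^2)$ edges and degeneracy $n/2$. A degeneracy ordering of $G$ therefore cannot give $O(n\tau)$ active edges.
\item The same example defeats the fallback. Deleting any $n/\tau$ vertices with $\tau\ge4$ leaves at least $n^2/8$ edges, far more than $\tau n$ unless $\tau=\Omega(n)$. To make the remainder sparse you must also assign the neighbours of the guessed vertices, at $O(\log D)$ rather than $\log h$ bits each. Nothing in your proposal guarantees their lists are that short. If a guessed vertex goes to a high-degree target vertex, such as the centre of a star, its neighbours' lists have size $h-1>D$ and the cheap rule never fires.
\end{itemize}

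The missing idea is to use a degeneracy ordering of the \emph{target}. Order $H$ as $w_1,\ldots,w_h$ so that each $w_i$ has a set $R_i$ of at most $D$ later neighbours. Work with the $\tau$-core $K$ of the residual graph $G[U]$ rather than its maximum degree; an empty core already gives fewer than $\tau n$ residual edges, which is all the query needs. Along the branch following a fixed solution $\varphi$, maintain the invariant that $\varphi$ maps $K$ into the suffix $\{w_i,\ldots,w_h\}$. If no core vertex goes to $w_i$, advance $i$. Otherwise the first core vertex $v$ with $\varphi(v)=w_i$ has at least $\tau$ neighbours in $K$, and by the invariant and looplessness they all must go to $R_i$. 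So one $h$-way choice (the pivot and its image) pays for $\tau+1$ fresh vertices, while its $\tau$ neighbours cost only $D^\tau$. Hence there are at most $n/(\tau+1)$ pivots. The paper guesses them upfront with their images as a pair $(P,\sigma)$, at cost $\sum_{p\le n/(\tau+1)}\binom np h^p=2^{O(n+\frac n\tau(\log h+\log\tau))}$, and every other assignment is a choice among at most $D$ options. Your cheap rule and the construction of the oracle query are fine; what is missing is this target-side ordering, which confines the neighbours of each expensive vertex to $D$ candidates.
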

\begin{proof}
	The idea is a~branching scheme that keeps the residual source graph sparse enough to hand to the oracle.
	We process the target in a~degeneracy order.
	At each step we either certify that the unmapped part of~$G$ has no dense core, in which case it is sparse and we ship it to the list-extension oracle, or we locate a~dense core and branch on its vertices.

	Fix an~ordering $x_1,\ldots,x_n$ of $V(G)$, and compute a~degeneracy ordering $w_1,\ldots,w_h$ of~$H$ such that every vertex has at most~$D$ neighbors to its right.
	For each target index $i \in [h]$ let
	\[R_i \coloneqq N_H(w_i)\cap\{w_{i+1},\ldots,w_h\},\qquad |R_i|\le D.\]

	Set $p_{\max}=\floor{\frac{n}{\tau+1}}$.
	The reduction first iterates over all sets $P\subseteq V(G)$ with $|P|\le p_{\max}$ and maps $\sigma\colon P\to[h]$.
	Now, fix some $P$ and $\sigma$.

	We then branch as follows.
	A~state of the branch is a~pair $(\alpha, i)$, where $M=\operatorname{dom}(\alpha)$ is~the set of currently mapped source vertices, $\alpha\colon M\to V(H)$ and $i\in[h+1]$.
	At every state we reject if $\alpha$ is not a partial homomorphism: if there is an~edge $uv\in E(G)$ with $u,v\in M$ and $\alpha(u)\alpha(v)\notin E(H)$, the branch is discarded.

	Let
	\[
			U=V(G)\setminus M
	\]
	be the current unmapped set, and let
	\[
			K=\operatorname{core}_\tau(G[U])
	\]
	be the $\tau$-core of the residual graph $G[U]$, i.e., the maximal induced subgraph
	of $G[U]$ of minimum degree at least $\tau$.

	If $K=\varnothing$, then $G[U]$ is $(\tau-1)$-degenerate. Therefore $|E(G[U])|\le \tau n$.
	We now output a~succinct list-extension oracle query. Its source graph is $G[U]$.
	For every $y\in U$, define the set of boundary vertices
	\[
			C_\alpha(y)=\{\alpha(x)\colon x\in M\cap N_G(y)\}.
	\]
	The vertices available to $y$ because of its already mapped neighbors are exactly
	\[
			A_\alpha(y)=\bigcap_{c\in C_\alpha(y)}N_H(c).
	\]
	If $|A_\alpha(y)|\le D$, give $y$ the explicit list
	record $A_\alpha(y)$. If $|A_\alpha(y)|>D$, then Lemma~\ref{lem:small-list} implies
	$|C_\alpha(y)|\le D$, and we give $y$ the intersection-certificate record
	$C_\alpha(y)$. This query is equivalent to asking whether $\alpha$ extends to a full
	homomorphism $G\to H$.

	Suppose now that $K\ne\varnothing$.
	If $i=h+1$, reject the branch.
	If there is no vertex $v\in P\cap V(K)$ with $\sigma(v)=i$, advance to state $(\alpha, i+1)$ and finish this branch.

	If such vertices exist, choose the first one and call it~$v$.
	Since $v\in K$, choose the first $\tau$ neighbors of~$v$ in~$K$ and call this set $S(v)$.
	Branch over all maps
	\[\beta\colon S(v)\to R_i.\]
	For each branch, set
	\[\alpha'=\alpha\cup\{v\mapsto w_i\}\cup\beta.\]
	If $\alpha'$ is inconsistent, discard it; otherwise recurse from $(\alpha',i)$.
	The index remains $i$ because other core vertices may also map to $w_i$.

	Soundness of the reduction is immediate from the construction of the
	list-extension query: if an~oracle query accepts, the list homomorphism on $G[U]$
	respects all residual edges and all boundary constraints to $M$, and therefore
	combines with $\alpha$ to give a~homomorphism $G\to H$.

	For completeness, suppose $\varphi\colon G\to H$ is a~homomorphism.
	We construct $P, \sigma$ and a~branch that finds some homomorphism $G \to H$.
	We follow a branch
	maintaining the invariant:
	\[
			\alpha(x)=\varphi(x)\text{ for every }x\in\operatorname{dom}(\alpha),
	\]
	and every vertex of the current $\tau$-core $K$ is mapped by $\varphi$ into the active
	suffix
	\[
			\{w_i,w_{i+1},\ldots,w_h\}.
	\]
	The invariant holds at the root $M = \varnothing, \alpha = \varnothing$.

	If $K=\varnothing$, the final oracle query is satisfied by $\varphi|_U$.
	If $K\ne\varnothing$ and no vertex of~$K$ maps to~$w_i$, increase~$i$.
	Otherwise choose the first vertex $v\in V(K)$ with $\varphi(v)=w_i$, add $v$ to~$P$, and set $\sigma(v)=i$.
	Let $S(v)$ be the first $\tau$ neighbors of~$v$ in~$K$.
	For every $u\in S(v)$, the edge $uv$ gives $\varphi(u)\in N_H(w_i)$; by the suffix invariant and looplessness, in fact $\varphi(u)\in R_i$.
	Thus the branch assigning $u\mapsto\varphi(u)$ for every $u\in S(v)$ is present.
	Extend $\alpha$ according to~$\varphi$ on $\{v\}\cup S(v)$ and continue with the same index~$i$.
	Each pivot step maps $\tau+1$ fresh vertices, so $P$ has at most $p_{\max}$ vertices at the end.
	Hence some branch reaches an~accepting oracle query for some $(P, \sigma)$.

	It remains to bound the oracle input bit size and the running time.
	At a~leaf, $G[U]$ has fewer than $\tau n$ edges, whose encoding costs $O(\tau n\log n)$ bits.
	The list records cost $O(Dn\log n)$ bits, and the $D$-degenerate target costs $O(Dh\log h) = O(Dn\log n)$ bits.
	Thus every query has bit size $O(n(D+\tau)\log n)$.

	The number of pairs $(P, \sigma)$ is at most
	\[\sum_{p\le p_{\max}}\binom np h^p\le 2^{O\left(n+\frac n\tau(\log h+\log \tau)\right)}.\]
	For a~fixed pair $(P, \sigma)$, the branches contribute at most $D^n=2^{O(n\log D)}$, because every source vertex is assigned at most once.
	This proves the stated bound.
\end{proof}

\subsection{No-Compression Barrier}

\begin{theorem}[Formal version of~\Cref{thm:unified-nocomp}]\label{thm:unified-nocomp-formal}
	Assume \Cref{hyp:nocomp-formal}.
	Let $h=h(n)\le n$ and $1\le D(n)\le n$, and let $\Lambda(n)=n\cdot\lambda(n)$ for some $\lambda(n)=\omega(\max\{D(n),\sqrt{\log h}\})$.
	Then there is no~fine-grained reduction from sparse $3$-$\problemabbr{SAT}$ on~$N$ variables to~$\problemabbr{HOM}$ restricted to~instances $(G,H)$ with source size~$n$, $|V(H)|\le h$, and $\operatorname{degen}(H)\le D(n)$, with the~following property: if~these instances admit an~algorithm running in~time $2^{o(\Lambda(n))}$, then sparse $3$-$\problemabbr{SAT}$ admits an~algorithm running in~time $2^{o(N)}$.
\end{theorem}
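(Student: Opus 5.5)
We argue by contradiction: a hypothetical reduction $\mathcal R$ with the stated property is composed with the Turing reduction of \Cref{thm:scheduled-core}. The result is a reduction from sparse $3$-$\problemabbr{SAT}$ to $\problemabbr{SLEH}_{D,\tau}$ whose implied lower bound exceeds the $s/\log s$ scale, which \Cref{hyp:nocomp-formal} forbids.

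Assume $\mathcal R$ exists. By \Cref{thm:scheduled-core}, a $2^{o(\Psi(s))}$-time algorithm for $\problemabbr{SLEH}_{D,\tau}$ on $s$-bit instances yields an algorithm for the restricted $\problemabbr{HOM}$ instances running in time
\[
2^{O\left(n\log D+\frac n\tau(\log h+\log\tau)\right)}\cdot 2^{o(\Psi(s))},\qquad s=O(n(D+\tau)\log n).
\]
We want this total to be $2^{o(\Lambda(n))}$ with $\Lambda(n)=n\lambda(n)$. Then the property of $\mathcal R$ gives a $2^{o(N)}$ algorithm for sparse $3$-$\problemabbr{SAT}$. So the chain $\mathcal R$ followed by \Cref{thm:scheduled-core} is a fine-grained reduction certifying a $2^{\Omega(\Psi(s))}$ lower bound for $\problemabbr{SLEH}$, and we must check that $\Psi(s)=\omega(s/\log s)$ can be arranged.

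The parameter choices are the core of the proof. Choose $\tau$ with $\tau=\omega(\log h/\lambda)$, or a little larger so that the term $\frac n\tau\log\tau$ is also $o(n\lambda)$, and simultaneously $\tau=o(\lambda)$. Set $\Psi(s)=n\lambda$, expressed as a function of $s$, which can be done since $s$ determines $n$ up to the fixed $D,\tau$. The constraints are as follows.
\begin{enumerate}
\item Branching cost: $n\log D=o(n\lambda)$ holds because $\lambda=\omega(D)\ge\omega(\log D)$, and $\frac n\tau(\log h+\log\tau)=o(n\lambda)$ needs $\tau\lambda=\omega(\log h)$.
\item Compression: $s/\log s=\Theta(n(D+\tau))$, so $\Psi=n\lambda=\omega(s/\log s)$ needs $\lambda=\omega(D)$ and $\lambda=\omega(\tau)$.
\end{enumerate}
Both requirements on $\tau$ are met by $\tau=\lceil\sqrt{\log h}\rceil+1$. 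Indeed, $\lambda=\omega(\sqrt{\log h})$ gives $\tau\lambda=\omega(\log h)$ and $\tau=o(\lambda)$; this is exactly where the hypothesis $\lambda=\omega(\sqrt{\log h})$ enters. The $\log\tau$ term is harmless because $\log\tau\le\log h$ for $h\ge 2$ (if $h$ is constant, everything is trivial with $\tau=O(1)$). In the degenerate case $D+\tau$ small, we check that $s/\log s$ is still $\Theta(n(D+\tau))$ up to constants, using $\log s=\Theta(\log n)$ since $D,\tau\le n$.

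The main obstacle is making the composition a bona fide fine-grained reduction in the sense of \Cref{hyp:nocomp-formal}. Two points need care. First, the Turing reduction makes $2^{o(n\lambda)}$ queries, each of which must be solved in $2^{o(\Psi(s))}$ time, so the total remains $2^{o(n\lambda)}$. Second, the query size $s$ is a function of $n$ alone, which lets us treat $o(\Psi(s))$ and $o(n\lambda(n))$ interchangeably. If $\lambda$ is not monotone, we pad the query sizes so that $s\mapsto n$ is well defined. With this in place, the composed reduction establishes under $\cc{ETH}$ a lower bound $2^{\Omega(\Psi(s))}$ with $\Psi=\omega(s/\log s)$ for $\problemabbr{SLEH}_{D,\tau}$, contradicting \Cref{hyp:nocomp-formal}.
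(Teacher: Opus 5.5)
Your proposal is correct and follows the paper's proof. In both, the hypothetical reduction is composed with Theorem~\ref{thm:scheduled-core}, with $\tau$ chosen so that the branching overhead is $2^{o(n\lambda)}$ while the query size satisfies $s/\log s=O(n(D+\tau))=o(n\lambda)$; this yields a lower bound above the $s/\log s$ scale, contradicting \Cref{hyp:nocomp-formal}.

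The differences are minor. The paper takes $\tau=\lceil\max\{D,\sqrt{\log h}\}\rceil$, so that $s=O(n\tau\log n)$ and the overhead is $O(n\tau)$ in one line, whereas your $\tau=\lceil\sqrt{\log h}\rceil+1$ works equally well. Your remark about padding query sizes, so that $\Psi$ is a genuine function of $s$, adds rigor that the paper leaves implicit.
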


\begin{proof}
	Suppose toward contradiction that such a~fine-grained reduction establishes the lower bound $\Lambda(n) = n \cdot \omega(\max \{ D(n), \sqrt{ \log h } \})$.
	Set $\tau=\ceil {\max \{ D(n), \sqrt{ \log h } \}}$, and apply Theorem~\ref{thm:scheduled-core}.
	Hence every oracle query has bit size $s \coloneqq O(n \tau \log n)$.

	The overhead exponent is
	\[O\left(n\log D+\frac n\tau(\log h+\log \tau)\right) = O(n \cdot \tau).\]
	Since $\tau\le n$, each query size satisfies $s=O(n\tau\log n)$.
	Thus $\Lambda(n)=n\lambda(n)=\omega(n\tau)=\omega(s/\log s)$, while the reduction overhead is $2^{o(\Lambda(n))}$.
	Therefore an~algorithm for~the queried succinct list-extension instances with running time $2^{o(\Lambda(n))}$ would solve the promised $\problemabbr{HOM}$ instances in~time $2^{o(\Lambda(n))}$.
	Composing this with the assumed $3$-$\problemabbr{SAT}$ reduction would give a~$2^{o(N)}$ algorithm for $3$-$\problemabbr{SAT}$ on $N$ variables, contradicting $\cc{ETH}$.
	Equivalently, the composed proof would establish a~lower bound above the $s/\log s$ scale, contradicting \Cref{hyp:nocomp}.
\end{proof}

\section{Generated Formulas and Hard Witnesses}
\label{sec:generated-hard-witnesses}

This section shows that a~strong refutation of \Cref{hyp:nocomp} implies circuit lower bounds.
The~result of~Impagliazzo, Paturi, and~Zane~\cite{IPZ01} shows that $3$-$\problemabbr{SAT}$ remains hard even when the~number of~clauses is~linear.
Therefore, in~some sense, it provides a~generator that takes $O(n \log n)$ bits as~input and~outputs a~$3$-$\problemabbr{SAT}$ formula on~$n$ variables that is~hard under~$\cc{ETH}$.
Probably, the~most natural way to~refute \Cref{hyp:nocomp} is~to construct a~more efficient generator of~hard $3$-$\problemabbr{SAT}$ formulas.

We show that if a~short seed can generate hard formulas, then the generated formulas must sometimes have satisfying assignments of high circuit complexity.
Those hard assignments can then be exposed one bit at a~time by a~language in $\P^{\NP}$.
Any small circuit for this language, after hardwiring the generator seed, would give a~small circuit for the hard satisfying assignment, which is impossible.

The proof has three steps.
First we formally state what we mean by generators of hard formulas.
Second we show that this generator forces satisfying assignments of high circuit complexity: otherwise one could enumerate all small circuits, view each as an~assignment, and solve the generated formulas in $2^{o(N)}$ time.
Finally we package the lexicographically first satisfying assignment into a~language in $\P^{\NP}$, which is an~easy part.

\begin{definition}
	Fix $0<\alpha\le 1$ and set $r_\alpha(N)=\ceil{N^\alpha}$.
	An~\emph{$\alpha$-compressed hard formula generator} is a~uniform family of maps
	\[G_N\colon\{0,1\}^{r_\alpha(N)}\to\{\text{$3$-CNF formulas over $N$ variables}\}\]
	such that the following hold.
	\begin{itemize}
		\item $G_N(y)$ is computable in time $\operatorname{poly}(N)$.
		\item If there is an~algorithm $A$ and a~function $t(N)=2^{o(N)}$ such that, for every sufficiently large $N$ and every formula $F\in\operatorname{Im}(G_N)$, the algorithm $A$ on input $F$ decides whether it is satisfiable in time $t(N)$, then $\cc{ETH}$ is false.
	\end{itemize}
	Equivalently, assuming $\cc{ETH}$, $\problemabbr{SAT}$ restricted to $\operatorname{Im}(G_N)$ has no uniform $2^{o(N)}$-time algorithm.
\end{definition}

\begin{lemma}
\label{lem:generated-hard-witness}
	Assume $\cc{ETH}$, and let $G=\{G_N\}$ be an~$\alpha$-compressed hard formula generator.
	Let $s  \colon  \mathbb{N} \to \mathbb{N}$ be a~polynomial-time computable function satisfying
	\[s(N)\log(s(N)+\log N)=o(N).\]
	Then, for infinitely many $N$, there is a~seed $y_N\in\{0,1\}^{r_\alpha(N)}$ such that $G_N(y_N)$ is satisfiable, but every satisfying assignment $a\in\{0,1\}^N$ of $G_N(y_N)$ satisfies $\cc{CC}[a]>s(N)$.
\end{lemma}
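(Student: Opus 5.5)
We argue by contradiction: we build an algorithm that decides satisfiability of every formula in $\operatorname{Im}(G_N)$ in time $2^{o(N)}$, which the generator definition forbids under $\cc{ETH}$. So suppose the conclusion fails. Then for all sufficiently large $N$ and every seed $y\in\{0,1\}^{r_\alpha(N)}$, either $G_N(y)$ is unsatisfiable or it has a satisfying assignment $a$ with $\cc{CC}[a]\le s(N)$. Equivalently, for every formula $F\in\operatorname{Im}(G_N)$ (the seed is irrelevant once $F$ is given), $F$ is satisfiable if and only if some circuit of size at most $s(N)$ on $k=\lceil\log_2 N\rceil$ inputs has a truth table whose first $N$ bits satisfy $F$.

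The algorithm $A$ takes a formula $F$ over $N$ variables. It enumerates all circuits of size at most $s(N)$ on $k$ inputs. For each circuit it evaluates all $N$ relevant truth-table positions to get a candidate $a\in\{0,1\}^N$, and it checks whether $a$ satisfies $F$. It accepts iff some candidate satisfies $F$. We can compute $s(N)$ in polynomial time, so the algorithm is uniform, as the definition requires. Correctness holds on every $F\in\operatorname{Im}(G_N)$ for all large $N$, by the dichotomy above: unsatisfiable formulas are always rejected, and satisfiable ones have a small-circuit witness.

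The main step is counting circuits. A circuit of size $m=s(N)$ is a DAG with $m$ gates. Each gate is described by its Boolean operation ($16$ choices) and two predecessors, each chosen among $m$ gates, $k$ variables and $2$ constants. There are also $m$ choices for the output. So the number of circuits is at most $\bigl(16(m+k+2)^2\bigr)^m\cdot m=2^{O(m\log(m+\log N))}$. For each circuit, evaluation takes $\operatorname{poly}(m)\cdot N$ time and checking $F$ takes $\operatorname{poly}(N)$ time; we may assume $F$ has polynomial size since $G_N$ runs in $\operatorname{poly}(N)$ time. The total time is therefore
\[
2^{O(s(N)\log(s(N)+\log N))}\cdot\operatorname{poly}(N)=2^{o(N)}
\]
by the hypothesis on $s$. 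This is a uniform algorithm deciding satisfiability on $\operatorname{Im}(G_N)$ in time $t(N)=2^{o(N)}$ for all sufficiently large $N$, so by the definition of an $\alpha$-compressed hard formula generator $\cc{ETH}$ fails, a contradiction.

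The delicate point is the quantifier bookkeeping. The generator definition only refutes algorithms that are correct for every sufficiently large $N$. Negating ``for infinitely many $N$ a bad seed exists'' gives exactly ``for all large $N$, no bad seed exists'', so the enumeration algorithm is correct on all large $N$, as needed. A second point is padding: some $N$ may not be a power of two, but the remaining truth-table bits are ignored by the definition of $\cc{CC}[a]$, so nothing changes. The argument never uses the seed length $r_\alpha$; it only uses that the formulas come from $\operatorname{Im}(G_N)$.
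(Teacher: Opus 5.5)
Your proof is correct and takes the same approach as the paper. Both negate the conclusion, enumerate all circuits of size at most $s(N)$ on $\lceil\log_2 N\rceil$ inputs as candidate assignments, and obtain a $2^{O(s(N)\log(s(N)+\log N))}\operatorname{poly}(N)=2^{o(N)}$-time algorithm on $\operatorname{Im}(G_N)$, which contradicts the generator's hardness under $\cc{ETH}$. Your explicit circuit count and quantifier bookkeeping only spell out details the paper leaves implicit.
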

\begin{proof}
	Suppose the conclusion fails.
	Then, for all sufficiently large $N$, every satisfiable formula in $\operatorname{Im}(G_N)$ has at least one satisfying assignment of circuit complexity at most $s(N)$.

	This gives a~$2^{o(N)}$-time algorithm for satisfiability on $\operatorname{Im}(G_N)$.
	On input $F\in\operatorname{Im}(G_N)$, enumerate all Boolean circuits on $\ceil{\log_2 N}$ inputs and of size at most~$s(N)$.
	The number of such circuits is at most
	\[2^{O(s(N)\log(s(N)+\log N))}=2^{o(N)}.\]
	For each circuit $C$, form the assignment $a_C=(C(0),C(1),\ldots,C(N-1))$ and test whether $F(a_C)=1$.
	Since $F$ has size $\operatorname{poly}(N)$, each test takes $\operatorname{poly}(N)$ time.

	If some assignment $a_C$ satisfies $F$, accept; otherwise reject.
	The algorithm is correct because a~satisfiable generated formula has a~small-circuit satisfying assignment by the contrary assumption, while an~unsatisfiable formula has no satisfying assignment at all.
	Its running time is
	\[2^{o(N)}\operatorname{poly}(N).\]
\end{proof}

Fix an~$\alpha$-compressed hard formula generator $G=\{G_N\}$.
We now turn satisfying assignments into a~language by asking for one indexed bit of the lexicographically first witness.
Use any fixed effective self-delimiting encoding of triples $\langle N,y,i\rangle$ such that, for fixed $N$ and~$y$, all choices of the index~$i \in \{ 0, 1 \}^{\ceil{\log_2 N}}$ have the same total input length.

\begin{definition}
	The language $\problemabbr{LexWit}_G$ is defined as follows.
	A~valid input is a~tuple $\langle N,y,i\rangle$ with $y\in\{0,1\}^{r_\alpha(N)}$ and $i\in\{0,1\}^{\ceil{\log_2 N}}$.
	Let $F=G_N(y)$.
	\begin{itemize}
		\item If $F$ is unsatisfiable, then $\problemabbr{LexWit}_G(N,y,i)=0$.
		\item If $F$ is satisfiable, let $a^{N,y}\in\{0,1\}^N$ be the lexicographically first satisfying assignment of~$F$, then
		\[\problemabbr{LexWit}_G(N,y,i)=\begin{cases} a^{N,y}_i & \text{if } i<N,\\ 0 & \text{if } i\ge N. \end{cases}\]
	\end{itemize}
\end{definition}

Let $\ell(N)$ denote the input bit size of $\problemabbr{LexWit}_G$, so $\ell(N) \coloneqq \Theta(N^{\alpha})$.

\begin{lemma}
	For every fixed $0<\alpha\le 1$,
	\[\problemabbr{LexWit}_G\in \P^{\NP}.\]
\end{lemma}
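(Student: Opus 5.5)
The plan is to give a polynomial-time oracle machine with an $\NP$ oracle that decides $\problemabbr{LexWit}_G$ by the standard prefix-search (self-reduction) for the lexicographically first satisfying assignment. On input $\langle N,y,i\rangle$, we first parse the tuple and reject malformed inputs. Since $|y|=r_\alpha(N)=\ceil{N^\alpha}$, the input length is $\ell(N)=\Theta(N^\alpha)$, so $N\le \operatorname{poly}(\ell(N))$ for fixed $\alpha>0$. Hence any $\operatorname{poly}(N)$-time computation is polynomial in the input length. This is the one point where $\alpha>0$ being fixed is used, and it is the only potential obstacle: we must check that the encoding of $N$ inside the tuple cannot make the input much shorter than $N^\alpha$. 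That follows because the seed $y$ itself has length $\ceil{N^\alpha}$.

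Next we compute $F=G_N(y)$ in $\operatorname{poly}(N)$ time, using uniformity of the generator. We then query the $\NP$ oracle for $\problemabbr{SAT}$ on $F$; if $F$ is unsatisfiable, we output $0$. Otherwise we build the lexicographically first satisfying assignment $a^{N,y}$ bit by bit. Having fixed a prefix $a_0\ldots a_{j-1}$ that extends to some satisfying assignment, we ask the oracle whether $F$ restricted by $x_0=a_0,\ldots,x_{j-1}=a_{j-1},x_j=0$ is satisfiable. This restricted formula is itself a CNF obtained by substitution, so the query is a $\problemabbr{SAT}$ instance. If the answer is yes, we set $a_j=0$; otherwise $a_j=1$, which is valid because the prefix extends by the invariant.

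By induction on $j$, the invariant ``the current prefix is the prefix of the lexicographically least satisfying assignment'' is maintained. After $N$ steps we obtain exactly $a^{N,y}$. We then output $a^{N,y}_i$ if $i<N$, and $0$ otherwise.

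The procedure makes $N+1$ oracle queries, each of size $\operatorname{poly}(N)$, and does $\operatorname{poly}(N)$ additional work. By the bound $N\le\operatorname{poly}(\ell(N))$, this is polynomial in the input length, so $\problemabbr{LexWit}_G\in\P^{\NP}$.
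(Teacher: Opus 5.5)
Your proposal is correct and follows the paper's proof: both compute $F=G_N(y)$, use at most $N+1$ queries to an $\NP$ oracle to find the lexicographically first satisfying assignment by prefix extension, and conclude polynomial time from $N=\operatorname{poly}(\ell)$. Your explicit remark that the seed length $\lceil N^\alpha\rceil$ forces $N\le\operatorname{poly}(\ell)$ spells out a step the paper only asserts.
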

\begin{proof}
	On input $\langle N,y,i\rangle$, first check the encoding validity and reject invalid inputs.
	Let $F=G_N(y)$.
	Use an~$\NP$ oracle to compute the lexicographically first satisfying assignment of~$F$, if one exists.
	First ask whether $F$ is satisfiable; if not, output~$0$.
	Otherwise, build the assignment bit by bit.
	For a~prefix $p\in\{0,1\}^j$, the oracle query asks whether
	\[\exists a\in\{0,1\}^N\quad a\text{ extends }p\text{ and }F(a)=1.\]
	Using at most $N+1$ such oracle queries, one obtains the lexicographically first satisfying assignment.
	Since $N=\operatorname{poly}(\ell)$, this is a~polynomial-time computation with an~$\NP$ oracle.
\end{proof}

Now, the idea is the following: if $\problemabbr{LexWit}_G$ has small circuits on input length $\ell$, then fixing $N$ and a~hard seed~$y$ inside such a~circuit would leave a~small circuit whose only input is the index~$i$.
That circuit would output the satisfying assignment bit $a_i$, contradicting \Cref{lem:generated-hard-witness}.

\begin{theorem}
\label{thm:hard-generator-pnp-size}
	Fix $0<\alpha\le 1$.
	Assume $\cc{ETH}$, and suppose that an~$\alpha$-compressed hard formula generator $G=\{G_N\}$ exists.
	Let $q\colon\mathbb N\to\mathbb R_{>0}$ be any polynomial-time computable function such that $q(m)=\omega(\log m)$ and
	\[\frac{m^{1/\alpha}}{q(m)} = \omega(1).\]
	Then
	\[\P^{\NP}\nsubseteq\cc{SIZE}\left[\frac{m^{1/\alpha}}{q(m)}\right].\]
\end{theorem}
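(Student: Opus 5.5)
We argue by contradiction, following the outline sketched before the theorem. Suppose $\problemabbr{LexWit}_G\in\cc{SIZE}[m^{1/\alpha}/q(m)]$, so for every sufficiently large input length $m$ there is a circuit $C_m$ of size at most $m^{1/\alpha}/q(m)$ computing $\problemabbr{LexWit}_G$ on inputs of length~$m$. We set the hardness threshold for \Cref{lem:generated-hard-witness} to
\[
	s(N)\coloneqq \frac{\ell(N)^{1/\alpha}}{q(\ell(N))},
\]
where $\ell(N)=\Theta(N^\alpha)$ is the input length of $\problemabbr{LexWit}_G$ for parameter~$N$. This function is polynomial-time computable because $q$ and the encoding are.

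First, we check the hypothesis of \Cref{lem:generated-hard-witness}. Since $\ell(N)^{1/\alpha}=\Theta(N)$ and $q(\ell(N))=\omega(\log \ell(N))=\omega(\log N)$, we get $s(N)=O(N/q(\ell(N)))=o(N/\log N)$. Hence $\log(s(N)+\log N)=O(\log N)$ and $s(N)\log(s(N)+\log N)=o(N)$.

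The lemma then gives infinitely many $N$ with a seed $y_N$ such that $G_N(y_N)$ is satisfiable, yet every satisfying assignment $a$ of it has $\cc{CC}[a]>s(N)$. This applies in particular to the lexicographically first satisfying assignment $a^{N,y_N}$.

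For such an $N$ that is also large enough, let $m=\ell(N)$. The encoding fixes the input length for fixed $N,y$, so every $\langle N,y_N,i\rangle$ with $i\in\{0,1\}^{\ceil{\log_2 N}}$ has length $m$. Take $C_m$ and hardwire the bits encoding $N$ and $y_N$ as constants. This leaves a circuit on the $\ceil{\log_2 N}$ index bits whose first $N$ truth-table values are exactly $a^{N,y_N}$. Hardwiring does not increase size, since constant inputs can be substituted or propagated. Thus
\[
	\cc{CC}[a^{N,y_N}]\le m^{1/\alpha}/q(m)=s(N),
\]
which contradicts the choice of $y_N$. Hence $\problemabbr{LexWit}_G\notin\cc{SIZE}[m^{1/\alpha}/q(m)]$. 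Since $\problemabbr{LexWit}_G\in\P^{\NP}$ by the preceding lemma, the theorem follows.

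The main obstacle is bookkeeping the input lengths. We must make sure that the input lengths $m=\ell(N)$ reached by the infinitely many hard $N$ are all sufficiently large, which holds since $\ell(N)\to\infty$. We must also match the definition of $\cc{CC}[a]$, which uses exactly $\ceil{\log_2 N}$ inputs; this is guaranteed by the fixed index length in the encoding. Finally, the definition of $\cc{SIZE}$ requires the bound for all sufficiently large $m$, so failure at infinitely many lengths $m$ suffices to place $\problemabbr{LexWit}_G$ outside the class. The condition $m^{1/\alpha}/q(m)=\omega(1)$ is used only to make the size class nontrivial.
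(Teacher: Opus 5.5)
Your proof is correct and is essentially the paper's argument: apply \Cref{lem:generated-hard-witness}, hardwire $N$ and the hard seed $y_N$ into a small circuit for $\problemabbr{LexWit}_G$, and obtain a small circuit for the lexicographically first witness. The only difference is the threshold: you take $s(N)$ equal to the size bound $\ell(N)^{1/\alpha}/q(\ell(N))$ (rounded down, so that it is integer-valued), which makes the contradiction exact, whereas the paper takes $s(N)=\lfloor N/\sqrt{q(\ell(N))\log N}\rfloor$ to leave asymptotic slack absorbing the $O(\cdot)$ in the hardwired circuit's size; both choices satisfy $s(N)\log(s(N)+\log N)=o(N)$ because $q(\ell(N))=\omega(\log N)$.
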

\begin{proof}
	Recall that $\ell(N)=\Theta(N^\alpha)$ is the input length of $\problemabbr{LexWit}_G$ corresponding to formulas on $N$ variables.
	Apply Lemma~\ref{lem:generated-hard-witness} with
	\[s(N)=\left\lfloor\frac{N}{\sqrt{q(\ell(N))\log N}}\right\rfloor.\]
	This threshold is polynomial-time computable, and the lemma applies since $q(\ell(N))=\omega(\log N)$ implies
	\[s(N)\log(s(N)+\log N)=O\left(\frac{N\log N}{\sqrt{q(\ell(N))\log N}}\right)=o(N).\]
	Hence, for infinitely many $N$, there is a~seed $y_N\in\{0,1\}^{r_\alpha(N)}$ such that $F_N=G_N(y_N)$ is satisfiable, but every satisfying assignment of $F_N$ has circuit complexity greater than $s(N)$.
	Let $a_N$ be the lexicographically first satisfying assignment of~$F_N$.

	Suppose, toward contradiction, that $\problemabbr{LexWit}_G\in\cc{SIZE}\left[m^{1/\alpha}/q(m)\right]$.
	For all sufficiently large $N$ from the infinite hard set, take a~circuit for $\problemabbr{LexWit}_G$ on inputs of length $\ell(N)$ and hardwire the bits encoding $N$ and~$y_N$.
	The resulting circuit takes only the index $i$ as input and computes the string $a_N$.
	Its size is
	\[O\left(\frac{\ell(N)^{1/\alpha}}{q(\ell(N))}\right)=O\left(\frac{N}{q(\ell(N))}\right)=o\left(\frac{N}{\sqrt{q(\ell(N))\log N}}\right)=o(s(N)).\]
	For all sufficiently large hard $N$, this contradicts $\cc{CC}[a_N]>s(N)$.
\end{proof}

\begin{restatable}{theorem}{generatedFormulaPNPSuperlinear}
	Fix $0<\eps<1$.
	Assume $\cc{ETH}$, and suppose there is a~uniform family of maps
	\[G_N\colon\{0,1\}^{N^{1-\eps}}\to\{\text{$3$-CNF formulas on variables }x_1,\ldots,x_N\}\]
	such that $G_N(y)$ is computable in time $\operatorname{poly}(N)$, and \problemabbr{SAT} restricted to $\operatorname{Im}(G_N)$ has no $2^{o(N)}$-time algorithm.
	Then, for every constant $0<\delta<\frac{\eps}{1-\eps}$,
	\[\P^{\NP}\not\subseteq\cc{SIZE}\left[m^{1+\delta}\right].\]
\end{restatable}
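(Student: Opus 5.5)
The plan is to reduce the statement to \Cref{thm:hard-generator-pnp-size} with $\alpha=1-\eps$, choosing $q$ so that the resulting size bound dominates $m^{1+\delta}$.

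First I check that the given family is an $\alpha$-compressed hard formula generator with $\alpha=1-\eps$. The seed length $N^{1-\eps}$ matches $r_\alpha(N)=\ceil{N^\alpha}$ up to rounding, which I handle by padding or truncating seeds. Computability in $\operatorname{poly}(N)$ time is given. The absence of a $2^{o(N)}$-time algorithm on $\operatorname{Im}(G_N)$ is exactly the hardness clause of the definition, read under $\cc{ETH}$. So the hypotheses of \Cref{thm:hard-generator-pnp-size} hold.

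Next I choose $q$. Since $1/\alpha=1/(1-\eps)=1+\frac{\eps}{1-\eps}$ and $\delta<\frac{\eps}{1-\eps}$, I set $\eta\coloneqq \frac{1}{1-\eps}-(1+\delta)>0$ and $q(m)\coloneqq m^{\eta}$. This $q$ is polynomial-time computable; taking $\lceil m^{\eta/2}\rceil^2$ or a similar rounding keeps it integral if needed. It satisfies $q(m)=\omega(\log m)$, and
\[
\frac{m^{1/\alpha}}{q(m)}=m^{1+\delta}=\omega(1).
\]
\Cref{thm:hard-generator-pnp-size} then yields $\P^{\NP}\not\subseteq\cc{SIZE}[m^{1+\delta}]$, which is the claim.

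The only delicate point is bookkeeping. The input length $\ell(N)$ of $\problemabbr{LexWit}_G$ includes the encoding of $N$ and $i$, which adds $O(\log N)$ bits, so $\ell(N)=\Theta(N^{1-\eps})$ still holds as the earlier theorem assumes. The constant factors are absorbed because that theorem's proof already works with $O(\cdot)$ on the size bound, and the strict gap $\eta>0$ leaves room for them.
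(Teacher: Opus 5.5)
Your proposal is correct and follows the paper's proof: both apply Theorem~\ref{thm:hard-generator-pnp-size} with $\alpha=1-\eps$ and a power function for $q$. The only difference is cosmetic. The paper takes $q(m)=m^{\gamma/2}$ (its $\gamma$ is your $\eta$), obtaining non-membership in $\cc{SIZE}[m^{1+\delta+\gamma/2}]\supseteq\cc{SIZE}[m^{1+\delta}]$, while your $q(m)=m^{\eta}$ lands exactly on $m^{1+\delta}$. Keep $q$ real-valued, as the theorem allows: rounding $q$ upward would push the bound slightly below $m^{1+\delta}$, and you would then need the constant-factor slack inside that theorem's proof rather than a direct citation.
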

\begin{proof}
	Apply Theorem~\ref{thm:hard-generator-pnp-size} with $\alpha=1-\eps$.
	Let
	\[\gamma\coloneqq\frac{1}{1-\eps}-(1+\delta)>0.\]
	Choose the function $q(m)=m^{\gamma/2}$.
	Then $q(m)=\omega(\log m)$ and
	\[\frac{m^{1/\alpha}}{q(m)}=m^{1+\delta+\gamma/2}.\]
	Theorem~\ref{thm:hard-generator-pnp-size} gives a~language $\problemabbr{LexWit}_G\in \P^{\NP}$ that is not in $\cc{SIZE}[m^{1+\delta+\gamma/2}]$.
\end{proof}

\section{Open Problems}
\label{sec:conclusion}

The~results of this paper show that target degeneracy has a~genuinely intermediate behavior: it~is too rich for the~known algorithms, but the~available lower-bound machinery still leaves the~polynomial-target regime unresolved.
\Cref{thm:main-lb-formal} shows that the~dependence on $\operatorname{degen}(H)$ must be linear in the~exponent whenever this parameter grows, while \Cref{thm:constant-deg-lb-formal} shows that even degeneracy~$2$ can be hard when the~target is quasi-polynomial.
At the~same time, \Cref{thm:unified-nocomp-formal} suggests that substantially stronger lower bounds are unlikely.

The~most direct algorithmic question is whether the~trivial $O^*(h^n)=2^{O(n\log h)}$ upper bound can be improved using degeneracy.
Is there an~algorithm solving $\problemabbr{HOM}$ in time $O^*\left(2^{\operatorname{poly}(\operatorname{degen}(H))\cdot n}\right)$?
Such an~algorithm would match \Cref{thm:main-lb-formal} up to the~dependence on $\operatorname{degen}(H)$ and would separate degeneracy from target parameters that admit $p(H)^{O(n)}$ algorithms.

The~first concrete case is bounded degeneracy.
For every fixed $d\ge2$, can $\problemabbr{HOM}$ on instances with $\operatorname{degen}(H)\le d$ and $h\le\operatorname{poly}(n)$ be solved in time $O^*\left(2^{O_d(n)}\right)$?
\Cref{thm:constant-deg-lb-formal} excludes this only when quasi-polynomial targets are allowed, and targets of degeneracy at most~$1$ are forests and hence admit plain-exponential algorithms.
Thus the~question is open already for $d=2$ and polynomial-size targets.

A~second direction is to understand whether the~no-compression hypothesis has its own unconditional evidence.
\Cref{app:lookup-barrier} explains why the~lookup-table argument behind the~barrier of~Kulikov and Mihajlin~\cite{KM24} does not directly give such evidence for \cc{ETH}.
In short, their argument turns a~claimed $2^{\Omega(s)}$ lower bound into a~uniform $2^{(1-\Omega(1))N}$ algorithm for the $N$-variable \problemabbr{SAT} problem, by branching on all but a~constant fraction of the~variables and answering the~remaining short queries from a~precomputed table.
This contradicts \cc{SETH}, but it does not contradict \cc{ETH}, which rules out only $2^{o(N)}$ algorithms.
In the~spirit of~\cite{KM24}, can one nevertheless prove a~barrier showing that a~disproof of \Cref{hyp:nocomp} would imply an~unexpected consequence?
One concrete form of this question is whether there is a~generator of hard $3$-$\problemabbr{SAT}$ formulas with a~linear number of bits\footnote{This question was raised by Daniel Lokshtanov, personal communication.}.

\section*{Acknowledgments}
The authors used ChatGPT~5.5 to~assist in~strengthening the constructions in~\Cref{sec:linear-lb} and in~checking the lower-bound calculations in~\Cref{sec:generated-hard-witnesses}.
The resulting suggestions were reviewed by~the authors and incorporated only after independent verification.
The authors assume responsibility for all content.

\appendix

\section{Bounded Chromatic Number}
\label{sec:bounded-chi-poly-target}

This section notes that the bounded chromatic number lower bound of~\cite{FGKM15} can be strengthened via ideas in~\cite{CFGKMPS17}.

\begin{restatable}{theorem}{thmConstChromatic}
    \label{thm:bounded-chi-poly-target}
    Unless $\cc{ETH}$ fails, for every constant $\alpha>0$ there is a~constant $\rho=\rho(\alpha)>0$ such that no~algorithm decides $\problemabbr{HOM}$ in~time
    \[O^*\left(n^{\rho \cdot n}\right)\]
    on~instances $(G,H)$ satisfying $h\le n^\alpha$ and $\chi(H)\le15$.
\end{restatable}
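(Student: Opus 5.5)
The plan is to combine the bucketing reduction of~\cite{CFGKMPS17} with the bounded-chromatic list-removal gadget of~\cite{FGKM15}, and to tune the bucket size to the target-size budget $h\le n^\alpha$. We start from \problemname{$3$-Coloring} on an $N$-vertex graph $Q$ with $\Delta(Q)\le4$ (\Cref{thm:bdcolor}), which has no $2^{\delta N}$ algorithm. We partition $V(Q)$ into buckets of size $r=\lfloor c\,\alpha\log N\rfloor$ for a small constant $c>0$. This gives $O(N/r)$ buckets, each of which has only $3^r\le N^{\alpha/2}$ possible colorings.

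First, we build a list-homomorphism instance in which each bucket becomes one source vertex whose list consists of its $3^r$ color strings. This source part has $n_0=O(N/\log N)$ vertices. As in \Cref{lem:bucket-labels}, the bucket graph has maximum degree at most $4r$, so we label buckets properly with $O(r)$ labels. We then want the target to consist of vertices $(\ell,c)$, with $\ell$ a label and $c\in[3]^r$, and an edge between $(a,c)$ and $(b,c')$ exactly when the two strings are compatible. Following~\cite{CFGKMPS17}, we will choose the bucketing and the coordinate numbering inside buckets so that the compatibility relation depends only on the label pair $\{a,b\}$. The route is to route every inter-bucket edge through a fixed coordinate pattern determined by the labels: refine the labels so that all bucket pairs carrying the same label pair use the same coordinate pairs, allowing a constant blow-up in the number of source vertices. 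Then the target has $O(r^2)\cdot 3^{O(r)}=N^{O(c\alpha)}$ vertices. The intermediate target is $O(r)$-partite, but it is not yet bounded-chromatic. So, instead of joining bucket-states directly, we attach them through small test vertices that form a bipartite pattern, in the spirit of the state/test split of \Cref{subsec:list-instance}. This keeps the list target bipartite.

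Second, we remove the lists with the frame of~\cite{FGKM15}, adapted as in \Cref{lem:frame-compilation}. That frame has bounded chromatic number; by contrast, the frame of \Cref{lem:frame} contains a large clique. We attach every list-target vertex to the ports encoding its marker set. Since the list target is bipartite and the frame has bounded chromatic number, we can color the union with at most $15$ colors. Rigidity of the frame forces lists exactly as in \Cref{lem:anchor}.

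Third, we check the parameters. The source has $n=\Theta(N/\log N)$ vertices, possibly after padding, so $\log n=\Theta(\log N)$. The target has $h\le N^{O(c\alpha)}\le n^\alpha$ once $c$ is chosen small enough. An algorithm running in time $n^{\rho n}=2^{\rho n\log n}=2^{O(\rho N)}$ therefore contradicts \Cref{thm:bdcolor} once $\rho<\delta/C$, where $C$ is the hidden constant; this is why $\rho$ depends on $\alpha$ through $c$.

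The main obstacle is the first step: making inter-bucket constraints uniform per label pair while keeping both the target size $N^{O(\alpha)}$ and bipartiteness of the list target. The edge-state trick of \Cref{sec:linear-lb} costs $2^{\Theta(r^2)}$ target vertices, which is far too large here. The first thing to try is to introduce one extra source vertex per bucket pair, i.e.\ an edge of the bucket graph. This is affordable because there are only $O(Nr/r)=O(N)$ such pairs, which is too many. So we would rather group these pairs, again via an equitable coloring, into $O(N/r)$ groups of pairwise disjoint bucket pairs. Each group is then checked by a source vertex whose target states encode $O(1)$ color strings and an $O(1)$-size pattern, keeping $h=3^{O(r)}$.
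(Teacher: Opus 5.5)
\textbf{The gap is in your first step, which you yourself flag as the main obstacle: it is asserted but never constructed, and both halves fail as described.}

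\emph{Coordinate patterns.} The claim that you can ``refine the labels so that all bucket pairs carrying the same label pair use the same coordinate pairs, with a constant blow-up'' has no mechanism behind it. Fixed vertex positions cannot work in general. Take two $a$-labelled buckets. In one, a single vertex at position $x$ carries the edges to both its $b$-labelled and $b'$-labelled neighbours. In the other, two different vertices at positions $x_1\ne x_2$ carry them. Uniformity for $\{a,b\}$ forces $x=x_1$, and uniformity for $\{a,b'\}$ forces $x=x_2$, a contradiction. Separating such buckets by labels would require the labels to record, for each bucket, which of its up to $4r$ neighbour labels share an endpoint, which is nowhere near a constant refinement.

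What \cite{CFGKMPS17} actually do is different. They choose a special grouping (their Lemma~3): every bucket is independent in $Q$, there is at most one edge of $Q$ between any two buckets, and the $O(r)$-labelling is a proper colouring of $\mathcal B^2$. A bucket state is then not a string over positions. It is a function from neighbour labels to colours, namely the colour of the endpoint of the unique edge toward the neighbour with that label. The list of each bucket enforces consistency of these colours inside the bucket, and the target has $L\cdot 4^L=2^{O(r)}$ vertices (their Lemma~4).

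\emph{Bipartization.} Your grouping scheme in the last paragraph is internally inconsistent. Covering up to $\Theta(N)$ adjacent bucket pairs with $O(N/r)$ groups means groups of $\Omega(r)$ pairwise disjoint pairs. Each checking vertex is then adjacent to $\Omega(r)$ buckets, so its state must carry their relevant colours and how their labels are paired. That is not ``$O(1)$ color strings and an $O(1)$-size pattern''. If instead groups really have $O(1)$ pairs, you are back to $\Theta(N)$ source vertices.

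\textbf{The missing idea is that you do not need a bipartite list target at all.} The paper proceeds as follows.
\begin{enumerate}
\item It refines every bucket of the \cite{CFGKMPS17} grouping by the colour classes of a greedy proper $5$-colouring of $Q$. This preserves independence of buckets, the at-most-one-edge property, and properness of the inherited labelling on $\Lambda^2$. It multiplies the number of buckets by at most $5$, and it makes the bucket graph $\Lambda$ properly $5$-coloured.
\item \cite[Lemma~4]{FGKM15} uses this $5$-colouring of the source to turn the $L$-partite list target into one with $5|V(\Gamma)|$ vertices and $\chi\le5$.
\item \cite[Lemma~5]{FGKM15}, used as a black box with $t=5$, removes the lists. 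It adds $O(h)$ vertices to each side and gives $\chi(H)\le 5+10=15$; this is where the constant $15$ in the statement comes from.
\end{enumerate}
With these pieces your third step goes through. Choose $r=\Theta(\log n)$ so that the list target satisfies $\sigma^r\le n^{\min\{1,\alpha\}/4}$. The dependence on $\min\{1,\alpha\}$ rather than on $\alpha$ is needed because list removal adds $\Theta(h)$ vertices to the source, and these must stay negligible against $n/r$.
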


We state four lemmas and then combine them.
We use a~grouping of~\cite{CFGKMPS17} and then use the chromatic-number reduction of~\cite{FGKM15} to get the constant bound on~$\chi(H)$.

\begin{lemma}[{\cite[Lemma~3]{CFGKMPS17}}]
\label{lem:cfgkmps-grouping}
	For any constant $d$, there is a~constant $\lambda=\lambda(d)$ and a~polynomial-time algorithm with the following property.
	Given an~$n$-vertex graph $Q$ of maximum degree at most~$d$ and an~integer $r$ satisfying $2\le r\le\sqrt{n/(2\lambda)}$, the algorithm constructs a~grouping graph $\mathcal B$ and a~labeling $\ell_{\mathcal B}\colon V(\mathcal B)\to[\lambda r]$ such that
	\[|V(\mathcal B)|\le n/r,\]
	$\ell_{\mathcal B}$ is a~proper coloring of $\mathcal B^2$, every bucket $B\in V(\mathcal B)$ is an~independent set in~$Q$, and between any two buckets there is at most one edge of~$Q$.
\end{lemma}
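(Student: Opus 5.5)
The plan is a greedy, vertex-by-vertex assignment into a fixed pool of pre-labelled buckets, with $\lambda$ chosen as a large constant depending on $d$.

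\textbf{Bucket pool.} Set $m=\lceil n/(2\lambda r^2)\rceil$. The hypothesis $r\le\sqrt{n/(2\lambda)}$ gives $m\ge1$ and, up to rounding, $m\le n/(\lambda r^2)$. Create, for every label $\ell\in[\lambda r]$, exactly $m$ empty buckets carrying that label. This gives $\lambda r\cdot m\le n/r$ buckets in total, which secures $|V(\mathcal B)|\le n/r$; empty buckets can simply be discarded at the end. Each bucket gets capacity $2r$. Since there are at least $n/(2r)$ buckets, at most half of them can be full at any moment.

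\textbf{Greedy step.} Process the vertices of $Q$ in any order. At each step we maintain three invariants:
\begin{enumerate}
\item every bucket is independent in $Q$;
\item between any two buckets there is at most one edge of $Q$;
\item $\ell_{\mathcal B}$ is a proper coloring of $\mathcal B^2$, where $\mathcal B$ is the current bucket graph on nonempty buckets.
\end{enumerate}
When vertex $v$ arrives, its already placed neighbors $u_1,\dots,u_t$ ($t\le d$) lie in buckets $Y_1,\dots,Y_t$. A candidate non-full bucket $X$ is \emph{forbidden} if any of the following holds.
\begin{enumerate}
\item $X$ is some $Y_i$. This covers independence, and at most $d$ buckets are excluded.
\item $X$ is already adjacent to some $Y_i$, or two of the $Y_i$ coincide. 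The second case would create a double edge and cannot be avoided by the choice of $X$, so we prevent it in advance. To do so, we additionally forbid placing $v$ where it would later make a second neighbor land in the same bucket, by treating vertices at $Q$-distance at most $2$ symmetrically. Equivalently, we run the whole argument on $Q^2$, whose degree is at most $d^2$, and require each bucket to be independent in $Q^2$. Then every vertex has its neighbors in pairwise distinct buckets.
\item The label of $X$ equals the label of some bucket within distance $2$ of a $Y_i$ in $\mathcal B$. There are at most $O(d^2 r)$ such labels, since each $Y_i$ has at most $2dr$ neighbor buckets, and each label class has $m$ buckets. This forbids $O(d^2 r\cdot m)$ buckets.
\item Some current neighbor $Z$ of $X$ has the same label as some $Y_i$ or as a neighbor of $Y_i$. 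Here we count from the other side: the relevant labels number $O(d^2 r)$, their classes contain $O(d^2 r m)$ buckets, and each of those has $O(dr)$ neighbors. This gives $O(d^3 r^2 m)=O(d^3 n/\lambda)$ buckets. The right accounting is actually per label class, so the bound I aim for is $O(d^3/\lambda)\cdot(n/r)$ forbidden buckets in total.
\end{enumerate}

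\textbf{Counting.} Summing the exclusions, at most $O(d^4/\lambda)\cdot n/r$ buckets are forbidden by adjacency and labels, and at most half are full. Choosing $\lambda(d)$ large enough makes the forbidden fraction below $1/2$, so a legal bucket always exists. Placing $v$ there preserves all three invariants by construction. Everything is a polynomial-time greedy procedure.

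\textbf{Main obstacle.} I expect the main obstacle to be the last exclusion class: making sure that the number of buckets adjacent to some bucket of a given label is $O(r\cdot m)$ rather than $O(r^2 m)$. This is exactly what turns the bound into a constant fraction of $n/r$. It is also where the hypothesis $r\le\sqrt{n/(2\lambda)}$, i.e.\ $m\ge1$, together with the degree bound $2dr$ on buckets, must be used carefully. If the direct count is off by a factor of $r$, my fallback is to strengthen the invariant to \enquote{each label class has at most $O(r\cdot m)$ neighboring buckets}. I would then maintain this explicitly, by also forbidding buckets whose placement would overload a label class, and check that this adds only another constant fraction of exclusions.
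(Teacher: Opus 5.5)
The paper imports this lemma from \cite{CFGKMPS17} without proof, so your argument has to stand on its own. Greedy placement into a pool of pre-labeled buckets is a workable strategy, but as written your rules do not maintain the invariant that $\ell_{\mathcal B}$ properly colors $\mathcal B^2$.

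When $v$ is put into $X$, the buckets $Y_1,\dots,Y_t$ of its placed neighbors all become adjacent to $X$. They are then pairwise at distance at most $2$ in $\mathcal B$, so they must carry pairwise \emph{distinct labels}, not merely be distinct buckets. No rule of yours ensures this. Take two non-adjacent neighbors $u_1,u_2$ of $v$, both placed before $v$ and before any of their other neighbors. When $u_2$ is placed, independence in $Q^2$ only keeps it out of the buckets of nearby placed vertices such as $u_1$. Nothing stops $u_2$'s bucket from having the same label as $u_1$'s bucket, and then every choice of $X$ for $v$ breaks the invariant.

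This is the same phenomenon you spotted for coinciding buckets, one level up. Saying ``run the whole argument on $Q^2$'' does not resolve it: as a reduction to the lemma for $Q^2$ it only moves the problem, since the same clash recurs for two $Q^2$-neighbors of $v$. You need an explicit rule: when placing a vertex, forbid the at most $d^2$ labels of buckets that already contain a vertex within $Q$-distance $2$ of it. This costs at most $d^2m$ buckets, and it implies both your item~1 and your $Q^2$-independence.

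Several exclusions and counts also need repair, and once repaired your ``main obstacle'' disappears.

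\begin{itemize}
\item In item~3, only $Y_i$ and its $\mathcal B$-neighbors matter, since a bucket at distance $2$ from $Y_i$ ends up at distance $3$ from $X$. Your bound of $O(d^2r)$ labels in fact counts only these. Distance $2$ would give up to $\Theta(d^3r^2)$ labels, which exceeds $\lambda r$.
\item In item~4, the clause ``or as a neighbor of $Y_i$'' is unnecessary for the same reason: the path $Z$--$X$--$Y_i$--$W$ has length $3$. That clause is exactly what inflates your count by a factor of $r$. Without it the count is immediate. A label class consists of $m$ buckets of $\mathcal B$-degree $O(dr)$, so at most $O(drm)$ buckets touch it. Only the at most $d$ classes of the $Y_i$ are relevant, giving $O(d^2rm)$, and no extra invariant is needed.
\item With capacity $2r$ and as few as $n/(2r)$ buckets, nearly all buckets may be full late in the process, so ``at most half are full'' is false. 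Use capacity $4r$ instead; the lemma does not bound bucket sizes. Bucket degrees are then at most $4dr$.
\end{itemize}

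With these changes the forbidden buckets number $O(d^2rm)$; this uses $m\ge1$ to absorb the $O(d^2r)$ buckets already adjacent to some $Y_i$. There are at least $\lambda rm/2$ non-full buckets, so $\lambda=Cd^2$ works for a suitable absolute constant $C$. Finally, $r\le\sqrt{n/(2\lambda)}$ gives $\lambda rm\le n/(2r)+\lambda r\le n/r$.
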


\begin{lemma}[{\cite[Lemma~4]{CFGKMPS17}}]
\label{lem:cfgkmps-list-encoding}
	Let $Q$ be a~graph and let $\Lambda$ be a~grouping graph of~$Q$ with label set $[L]$.
	Suppose that the labeling is a~proper coloring of $\Lambda^2$, every bucket is an~independent set in~$Q$, and between any two buckets there is at most one edge of~$Q$.
	Then one can construct, in time polynomial in the output size, a~list-homomorphism instance $(\Lambda,\Gamma,\mathcal L)$ such that
	\[Q\text{ is }3\text{-colorable}\quad\Longleftrightarrow\quad(\Lambda,\Gamma,\mathcal L)\text{ is satisfiable},\]
	and
	\[|V(\Gamma)|\le L\cdot 4^L.\]
\end{lemma}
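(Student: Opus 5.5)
The plan is to follow the encoding of~\cite{CFGKMPS17}: a target vertex records, for a bucket of a given label, the colour that bucket sends across each of its (at most one) edges to each neighbouring label. The source graph is $\Lambda$ itself, so the only work is to define $\Gamma$ and the lists $\mathcal L$ so that the list constraints carry all the intra-bucket consistency.

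\textbf{Using the hypotheses.} Since the labeling is a proper coloring of $\Lambda^2$, two buckets adjacent in $\Lambda$ get different labels. Two distinct neighbours of the same bucket also get different labels. Hence, for a bucket $B$ and a label $j\ne\lambda(B)$, there is at most one bucket $B_j$ adjacent to $B$ with label $j$. Since there is at most one edge of $Q$ between two buckets, there is then a unique vertex $v_{B,j}\in B$ incident with the edge towards $B_j$, whenever $B_j$ exists.

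\textbf{The target.} Let
\[
V(\Gamma)=\{(\ell,f)\colon \ell\in[L],\ f\colon[L]\to\{0,1,2,3\}\},
\]
so $|V(\Gamma)|\le L\cdot4^L$. The value $f(j)=0$ means ``no neighbour of label $j$''; a value in $[3]$ is the colour of $v_{B,j}$. Join $(\ell,f)$ and $(\ell',f')$ if and only if
\[
\ell\ne\ell',\qquad f(\ell')\in[3],\qquad f'(\ell)\in[3],\qquad f(\ell')\ne f'(\ell).
\]

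\textbf{The lists.} The list $\mathcal L(B)$ consists of all $(\lambda(B),f)$ such that:
\begin{itemize}
\item $f(j)=0$ exactly when $B$ has no $\Lambda$-neighbour of label $j$;
\item $f(j)=f(j')$ whenever $v_{B,j}=v_{B,j'}$, i.e., one vertex of $B$ serves several neighbouring labels.
\end{itemize}
Because $B$ is independent in $Q$, there are no constraints inside the bucket beyond this consistency.

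\textbf{Completeness and soundness.} Given a proper $3$-colouring $\kappa$ of $Q$, map $B$ to $(\lambda(B),f_B)$ with $f_B(j)=\kappa(v_{B,j})$ when $B_j$ exists and $0$ otherwise. The image lies in the list. For every edge $BB'$ of $\Lambda$, the unique $Q$-edge between them is $v_{B,\lambda(B')}v_{B',\lambda(B)}$, which is properly coloured, so the edge of $\Gamma$ is present.

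Conversely, given a list homomorphism $\phi$ with $\phi(B)=(\lambda(B),f_B)$, colour each vertex $v\in B$ by $f_B(j)$ for any $j$ with $v=v_{B,j}$. This is well defined by the consistency condition in the lists. Vertices with no edge leaving $B$ get an arbitrary colour. Every edge of $Q$ joins two different buckets $B,B'$, which are adjacent in $\Lambda$. The homomorphism condition gives $f_B(\lambda(B'))\ne f_{B'}(\lambda(B))$, i.e., the endpoints receive distinct colours. The construction is plainly polynomial in the output size, since lists are computed by scanning $\Lambda$ and $Q$.

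\textbf{Main obstacle.} I expect the only delicate point to be checking that the colour referenced by a target vertex is unambiguous. This is exactly where the $\Lambda^2$-properness (neighbours are identified by label) and the ``at most one edge between buckets'' assumption (so each neighbour label corresponds to a single vertex $v_{B,j}$) are needed. Without them, one $f(j)$ would have to store several colours and the $4^L$ bound would fail.
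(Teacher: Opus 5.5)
The paper gives no proof of this lemma and simply imports it from \cite{CFGKMPS17}. Your construction is correct and is the natural encoding that yields exactly the $L\cdot 4^L$ bound: target vertices $(\ell,f)$ with $f\colon[L]\to\{0,1,2,3\}$ indexed by neighbouring labels, and lists that enforce per-bucket consistency. The $\Lambda^2$-colouring together with the one-edge-between-buckets condition makes each coordinate $f(j)$ refer to a unique vertex $v_{B,j}$, and both directions go through as you wrote them, with independence of buckets used exactly where you use it.
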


\begin{lemma}[{\cite[Lemma~4]{FGKM15}}]
\label{lem:fgkm-target-recoloring}
	Given a~list-homomorphism instance $(\Lambda,\Gamma,\mathcal L)$ and a~proper $k$-coloring of~$\Lambda$, one can construct in polynomial time an~equivalent list-homomorphism instance $(\Lambda,\Gamma',\mathcal L')$ with
	\[|V(\Gamma')|=k|V(\Gamma)|\qquad\text{and}\qquad\chi(\Gamma')\le k.\]
\end{lemma}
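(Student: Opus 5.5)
We prove the statement by a product construction that tags every target vertex with a color of the source. Let $c\colon V(\Lambda)\to[k]$ be the given proper $k$-coloring. Define $\Gamma'$ on the vertex set $[k]\times V(\Gamma)$, so that $|V(\Gamma')|=k|V(\Gamma)|$. Put an edge between $(i,x)$ and $(j,y)$ if and only if
\[
i\neq j\quad\text{and}\quad xy\in E(\Gamma).
\]
Define the new lists by
\[
\mathcal L'(v)=\{(c(v),x)\colon x\in\mathcal L(v)\}.
\]
The construction is clearly polynomial in the size of the input.

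\emph{Chromatic number.} The projection $(i,x)\mapsto i$ is a proper coloring of $\Gamma'$, since every edge joins vertices with distinct first coordinates. Hence $\chi(\Gamma')\le k$. Since $\Gamma$ is loopless, $\Gamma'$ is loopless as well.

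\emph{Equivalence, forward direction.} Let $\phi$ be a list homomorphism of $(\Lambda,\Gamma,\mathcal L)$, and set $\phi'(v)=(c(v),\phi(v))$. Then $\phi'(v)\in\mathcal L'(v)$ by definition of the new lists. Take any edge $uv\in E(\Lambda)$. Because $c$ is proper, $c(u)\neq c(v)$. Because $\phi$ is a homomorphism, $\phi(u)\phi(v)\in E(\Gamma)$. Together these give $\phi'(u)\phi'(v)\in E(\Gamma')$, so $\phi'$ is a list homomorphism.

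\emph{Equivalence, reverse direction.} Let $\phi'$ be a list homomorphism of $(\Lambda,\Gamma',\mathcal L')$, and let $\phi$ be its second coordinate. The list condition forces $\phi'(v)=(c(v),x)$ for some $x\in\mathcal L(v)$, so $\phi(v)\in\mathcal L(v)$. Every edge of $\Gamma'$ projects to an edge of $\Gamma$, so $\phi$ is a homomorphism $\Lambda\to\Gamma$.

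The only point requiring care is the forward direction. There the properness of $c$ is exactly what ensures the tagged images of adjacent source vertices have distinct first coordinates, so that the edge is actually present in $\Gamma'$.
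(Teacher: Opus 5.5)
Your proof is correct. The graph you build is the categorical (tensor) product $\Gamma\times K_k$ with lists tagged by the source color, which is the standard construction behind the cited Lemma~4 of~\cite{FGKM15}; the paper itself only quotes the statement and gives no proof of its own, and your two directions of the equivalence and the first-coordinate coloring argument are exactly what is needed.
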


\begin{lemma}[{\cite[Lemma~5]{FGKM15}}]
\label{lem:fgkm-list-removal}
	Given a~list-homomorphism instance $(\Lambda,\Gamma,\mathcal L)$ with $|V(\Lambda)|=n$, $|V(\Gamma)|=h$, and $\chi(\Gamma)\le t$, one can construct in polynomial time an~equivalent ordinary homomorphism instance $(G,H)$ satisfying
	\[|V(G)|\le n+(h+1)(t+11),\qquad |V(H)|\le(h+1)(t+11),\qquad \chi(H)\le t+10.\]
\end{lemma}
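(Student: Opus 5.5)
The plan is to simulate the lists by a rigid ``frame'' of $h+1$ blocks, one block per target vertex of $\Gamma$ plus one extra block. The frame is added to both sides. Each target vertex $x\in V(\Gamma)$ is joined to an \emph{indicator} vertex $f_y$ of every block $y\ne x$. Each source vertex $v$ is joined to the indicators $f_y$ for all $y\notin\mathcal L(v)$. The extra block supplies an indicator $f_0$ adjacent to every vertex of $\Gamma$ and every vertex of $\Lambda$, which forces non-frame source vertices to land in $\Gamma$. Once the frame is pinned pointwise, $\phi(v)=x$ requires $x$ adjacent to $f_y$ for every $y\notin\mathcal L(v)$. That holds iff $x\ne y$ for all such $y$, i.e.\ iff $x\in\mathcal L(v)$. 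So list homomorphisms $\Lambda\to\Gamma$ extend (by the identity on the frame) to homomorphisms $G\to H$, and conversely. The vertex counts $|V(H)|\le(h+1)(t+11)$ and $|V(G)|\le n+(h+1)(t+11)$ follow if each block has at most $t+11$ vertices, counting $\Gamma$ itself inside the target budget by absorbing one vertex per block.

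For the frame I would imitate the chain $F_s$ from the sparse pinning frame subsection (which is itself the FGKM construction), but with cliques. Block $i$ consists of a clique $K^{(i)}$ of size $t+10$ together with a distinguished indicator $f_i$. Consecutive blocks are glued so that block $i$ contains a designated vertex of block $i-1$. This mirrors $z_{i-1}\in C_i$: only the blocks carry cliques of size $t+10$, and the unique ``large clique component'' around block $i$ touches the chain only at block $i-1$.

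The pinning proof would copy Lemma~\ref{lem:four-pinning}. Every $(t+10)$-clique of $H$ lies in a frame block, because $\Gamma$ is $t$-colourable and $\Gamma$-vertices meet the frame only in the pairwise nonadjacent indicators. Hence each source block maps onto some target block $r_i$. The gluing gives $r_{i-1}=r_i-1$, and the endpoint argument forces $r_i=i$. Inside a block, the vertices would be made distinguishable by their adjacency to the glued neighbours and to $f_i$, so the block is mapped identically.

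For $\chi(H)\le t+10$ the colouring is as follows. Colour $\Gamma$ with $t$ colours from a palette $P$. Give all indicators $f_y$ and $f_0$ one common colour outside $P$, which is possible since indicators are pairwise nonadjacent. Colour each block clique with $t+10$ colours arranged so that the clique vertices adjacent to $f_i$ avoid the indicator colour and the glued vertices receive consistent colours.

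The main obstacle is designing a single block of at most $t+11$ vertices that meets three demands at once. First, it must be rigid under homomorphisms from the source frame. Second, it must keep $\chi(H)\le t+10$. Third, it must leave no frame vertex adjacent to $f_0$ and to the relevant indicators, so that non-frame source vertices cannot be absorbed into the frame. I would first try the clique-plus-indicator block with the indicator adjacent to all but one clique vertex. If chromatic or absorption issues arise, I would then verify the ``unique non-bipartite neighbourhood component'' style argument with ``unique $(t+10)$-clique'' in its place.
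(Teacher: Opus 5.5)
There is a genuine gap. The indicator bookkeeping is fine: if every $f_y$ is fixed and every vertex of $\Lambda$ is forced into $V(\Gamma)$, then $\phi(v)$ being adjacent to all $f_y$ with $y\notin\mathcal L(v)$ is equivalent to $\phi(v)\in\mathcal L(v)$. But those two premises are the whole content of the lemma, which the paper only imports from \cite{FGKM15} without proof, and the design you propose to try first violates both.

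\textbf{Absorption.} With your block, $f_0$ is adjacent to $t+9$ pairwise adjacent clique vertices of its own block. Take $\Lambda=K_2$ with both lists equal to $V(\Gamma)$, and $\Gamma$ edgeless. There is no list homomorphism. Yet mapping the frame identically and the two source vertices onto two clique neighbours of $f_0$ is a homomorphism $G\to H$. So the claim that ``$f_0$ forces non-frame source vertices into $\Gamma$'' is false. The usual fix is to anchor each source vertex at two pins that are both complete to $V(\Gamma)$ but have no common frame neighbour, as with the pin pairs $p_t,q_t$ in \Cref{sec:constant-degenerate-nlogn}. Then their common neighbourhood in $H$ is exactly $V(\Gamma)$.

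\textbf{Rigidity} is the bigger problem. Knowing that every $(t+10)$-clique of $H$ lies in a block only tells you that each source clique lands on some target clique. With consecutive cliques glued along a single shared vertex, two consecutive source cliques can be mapped onto the same target clique. So nothing forces $r_{i-1}\ne r_i$, let alone $r_{i-1}=r_i-1$: the chain can fold.

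Inside a block, ``distinguishable by adjacency'' does not help either, because homomorphisms need not reflect non-edges. Suppose $f_i$ is adjacent to all clique vertices except $w_i$, and $f_i$ is not a glue vertex. Then $N(f_i)\subseteq N(w_i)$ within the frame. Hence sending $f_i$ to $w_i$ and fixing everything else is already a non-identity homomorphism of the frame, and it destroys the indicator.

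Your analogy with \Cref{lem:four-pinning} breaks at its first step. There, pins are singled out by a homomorphism-invariant local property (a non-bipartite open neighbourhood) that no other target vertex has. Orientation then comes from the unique non-bipartite component of $N(z_i)$, which contains exactly one pin, $z_{i-1}$, combined with the endpoint argument. Membership in a $(t+10)$-clique is shared by every clique vertex, so it cannot play that role.

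What is missing is a pin-detecting gadget with three features:
\begin{itemize}
\item it is robust against $\Gamma$, whose vertices, unlike the safe attachments of \Cref{sec:constant-degenerate-nlogn}, may have neighbourhoods of chromatic number up to roughly $t$;
\item it comes with a directed link between consecutive pins and an endpoint argument;
\item it is checked to fit into $t+11$ vertices per pin while keeping $\chi(H)\le t+10$.
\end{itemize}
Your proposal leaves all of this open, so as written it is a plan for the reduction rather than a proof of the stated bounds.
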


\begin{corollary}
\label{lem:bounded-chi-list-compression}
	There are constants $\lambda\ge1$ and $\sigma_0>1$, and a~polynomial-time algorithm with the following property.
	Given an~$n$-vertex graph $Q$ of maximum degree at most~$4$ and an~integer $2\le r\le\sqrt{n/(2\lambda)}$, the algorithm constructs a~list-homomorphism instance $(\Lambda,\Gamma,\mathcal L)$ such that
	\[Q\text{ is }3\text{-colorable}\quad\Longleftrightarrow\quad(\Lambda,\Gamma,\mathcal L)\text{ is satisfiable},\]
	and
	\[|V(\Lambda)|\le\frac{5n}{r},\qquad\chi(\Lambda)\le5,\qquad|V(\Gamma)|\le\sigma_0^r.\]
\end{corollary}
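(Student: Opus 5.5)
We prove this by composing \Cref{lem:cfgkmps-grouping} with \Cref{lem:cfgkmps-list-encoding}. Take $d=4$ and let $\lambda=\lambda(4)$ be the constant from \Cref{lem:cfgkmps-grouping}. For $2\le r\le\sqrt{n/(2\lambda)}$, run the grouping algorithm on $Q$. It returns a grouping graph $\mathcal B$ and a labeling $\ell_{\mathcal B}\colon V(\mathcal B)\to[\lambda r]$ with the following properties:
\begin{itemize}
\item $|V(\mathcal B)|\le n/r$;
\item $\ell_{\mathcal B}$ is a proper coloring of $\mathcal B^2$;
\item every bucket is independent in $Q$;
\item any two buckets span at most one edge of $Q$.
\end{itemize}
These are exactly the hypotheses of \Cref{lem:cfgkmps-list-encoding} with $L=\lambda r$. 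Set $\Lambda=\mathcal B$ and feed it to that lemma. This yields a list-homomorphism instance $(\Lambda,\Gamma,\mathcal L)$ that is equivalent to $3$-colorability of $Q$, with
\[|V(\Gamma)|\le \lambda r\,4^{\lambda r}.\]
The construction time is polynomial in the output size, which is $\operatorname{poly}(n)\cdot 2^{O(r)}$. Since $r=O(\sqrt n)$, this is at most $2^{O(\sqrt n)}$; I will treat it as ``polynomial time'' in the sense used throughout the paper, namely polynomial in input plus output size.

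The size bounds are then direct. First, $|V(\Lambda)|\le n/r\le 5n/r$. Second, $\lambda r\,4^{\lambda r}\le (2\cdot 4^{\lambda})^{r}$ because $r\le 2^r$, so we may take $\sigma_0=2\cdot4^{\lambda}>1$ and get $|V(\Gamma)|\le\sigma_0^r$.

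The remaining claim is $\chi(\Lambda)\le 5$, and this is the step I expect to be delicate, because the grouping lemma as quoted says nothing directly about $\chi(\mathcal B)$. My plan is to show that $\mathcal B$ has small maximum degree or degeneracy and then color it greedily. Each bucket has at most $r$ vertices of $Q$, each of degree at most $4$, so crude counting only gives $\Delta(\mathcal B)\le 4r$. That is far too weak. I therefore expect to need the structure of the construction in \cite{CFGKMPS17}. There, buckets are formed from a proper coloring of $Q^2$ (or an equitable variant of it), and the bucket graph inherits a bounded-degree structure.

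If that route does not give the bound cleanly, the fallback is to use the construction's slack. The bound $5n/r$, rather than $n/r$, suggests the intended argument refines the grouping. The idea is to split each bucket by the color classes of a proper $5$-coloring of $Q$, which exists by Brooks since $\Delta(Q)\le 4$. The refined buckets are then tagged by a color in $[5]$ and number at most $5n/r$. I would then redefine $\Lambda$ so that its vertices are (bucket, color) pairs. The properties ``independent in $Q$'', ``at most one edge between two buckets'', and ``proper coloring of $\Lambda^2$'' survive this refinement, after taking the product label in $[5\lambda r]$ and adjusting $\sigma_0$.

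What I actually need for $\chi(\Lambda)\le5$ is that the $Q$-color tag properly colors $\Lambda$. This requires edges of $\Lambda$ to join only pairs whose underlying $Q$-edges have distinct colors. Because $\Lambda$'s edges come from single edges of $Q$, the tag (the color of the represented $Q$-vertices) is a proper $5$-coloring of $\Lambda$, provided each refined part is monochromatic. Verifying that this refinement is compatible with \Cref{lem:cfgkmps-list-encoding} is the main obstacle. It amounts to re-checking that the lemma's hypotheses are preserved under splitting buckets, which holds because splitting can only remove edges between parts and can only make parts more independent.
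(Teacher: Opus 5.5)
Your fallback is exactly the paper's proof, and it is the route to commit to: refine each bucket of $\mathcal B$ by the color classes of a proper $5$-coloring of $Q$ (greedy already gives one, so Brooks is not needed), let the color tag properly color $\Lambda$, and re-check the hypotheses of \Cref{lem:cfgkmps-list-encoding}. The paper keeps the parent's label rather than your product label. It handles two refined buckets with the same parent by noting that they are non-adjacent and cannot share a neighbour without two $Q$-edges between one pair of parent buckets; your product label makes that case automatic at the cost of a larger $\sigma_0$.

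Drop the first plan of taking $\Lambda=\mathcal B$ and bounding its degree or degeneracy; it cannot work. Buckets are independent and pairwise span at most one edge, so $|E(\mathcal B)|=|E(Q)|$, which equals $2n$ for a $4$-regular $Q$, on at most $n/r$ vertices. Hence $\mathcal B$ has degeneracy at least $2r$, and greedy coloring of $\mathcal B$ cannot certify $\chi(\mathcal B)\le 5$.
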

\begin{proof}
	Let $q\colon V(Q)\to[5]$ be a~greedy proper $5$-coloring of~$Q$.
	Apply \Cref{lem:cfgkmps-grouping} with $d=4$, and let $\mathcal B$ be the resulting grouping graph with labeling $\ell_{\mathcal B}:V(\mathcal B)\to[L]$, where $L\le\lambda r$.

	Refine every bucket $B\in V(\mathcal B)$ by the color classes of~$q$: for each $a\in[5]$ with $B_a=B\cap q^{-1}(a)\ne\varnothing$, create a~bucket $B_a$.
	Let $\Lambda$ be the graph on these refined buckets, with two refined buckets adjacent exactly when an~edge of~$Q$ runs between them, and let each refined bucket inherit the label of its parent bucket.
	Then $|V(\Lambda)|\le5|V(\mathcal B)|\le5n/r$, and $q$ gives a~proper $5$-coloring of~$\Lambda$.

	Every refined bucket is independent, and between any two refined buckets there is at most one edge of~$Q$.
	The inherited labeling remains a~proper coloring of $\Lambda^2$: different parent buckets inherit this from $\mathcal B^2$, while two buckets with the same parent are not adjacent and cannot have a~common neighbor without creating two edges between one pair of parent buckets.

	Apply \Cref{lem:cfgkmps-list-encoding} to the refined grouping $\Lambda$.
	It gives an~equivalent list-homomorphism instance $(\Lambda,\Gamma,\mathcal L)$ with
	\[|V(\Gamma)|\le L\cdot 4^L\le\sigma_0^r\]
	for a~constant $\sigma_0$ depending only on~$\lambda$.
\end{proof}

\begin{proof}[Proof of Theorem~\ref{thm:bounded-chi-poly-target}]
	Fix $\alpha>0$, and let $\beta>0$ be the bounded-degree \problemname{$3$-Coloring} constant from \Cref{thm:bdcolor}.
	Let $\lambda$ and $\sigma_0$ be the constants from \Cref{lem:bounded-chi-list-compression}.

	Choose a~constant $\sigma\ge\max\{\sigma_0,2\}$ large enough that the following consequence holds for every $r\ge2$.
	If \Cref{lem:bounded-chi-list-compression} is followed by \Cref{lem:fgkm-target-recoloring} with $k=5$ and then by \Cref{lem:fgkm-list-removal} with $t=5$, the resulting ordinary homomorphism instance $(G_0,H_0)$ is equivalent to $3$-coloring of~$Q$ and satisfies
	\[|V(G_0)|\le\frac{5n}{r}+\sigma^r,\qquad |V(H_0)|\le\sigma^r,\qquad \chi(H_0)\le15.\]
	Indeed, the target recoloring changes the list target size by a~factor~$5$, and list removal adds only $(h+1)(5+11)$ vertices to each side.

	Set
	\[\delta=\frac{\min\{1,\alpha\}}{4\log \sigma}.\]
	Given an~$n$-vertex maximum-degree-$4$ graph $Q$, put $r=\lfloor\delta\log n\rfloor$.
	For all sufficiently large $n$, the condition $2\le r\le\sqrt{n/(2\lambda)}$ holds, so Lemma~\ref{lem:bounded-chi-list-compression} applies.

	For the resulting ordinary homomorphism instance,
	\[\sigma^r\le \sigma^{\delta\log n}=n^{\delta\log \sigma}\le n^{1/4}.\]
	Therefore, for all sufficiently large $n$,
	\[|V(G_0)|\le\frac{5n}{r}+n^{1/4}\le\frac{6n}{r}.\]

	Pad $G_0$ with isolated vertices so that the final source graph $G$ has exactly
	\[N=\left\lceil\frac{6n}{r}\right\rceil\]
	vertices, and set $H:=H_0$.
	Adding isolated vertices does not affect the existence of a~homomorphism, so
	\[Q\text{ is }3\text{-colorable}\quad\Longleftrightarrow\quad G\to H.\]
	The chromatic-number bound remains $\chi(H)\le15$.

	The target size is polynomial in the final source size.
	Indeed,
	\[|V(H)|\le \sigma^r\le n^{\alpha/4}.\]
	Since $N=\Theta(n/\log n)$, we have $n^{\alpha/4}\le N^\alpha$ for all sufficiently large $n$.
	Thus the constructed instances satisfy
	\[|V(G)|=N,\qquad |V(H)|\le N^\alpha,\qquad \chi(H)\le15.\]

	Now suppose that, for some constants $\rho>0$ and $d\ge0$, this restricted class of $\problemabbr{HOM}$ instances can be decided in time
	\[O\bigl(N^{\rho N}\cdot|V(H)|^d\bigr).\]
	For all sufficiently large $n$, we have $r\ge(\delta/2)\log n$, and hence
	\[N\le\frac{7n}{r}\le\frac{14n}{\delta\log n}.\]
	Since also $\log N\le\log n$, it follows that
	\[N\log N\le\frac{14}{\delta}n.\]

	As $|V(H)|\le N^\alpha$, the assumed algorithm would solve \problemname{$3$-Coloring} in time
	\[2^{\rho N\log N}N^{O(1)}\le2^{(14\rho/\delta+o(1))n}.\]
	Choose $\rho=\beta\delta/28$.
	Then the last running time is $2^{(\beta/2+o(1))n}$, contradicting \Cref{thm:bdcolor}.
\end{proof}

\section{Graph Parameters} \label{section:parameters}

This appendix defines all target-side graph parameters from~\Cref{fig:parameter-dominance-map} and proves the implications represented by its arrows.

\begin{itemize}
	\item The chromatic number $\chi(G)$ is the least number of colors in a~proper vertex coloring of~$G$.

	\item The maximum degree $\Delta(G)$ is the maximum degree of a~vertex of~$G$.

	\item The degeneracy $\operatorname{degen}(G)$ is the least integer $d$ such that every non-empty subgraph of~$G$ has a~vertex of degree at most~$d$.

	\item A~tree decomposition of~$G$ is a~tree $T$ with bags $\beta(x)\subseteq V(G)$, $x\in V(T)$, such that every vertex and every edge of~$G$ is covered by a~bag and the bags containing any fixed vertex form a~non-empty connected subtree of~$T$.
		The treewidth $\operatorname{tw}(G)$ is the minimum value of $\max_{x\in V(T)}|\beta(x)|-1$ over all tree decompositions of~$G$~\cite{RS86}.

	\item For $k\ge1$, a~$k$-expression builds $k$-labeled graphs by creating one vertex of a~chosen label, taking disjoint unions, joining all vertices of one label to all vertices of another label, and relabeling vertices.
		The clique-width $\operatorname{cw}(G)$ is the least $k$ such that some $k$-expression represents $G$ after labels are forgotten~\cite{CER93,CO00}.

	\item The extended clique-width $\operatorname{ecw}(G)$ is the analogous minimum for extended $k$-expressions of Bulatov and Dadsetan~\cite{BD20}.
		Such expressions add to $k$-expressions the connect operation $\eta_{\mathcal T}$ between two labeled graphs and the beta operation $\beta_{\vec n,\sigma,\mathcal S}$, which replaces each vertex of label~$i$ by $n_i+1$ labeled copies and keeps edges according to~$\mathcal S$.

	\item For a~linear order $v_1,\ldots,v_m$ of $V(G)$, its bandwidth is $\max\{|i-j|:v_iv_j\in E(G)\}$.
		The bandwidth $\operatorname{bw}(G)$ is the minimum such value over all linear orders, and the complement bandwidth is $\operatorname{bw}(\overline{G})$~\cite{Harper64}.

	\item A~track layout of~$G$ is a~proper coloring $\gamma$ together with a~linear order $\prec$ of~$V(G)$ such that no two edges $u_1u_2,v_1v_2$ with $\gamma(u_1)=\gamma(v_1)$ and $\gamma(u_2)=\gamma(v_2)$ satisfy $u_1\prec v_1$ and $v_2\prec u_2$.
		The track number $\operatorname{tn}(G)$ is the least number of colors in~a~track layout~\cite{DPW04}.

	\item Given a~proper coloring $\gamma\colon V(G)\to[k]$, let $\mathbf{\Gamma}_{G,\gamma}$ be the binary constraint language on domain $V(G)$ with relations
		\[R_{ij}=\{(u,v)\in V(G)^2\colon\gamma(u)=i,\ \gamma(v)=j,\ uv\in E(G)\},\qquad i,j\in[k].\]
		A~ternary operation $m\colon D^3\to D$ is a~majority operation if $m(b,a,a)=m(a,b,a)=m(a,a,b)=a$ for all $a,b\in D$.
		For tuples $t_1,t_2,t_3$ of the same arity, write $f(t_1,t_2,t_3)$ for the tuple obtained by applying $f$ coordinatewise.
		A~persistent majority triple for a~binary crisp constraint language $\mathbf{\Gamma}$ over domain~$D$ is a~triple $(f_1,f_2,f_3)$ of operations $D^3\to D$ such that $f_1$ is majority,
		\[\{\!\{f_1(a_1,a_2,a_3),f_2(a_1,a_2,a_3),f_3(a_1,a_2,a_3)\}\!\}=\{\!\{a_1,a_2,a_3\}\!\}\]
		for all $a_1,a_2,a_3\in D$, and
		\[\{\!\{f_1(t_1,t_2,t_3),f_2(t_1,t_2,t_3),f_3(t_1,t_2,t_3)\}\!\}=\{\!\{t_1,t_2,t_3\}\!\}\]
		for every relation $R\in\mathbf{\Gamma}$ and all $t_1,t_2,t_3\in R$.
		A~persistent majority coloring of~$G$ is a~proper coloring $\gamma$ such that $\mathbf{\Gamma}_{G,\gamma}$ has a~persistent majority triple.
		The persistent majority number $\operatorname{pmn}(G)$ is the least number of colors in~a~persistent majority coloring~\cite{Carbonnel26}.
\end{itemize}

\subsection{Implications in the Parameter Map}
\begin{lemma}
	For every finite graph $G$, the following implications hold:
	\begin{align*}
		\operatorname{cw}(G) &\le 2^{\operatorname{tw}(G) + 2}, & \operatorname{cw}(G) &\le 2^{\operatorname{bw}(\overline{G}) + 3}, \\
		\operatorname{ecw}(G) &\le \operatorname{cw}(G), & \operatorname{tn}(G) &\le f_{\operatorname{tw}}(\operatorname{tw}(G)), \\
		\operatorname{pmn}(G) &\le \operatorname{tn}(G), & \operatorname{pmn}(G) &\le \Delta(G)^2 + 1,
	\end{align*}
	for some function $f_{\operatorname{tw}}$.
\end{lemma}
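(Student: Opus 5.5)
We treat the six inequalities one at a time; most of them are known results or follow directly from the definitions.

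\emph{Clique-width bounds.} For $\operatorname{cw}(G)\le 2^{\operatorname{tw}(G)+2}$ we cite the Courcelle--Olariu bound $\operatorname{cw}\le 3\cdot 2^{\operatorname{tw}-1}$ (the Corneil--Rotics variant also suffices) and note that it is at most $2^{\operatorname{tw}+2}$. For complement bandwidth, take an order $v_1,\dots,v_m$ witnessing $\operatorname{bw}(\overline G)=b$. Sweep left to right. The last $b$ processed vertices each keep an individual label, together with one ``old'' label and a few auxiliary labels. When $v_i$ is created, we join it to the old label and to every recent vertex that is adjacent to it in $G$. This is valid because $v_i$ is nonadjacent in $G$ only to vertices within distance $b$, so it is complete in $G$ to everything older. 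A vertex leaves the window by being relabeled to the old label. This uses $b+O(1)$ labels, certainly at most $2^{b+3}$; alternatively we can cite Rzążewski~\cite{Rzazewski14}. The inequality $\operatorname{ecw}\le\operatorname{cw}$ is immediate, since every $k$-expression is an extended $k$-expression.

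\emph{Track number.} For $\operatorname{tn}\le f_{\operatorname{tw}}(\operatorname{tw})$ we cite Dujmović, Morin and Wood~\cite{DPW04}, who prove that graphs of bounded treewidth have bounded track number.

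\emph{Persistent majority number.} For $\operatorname{pmn}\le\operatorname{tn}$ and $\operatorname{pmn}\le\Delta^2+1$ we rely on Carbonnel~\cite{Carbonnel26}. For $\operatorname{pmn}\le\operatorname{tn}$, use the track coloring $\gamma$. In each relation $R_{ij}$ the edges are non-crossing with respect to $\prec$, so the pairs form a chain in the product order. Define $f_1$ as the median with respect to $\prec$, $f_2$ as the minimum and $f_3$ as the maximum. Coordinatewise median, minimum and maximum of three comparable pairs return the same three pairs, so the multiset condition holds. The majority property of the median is clear.

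For $\operatorname{pmn}\le\Delta^2+1$, greedily color $G^2$ with $\Delta^2+1$ colors. Then each $R_{ij}$ is the graph of a partial injective map. The reason is that two edges $uv,uv'$ would force $v,v'$ to be at distance $2$, so they must receive different colors. Given three tuples of a partial bijection, the first coordinates determine the second. So we define $f_1$ as majority when two arguments are equal and as the first argument otherwise, and define $f_2,f_3$ so that they complete the multiset. Applied coordinatewise to tuples of $R$, the equality pattern is the same in both coordinates, because the relation is injective in both directions. Hence the output multiset of tuples equals the input multiset.

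\emph{Main obstacle.} The hard step is this last one: choosing $f_2,f_3$ consistently across all equality patterns, including the case of three distinct arguments, where we use cyclic shifts. If that gets messy, we fall back on citing the corresponding result in~\cite{Carbonnel26}.
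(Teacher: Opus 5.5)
Your proposal is correct. On three items (treewidth vs.\ clique-width, $\operatorname{ecw}\le\operatorname{cw}$, treewidth vs.\ track number) it matches the paper, which also just cites the literature. There are two attribution nits. First, $3\cdot 2^{\operatorname{tw}-1}$ is the Corneil--Rotics bound; Courcelle--Olariu's bound is slightly weaker but still below $2^{\operatorname{tw}+2}$. Second, ``bounded treewidth $\Rightarrow$ bounded track number'' is due to Dujmovi\'c--Morin--Wood~\cite{DMW05}, not~\cite{DPW04}.

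You depart from the paper in two places.
\begin{itemize}
\item \emph{Complement bandwidth.} The paper notes that the windows $\{v_i,\dots,v_{i+b}\}$ form a path decomposition of $\overline G$, so $\operatorname{cw}(\overline G)\le 2^{b+2}$, and then applies $\operatorname{cw}(G)\le 2\operatorname{cw}(\overline G)$. Your sliding-window expression is self-contained and gives the much stronger linear bound $\operatorname{cw}(G)\le b+2$.
\item \emph{Persistent majority number.} The paper only cites Carbonnel~\cite{Carbonnel26} for ``bounded implies bounded.'' You build explicit persistent majority triples, which proves the exact inequalities $\operatorname{pmn}\le\operatorname{tn}$ and $\operatorname{pmn}\le\Delta^2+1$ written in the statement.
\end{itemize}
Both of your constructions are sound. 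Non-crossing makes each $R_{ij}$ a chain in the product order, so coordinatewise median, min and max return the sorted tuples. A proper coloring of $G^2$ makes each $R_{ij}$ a partial bijection, so the equality pattern of three tuples coincides with the pattern in each coordinate.

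The step you flag as the main obstacle is not one. Put $(f_1,f_2,f_3)(a_1,a_2,a_3)=(a_2,a_1,a_3)$ if $a_2=a_3$, and $(a_1,a_2,a_3)$ otherwise. Then $f_1$ is a majority and the outputs are a permutation of the inputs. Each $f_k$ selects an argument \emph{position} that depends only on the equality pattern. Hence on tuples of a partial bijection both coordinates select the same tuple. No cyclic shifts are needed. The only thing to avoid is a value-dependent rule for $f_2,f_3$ (e.g.\ ``the smaller remaining value''), which could combine coordinates of different tuples.
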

\begin{proof}
	$\operatorname{cw}(G)\le 2^{\operatorname{tw}(G)+2}$ is due to Courcelle and Olariu~\cite{CO00}.
	Every $k$-expression is an~extended $k$-expression, hence $\operatorname{ecw}(G)\le\operatorname{cw}(G)$.

	Let $b=\operatorname{bw}(\overline{G})$.
	Choose an~ordering $v_1,\ldots,v_m$ of $V(G)$ witnessing bandwidth at most~$b$ in~$\overline{G}$.
	Then the bags $B_i=\{v_j\colon i\le j\le i+b\}$ form a~path decomposition of~$\overline{G}$ of width at most~$b$, because every edge of~$\overline{G}$ has both endpoints in~one such interval.
	Therefore $\operatorname{tw}(\overline{G})\le b$, and the previous paragraph gives $\operatorname{cw}(\overline{G})\le 2^{b+2}$.
	Clique-width is bounded under complementation, more precisely $\operatorname{cw}(\overline{F})\le2\operatorname{cw}(F)$ for every graph $F$~\cite{CO00}.
	Applying this to $F=\overline{G}$ yields $\operatorname{cw}(G)\le2^{b+3}$, and then $\operatorname{ecw}(G)\le2^{b+3}$.

	Graphs of bounded treewidth have bounded track number due to~\cite{DMW05}.
	On the other hand, bounded track number or bounded maximum degree imply bounded persistent majority number due to~\cite{Carbonnel26}.
\end{proof}

\begin{lemma}
    For every graph~$G$,
    \[\operatorname{degen}(G)\le 4(\operatorname{pmn}(G)-1).\]
\end{lemma}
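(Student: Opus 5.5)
The plan is to reduce the bound to a statement about a single pair of colour classes and then sum over pairs. Fix a persistent majority colouring $\gamma\colon V(G)\to[k]$ with $k=\operatorname{pmn}(G)$, together with a persistent majority triple $(f_1,f_2,f_3)$ for $\mathbf{\Gamma}_{G,\gamma}$. For $i\ne j$, let $G_{ij}$ be the bipartite graph between $\gamma^{-1}(i)$ and $\gamma^{-1}(j)$. Its edges are exactly the tuples of $R_{ij}$, and the triple acts on this relation.

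\emph{Key claim.} Every $G_{ij}$ is $2$-degenerate. Granting this, the bound follows by counting. A $2$-degenerate graph on $m$ vertices has at most $2m$ edges, and every subgraph of a $2$-degenerate graph is again $2$-degenerate. Let $S\subseteq V(G)$ be nonempty and write $S_i=S\cap\gamma^{-1}(i)$. Since $\gamma$ is proper, every edge of $G[S]$ lies in some $G_{ij}[S_i\cup S_j]$, so
\[
|E(G[S])|=\sum_{i<j}|E(G_{ij}[S_i\cup S_j])|\le\sum_{i<j}2\bigl(|S_i|+|S_j|\bigr)=2(k-1)|S|.
\]
Hence $G[S]$ has average degree at most $4(k-1)$, and therefore a vertex of degree at most $4(k-1)$. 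This holds for every $S$, so $\operatorname{degen}(G)\le4(k-1)$.

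To prove the key claim, I would show that $G_{ij}$ contains no subgraph of minimum degree at least $3$. The conditions on $(f_1,f_2,f_3)$ quantify over arbitrary triples of domain elements and of tuples, so restricting to a subgraph $S_i\cup S_j$ only weakens what must be checked. The argument is therefore local, and heredity comes for free.

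I would argue by contradiction: suppose a subgraph $K$ of $G_{ij}$ has minimum degree at least $3$. Then $K$ contains a vertex $a$ with three distinct neighbours $b_1,b_2,b_3$, and each $b_t$ has further neighbours in $K$. The natural test is to feed tuples such as $(a,b_1),(a,b_2),(a',b_3)$ into the triple. By majority, $f_1$ returns first coordinate $a$. The multiset condition then forces the three output tuples to be a permutation of the inputs, while on the second coordinates $\{\!\{f_1,f_2,f_3\}\!\}$ must equal $\{\!\{b_1,b_2,b_3\}\!\}$. I would choose $a'$ as a second neighbour of some $b_t$ other than $a$, and play this off against a second triple in which the roles of the coordinates are swapped. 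The aim is to force two incompatible values of $f_1$ (or of $f_2,f_3$) on one and the same argument triple.

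The main obstacle is exactly this combinatorial core: extracting, from minimum degree $3$, a configuration on which the tuple-level and domain-level multiset conditions contradict each other. I would first try to show that $G_{ij}$ forbids some small configuration, such as $K_{2,3}$ or a cycle of length $6$ with chords, and then check whether excluding it already forces $2$-degeneracy. If it does not, I would fall back to proving directly that each connected component of $G_{ij}$ has a vertex of degree at most $2$.
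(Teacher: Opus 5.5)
Your reduction to the key claim and your counting step are correct and match the paper. However, the key claim itself, that each $G_{ij}$ is $2$-degenerate, is never proved. The routes you suggest for it would not close the gap. Forbidding a dense pattern such as $K_{2,3}$ or a chorded $6$-cycle cannot suffice: the Heawood graph is bipartite, cubic and of girth $6$, so it contains neither pattern, yet it has minimum degree $3$. Likewise, as far as your sketch can be read, a second triple with the coordinate roles swapped constrains $f_1$ on a triple of $i$-side vertices rather than on $(b_1,b_2,b_3)$. So it does not produce two incompatible values on the same argument triple.

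The missing idea is small. Use the tuple-level condition only to conclude that $f_1(t_1,t_2,t_3)$ is itself one of $t_1,t_2,t_3$. For the tuples $(a,b_1),(a,b_2),(a'_3,b_3)$ with $a'_3\ne a$ a neighbour of $b_3$, majority gives first coordinate $a$. Hence the output is $(a,b_1)$ or $(a,b_2)$, that is, $f_1(b_1,b_2,b_3)\in\{b_1,b_2\}$. Now keep the second-coordinate triple $(b_1,b_2,b_3)$ fixed and move the foreign tuple to the other two positions. The tuples $(a,b_1),(a'_2,b_2),(a,b_3)$ give $f_1(b_1,b_2,b_3)\in\{b_1,b_3\}$, and $(a'_1,b_1),(a,b_2),(a,b_3)$ give $f_1(b_1,b_2,b_3)\in\{b_2,b_3\}$; majority handles the odd argument in every position. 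These three sets have empty intersection, which is the contradiction. The argument needs only $f_1$ and a vertex $a$ with three neighbours, each having another neighbour besides $a$. This is a subdivided-claw pattern that every graph of minimum degree $3$ contains, so neither $f_2,f_3$ nor the domain-level multiset condition is required.
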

\begin{proof}
    Let $k=\operatorname{pmn}(G)$, and let $\gamma\colon V(G)\to[k]$ be a~persistent majority coloring witnessed by a~persistent majority triple $(f_1,f_2,f_3)$.
    For colors $i<j$, let $B_{ij}$ be the bipartite graph induced by the edges between color classes $\gamma^{-1}(i)$ and $\gamma^{-1}(j)$.

    We first show that each $B_{ij}$ is $2$-degenerate.
    Suppose to the contrary that some subgraph $B$ of~$B_{ij}$ has minimum degree at least~$3$.
    Choose a~vertex $x$ on the $i$-side of~$B$ with three distinct neighbors $y_1,y_2,y_3$ on the $j$-side.
    For each $r\in\{1,2,3\}$, choose a~neighbor $x_r\ne x$ of~$y_r$ in~$B$.

    Apply $f_1$ to the three tuples $(x,y_1),(x,y_2),(x_3,y_3)\in R_{ij}$.
    On the first coordinate, $f_1(x,x,x_3)=x$.
    Since the output tuple $f_1((x,y_1),(x,y_2),(x_3,y_3))$ is one of~the three input tuples, it must be $(x,y_1)$ or $(x,y_2)$.
    Hence $f_1(y_1,y_2,y_3)\in\{y_1,y_2\}$.
    The same argument applied to $(x,y_1),(x_2,y_2),(x,y_3)$ gives $f_1(y_1,y_2,y_3)\in\{y_1,y_3\}$, and applied to $(x_1,y_1),(x,y_2),(x,y_3)$ gives $f_1(y_1,y_2,y_3)\in\{y_2,y_3\}$.
    These three sets have empty intersection, a~contradiction.
    Thus each $B_{ij}$ is $2$-degenerate.

    Now let $U\subseteq V(G)$ and set $U_i=U\cap\gamma^{-1}(i)$.
    For each pair $i<j$, the induced bipartite graph between $U_i$ and $U_j$ is $2$-degenerate, and therefore has at most $2(|U_i|+|U_j|)$ edges.
    Because $\gamma$ is proper, all edges of~$G[U]$ lie between distinct color classes, so
    \[\left|E(G[U])\right|\le \sum_{1\le i<j\le k}2(|U_i|+|U_j|)=2(k-1)|U|.\]
    Every induced subgraph of~$G$ has average degree at most $4(k-1)$.
    Therefore every subgraph of~$G$ has a~vertex of degree at most $4(k-1)$, as required.
\end{proof}

\subsection{Padding Lower Bounds}
We also note that the upper bounds in~\Cref{fig:parameter-dominance-map} are tight under $\cc{ETH}$ for all parameters (except complement bandwidth).
\begin{observation}\label{obs:parameter-padding-lower-bound}
	Let $\mu$ be a~graph parameter such that $\mu(H)\le |V(H)|$ for every graph~$H$, and suppose that $\mu$ is unchanged by adding isolated vertices.
	Let $p=p(n)$ and $h=h(n)$ be integer-valued functions with $3\le p(n)\le h(n)$.
	Unless $\cc{ETH}$ fails, there is no algorithm that solves $\problemabbr{HOM}$ on all instances with $|V(G)|=n$, $|V(H)|=h(n)$, and $\mu(H)\le p(n)$ in~time $p(n)^{o(n)}$.
\end{observation}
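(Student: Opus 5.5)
The plan is to reduce from the general $\problemabbr{HOM}$ lower bound of~\cite{CFGKMPS17}: under $\cc{ETH}$, for any $h=h(n)$ there is no algorithm solving $\problemabbr{HOM}$ in time $2^{o(n\log h)}$. Given $p(n)$ and $h(n)$ with $3\le p(n)\le h(n)$, I will start from hard instances $(G,H_0)$ with $|V(G)|=n$ and $|V(H_0)|\le p(n)$. Then I pad $H_0$ with $h(n)-|V(H_0)|$ isolated vertices to get $H$ with $|V(H)|=h(n)$ exactly. Adding isolated vertices to the target does not change whether $G\to H$ holds, because every edge of~$G$ still has to map into $H_0$'s edges; isolated vertices of $G$ can be sent anywhere in either case. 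By hypothesis $\mu(H)=\mu(H_0)\le|V(H_0)|\le p(n)$. Thus the padded instances belong to the restricted class.

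If some algorithm solved the restricted class in time $p(n)^{o(n)}=2^{o(n\log p(n))}$, applying it to the padded instances would decide $\problemabbr{HOM}$ on the hard family with target size $p(n)$ in time $2^{o(n\log p(n))}$. The padding costs only a polynomial overhead in~$h(n)$. This contradicts~\cite{CFGKMPS17} applied with target-size function $p(n)$.

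The one point needing care is the exact form of the~\cite{CFGKMPS17} bound. It must hold for targets of size at most (or exactly) $p(n)$ with source size exactly~$n$, for every function $p$ with $p\ge3$. If their statement is phrased with $h$ merely bounded, I will pad the source with isolated vertices to reach exactly $n$ vertices; this needs $H_0$ to be nonempty, which it is. I will also check that the hard target family of~\cite{CFGKMPS17} can be taken with at most $p(n)$ vertices for the same~$n$. If their theorem gives targets of size $\Theta(p)$ rather than $\le p$, I will rescale $p$ by a constant, which changes $\log p$ only by an additive constant.

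Finally, the running-time accounting must handle the $O^*$ polynomial factor $(n+h(n))^{O(1)}$ when $h(n)$ is huge compared with $p(n)^{o(n)}$. I expect this to be the main obstacle. I will handle it by noting that the bound is stated for time $p(n)^{o(n)}$ itself, and that the polynomial-time construction of the padding is charged to the algorithm's input length. If needed, I will restrict to $h(n)\le 2^{o(n\log p)}$-type regimes, or note that reading the input is already included in the claimed running time.
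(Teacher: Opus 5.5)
Your proposal is correct and follows the paper's proof essentially verbatim. You take the \cite{CFGKMPS17} hard instances with target size $p(n)$ and pad the target with isolated vertices up to $h(n)$. You then use $\mu(H)=\mu(H_0)\le|V(H_0)|\le p(n)$, and the fact that only isolated source vertices can land on the isolated dummy vertices, so they can be remapped into $H_0$.

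The paper likewise treats the cost of writing down the padded target as negligible and places no restriction on $h(n)$. So your fallback of restricting to $h(n)\le 2^{o(n\log p)}$ is not part of its argument, and would prove a weaker statement than the one given.
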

\begin{proof}
	By the lower bound of~\cite{CFGKMPS17}, under $\cc{ETH}$, $\problemabbr{HOM}$ on instances with $|V(G)|=n$ and $|V(H_0)|=p(n)$ has no $p(n)^{o(n)}$-time algorithm.
	Given such an~instance $(G,H_0)$, add $h(n)-p(n)$ isolated vertices to~$H_0$ and call the resulting target~$H$.
	Then $|V(H)|=h(n)$ and, by the assumption on~$\mu$,
	\[\mu(H)=\mu(H_0)\le |V(H_0)|=p(n).\]
	Moreover, $G\to H$ if and only if $G\to H_0$.
	Indeed, dummy target vertices are isolated, so any source vertex mapped to one of them is isolated in~$G$ and can be remapped to an~arbitrary vertex of~$H_0$.
	Thus a~$p(n)^{o(n)}$-time algorithm for the padded instances would give a~$p(n)^{o(n)}$-time algorithm for the hard instances of~\cite{CFGKMPS17}, a~contradiction.
\end{proof}

Complement bandwidth $\operatorname{bw}(\overline{H})$ increases when adding isolated vertices, so the observation above does not follow.
Moreover, for any graph $H$ on $h$ vertices with $\operatorname{bw}(\overline{H}) \le p$ it follows that $\omega(H) \ge \left\lceil \frac{h}{p + 1} \right\rceil$, hence if $h(n) > (n - 1) (p(n) + 1)$, then for every graph $G$ on $n$ vertices, there is a~homomorphism $G \to H$.
Thus, the above observation is false for complement bandwidth.

\section{The Graph Homomorphism Problem and Its Complexity Dichotomies}\label{sec:fine-grained-dichotomy}

The lower bound $2^{\Omega(n\log h)}$ for $\problemabbr{HOM}$ holds for the general case when
both graphs $G$~and~$H$ are part of~the input and
are not restricted in~any way~\cite{CFGKMPS17}. At~the same time, in~various applications
of the~\problemname{Graph Homomorphism} problem, one or~both of~these graphs
may come from a~restricted graph family or~even be~fixed.
There~is a~rich variety of~results studying the complexity
of~special cases of~\problemabbr{HOM}.
Complexity dichotomies ask where the boundary lies between the tractable and intractable sides of the relevant complexity model, such as $\P$ versus $\NP$-hardness, $\cc{FPT}$ versus $\cc{W}[1]$-hardness, or $\cc{FP}$ versus $\#\P$-hardness.
Fine-grained classifications ask, once the problem is intractable, which structural features determine the correct exponential running time.

\paragraph{Fixed and Restricted Source Graphs.}
For the~\emph{fixed source} problem $\problemabbr{HOM}(G, \cdot)$, the brute-force algorithm with running time $O(h^{n})$ is~polynomial.
For \emph{restricted source} graph classes~$\mathcal{G}$,
the question~is what properties of~$\mathcal{G}$
make $\problemabbr{HOM}(\mathcal G,\cdot)$ tractable.
Dalmau, Kolaitis, and Vardi~\cite{DKV02} and Grohe~\cite{Grohe07} showed that $\problemabbr{HOM}(\mathcal{G}, \cdot)$, parametrized by~the size of~the source graph, is in $\cc{FPT}$ precisely when the cores of~graphs in~$\mathcal{G}$ have bounded treewidth, and is $\cc{W}[1]$-hard otherwise.
See \Cref{figure:restricted-source-dichotomy}.
Dalmau and Jonsson~\cite{DJ04} proved the analogous result for counting homomorphisms.

\tikzset{p/.style={rectangle, rounded corners, draw=black, minimum height=6mm}}

\begin{figure}[ht]
	\begin{center}
		\begin{tikzpicture}
			\draw[dashed] (0, .75) -- (0, -1.75);
			\node[p, left] at (-0.2, 0.5) {$\operatorname{tw}(\operatorname{core}(\mathcal{G}))=O(1)$};
			\node[p, right] at (0.2, 0.5) {$\operatorname{tw}(\operatorname{core}(\mathcal{G}))=\omega(1)$};
			\node[below left, text width=40mm, align=right] at (0, -.5) {\strut easy:\\ $\problemabbr{HOM}(\mathcal G,\cdot) \in \cc{FPT}$};
			\node[below right, text width=40mm] at (0, -.5) {\strut hard:\\ $\problemabbr{HOM}(\mathcal G,\cdot)$ is $\cc{W}[1]$-hard};
		\end{tikzpicture}
	\end{center}
	\caption{Parameterized complexity dichotomy for the source-restricted problem $\problemabbr{HOM}(\mathcal G,\cdot)$.}
	\label{figure:restricted-source-dichotomy}
\end{figure}

Fine-grained source-side results further show that the known dynamic-programming exponents are often the right ones.
For vertex cover,~\cite{FGKM15} give an~$O(h^{\operatorname{vc}(G)})$ algorithm and a~matching $\cc{ETH}$ lower bound $h^{\Omega(\operatorname{vc}(G))}$.
Marx~\cite{Marx10} showed that the $h^{O(\operatorname{tw}(G))}$ algorithm cannot be significantly improved under $\cc{ETH}$.
Okrasa and Rz{\k{a}}{\.z}ewski~\cite{OR21} showed that if $H$ is a~projective core, then there is no $(h-\varepsilon)^{\operatorname{tw}(G)} n^{O(1)}$ algorithm, for any $\varepsilon>0$, under $\cc{SETH}$.
Since almost all graphs are projective cores~\cite{HN92,LN04}, this covers almost all targets; moreover, the result can be lifted to all targets under long-standing conjectures on~the properties of~projective cores~\cite{Larose02,LT01}.
This was extended by~\cite{GHKOS24}, who proved both lower and upper bounds $O^*(s(H)^{\operatorname{cw}(G)})$ under the same conjectures, where $s$ is a~signature number of~$H$.
Okrasa, Piecyk, and Rz{\k{a}}{\.z}ewski~\cite{OPR20}, building on~the result of Egri, Marx, and Rz{\k{a}}{\.z}ewski~\cite{EMR18}, showed that for all graphs~$H$ where \problemname{List Homomorphism} (\problemabbr{LHOM}) is $\NP$-hard, it also cannot be solved in time $(h - \varepsilon)^{\operatorname{tw}(G)} n^{O(1)}$,
for any $\varepsilon > 0$, under $\cc{SETH}$.
For counting list homomorphisms, Focke, Marx, and Rz{\k{a}}{\.z}ewski~\cite{FMR24} identify the exact exponent base in running time and prove matching $\#\cc{SETH}$ lower bounds.
\cite{GMNPR24} provide a~characterization of the case when $G$~is parametrized by cutwidth in terms of~the asymptotic rank parameter they introduce.

\paragraph{Fixed Target Graphs.}
The \emph{fixed target} version $\problemabbr{HOM}(\cdot,H)$ is~also well understood.
Hell and Ne\v{s}et\v{r}il~\cite{HN90} proved that $\problemabbr{HOM}(\cdot,H)$ is polynomial-time solvable when $H$ is bipartite, and is $\NP$-complete otherwise; Meyer and Opr\v{s}al recently gave a~topological proof~\cite{MO25}.
	Thus, the coarse-grained complexity class of~$\problemabbr{HOM}(\cdot,H)$ is determined by~an~easy-to-compute parameter of~$H$: whether $\chi(H)\le2$ or~not; see \Cref{figure:fixed-target-dichotomy}.
	The Feder--Vardi dichotomy conjecture for finite-domain $\problemabbr{CSP}$s~\cite{FV98} was proved independently by Bulatov and Zhuk, giving a~complete algebraic classification of fixed-target tractability~\cite{Bulatov17,Zhuk17}.

	\begin{figure}[ht]
		\begin{center}
			\begin{tikzpicture}
				\draw[dashed] (0, .75) -- (0, -1.75);
				\node[p, left] at (-.2, 0.5) {$\chi(H) \le 2$};
			\node[p, right] at (.2, 0.5) {$\chi(H) > 2$};
			\node[below left, text width=40mm, align=right] at (0, -.5) {\strut easy:\\ $\problemabbr{HOM}(\cdot, H) \in \P{}$};
			\node[below right, text width=40mm] at (0, -.5) {\strut hard:\\ $\problemabbr{HOM}(\cdot, H)$ is \NP{}-hard};
		\end{tikzpicture}
	\end{center}
	\caption{Complexity dichotomy for the fixed-target problem $\problemabbr{HOM}(\cdot,H)$.}
	\label{figure:fixed-target-dichotomy}
\end{figure}

The same dichotomy philosophy extends to several homomorphism variants.
Feder and Hell~\cite{FH98} showed that if each vertex of $H$ has a~loop, then $\problemabbr{LHOM}(\cdot, H)$ is in $\P$ if $H$ is an~interval graph and $\NP$-complete otherwise.
Feder, Hell, and Huang~\cite{FHH99} showed that if $H$ has no loops, then $\problemabbr{LHOM}(\cdot, H)$ is in~\P{} if the complement of $H$ is a~circular-arc graph with clique-cover number two, and is \NP{}-complete otherwise.
Finally, Feder, Hell, and Huang~\cite{FHH03} defined a~new class of graphs with possible loops, called bi-arc graphs, and showed that if $H$ is a~bi-arc graph, then $\problemabbr{LHOM}(\cdot, H)$ can be solved in polynomial time, and otherwise the problem is \NP{}-complete.
In~the reflexive case, bi-arc graphs coincide with interval graphs; in~the irreflexive case, they coincide with bipartite graphs whose complement is a~circular-arc graph.
Dyer and Greenhill~\cite{DG00} showed that $\#\problemabbr{HOM}(\cdot, H)$ is \#\P{}-complete unless every connected component of~$H$ is an~isolated vertex without a~loop, a~complete graph with all loops present, or a~complete unlooped bipartite graph; otherwise it is in $\cc{FP}$.
This dichotomy was extended in numerous ways: to~weighted graphs~\cite{BG05,GGJT10,GT11,CCL13,GCD23}, directed graphs~\cite{DGP07,CC17}, and the number of homomorphisms modulo a~fixed prime~\cite{FJ15,GGR14,GGR16}.
Chen, Curticapean, and Dell~\cite{CCD19} showed that the hard cases from \cite{DG00} require $2^{\Omega(n)}$ time under $\cc{ETH}$.
For finite-domain counting $\problemabbr{CSP}$s, the exact tractable side is also sharply classified by algebraic conditions~\cite{Bulatov13,DR13,FGZ19}.
We refer to Bulatov's survey~\cite{Bulatov18} and the book of Hell and Ne\v{s}et\v{r}il~\cite{HN26} for a~broader background.

\paragraph{Restricted Target Graphs.}
The fine-grained picture becomes much less complete for the \emph{restricted target} version $\problemabbr{HOM}(\cdot,\mathcal{H})$.
Several recent results identify properties of~the target class~$\mathcal{H}$ that bring $\problemabbr{HOM}(\cdot,\mathcal{H})$ down to single-exponential time.
Fomin, Heggernes, and Kratsch~\cite{FHK07} gave an~$O^*((2\operatorname{tw}(H)+1)^n)$ algorithm.
Wahlstr{\"o}m~\cite{Wahlstrom11} gave an~$O^*((2\operatorname{cw}(H)+1)^{\max\{n,h\}})$ algorithm for bounded clique-width.
Bulatov and Dadsetan~\cite{BD20} extended this to~bounded extended clique-width, giving an~$O^*((4\operatorname{ecw}+4)^h+(2\operatorname{ecw}+1)^n)$ algorithm.
Fomin, Golovnev, Kulikov, and Mihajlin~\cite{FGKM15} gave an~$O^*(\Delta(H)^n)$ algorithm for bounded maximum degree, while Rz{\k{a}}{\.z}ewski~\cite{Rzazewski14} gave an~$O^*((\operatorname{bw}(\overline H)+2)^n)$ algorithm for bounded complement bandwidth.
Recently, Carbonnel~\cite{Carbonnel26} proved a~$2^{O(n)}$ upper bound when $\mathcal{H}$ excludes a~fixed topological minor, and also gave $O^*(\operatorname{pmn}(H)^{n+h})$ and $O^*(\operatorname{tn}(H)^{n+h})$ algorithms.
On~the other hand, bounded chromatic number is not enough: \cite{FGKM15} rules out $2^{O(n)}$ algorithms for this case unless $\cc{ETH}$ fails.

\paragraph{The Fine-Grained Dichotomy for Homomorphism.}
A~natural target-side dream is to~find a~structural graph parameter $p$ that describes the plain-exponential frontier for $\problemabbr{HOM}(\cdot,\mathcal H)$.
Known algorithms place parameters such as bounded treewidth and excluded topological minors on~the algorithmic side, while bounded chromatic number is far from sufficient.
The weakest useful form of~such a~dichotomy would separate target families admitting a~$2^{O(n)}$ algorithm from those that require super-single-exponential time.
This raises the following question:
\begin{quote}
    Is~there
    a~natural graph parameter~$p$ such that bounded $p(\mathcal{H})$ implies
    that $\problemabbr{HOM}(\cdot, \mathcal{H}) \in 2^{O(n)}$ when $h \le n$, whereas
    unbounded $p(\mathcal{H})$ implies that there is~no~such
    algorithm under \cc{ETH}?
\end{quote}

Such fine-grained dichotomies are~known in~several settings.
For~example, Bringmann, Fischer, and~K{\"u}nnemann~\cite{BFK19} introduced a~hardness parameter~$H$ for~every $\exists^{k}\forall$ graph property~$\varphi$, and~showed that the~baseline $O(m^{k})$ algorithm for~deciding whether a~graph on~$m$ vertices satisfies~$\varphi$ can~be improved if~and only if~$H(\varphi) \leq 2$, assuming the~\textsc{Hyper-Clique Hypothesis}.
Berkholz, Keppeler, and~Schweikardt~\cite{BKS17} established a~dichotomy for~dynamic conjunctive-query answering, showing that it~can~be computed with~linear preprocessing time and~constant update time if~and only if the~homomorphic core of~the~query is~$q$-hierarchical.
Otherwise, the~size of~the~query result cannot be~maintained with~sublinear update time.
Moreover, Brand, Dell, and~Roth~\cite{BDR16} established a~dichotomy for~the~problem~$T_{x,y}$, where $(x,y) \in \mathbb{Q}^{2}$ is~a~fixed rational point.
Given a~graph~$G$, the~task is~to~evaluate the Tutte polynomial~\cite{Tutte54} at~the~point~$(x,y)$, that~is, to~compute $T_{x,y}(G)=T(G;x,y)$.
This problem belongs to~$\cc{FP}$ if~and only if $(x-1)(y-1)=1$.
For~every other point, it~is~$\#\P$-hard and~cannot be~solved in~time $2^{o(n)}$ under~$\#\cc{ETH}$~\cite{JVW90,DHMTW14,HT10,Curticapean18,BDR16}.
The~same work~\cite{BDR16} also established a~dichotomy for~the~problem of~counting satisfying assignments of~Boolean \problemabbr{CSP} instances over~a~constraint language~$\Gamma$.
The~problem belongs to~$\cc{FP}$ if~and only if every relation in~$\Gamma$ is~affine over~$\cc{GF}(2)$.
Otherwise, it~is~$\#\P$-complete and~cannot be~solved in~time $2^{o(n)}$ under~$\#\cc{ETH}$~\cite{CH96,BDR16}.

\subsection{Obstruction for Monotone Parameters}
We finish with a~simple obstruction that shows that any~monotone parameter cannot establish the dichotomy.

\begin{theorem}
	Let $p$ be an~unbounded graph parameter such that
	\[A\le_{\cc{ind}} H \qquad\Longrightarrow\qquad p(A)\le p(H),\]
	where $A\le_{\cc{ind}} H$ means that $A$ is an~induced subgraph of~$H$.
	Then there is a~target family~$\mathcal H$ with $\sup_{H\in\mathcal H}p(H)=\infty$ such that $\problemabbr{HOM}(\cdot,\mathcal H)$ on~inputs $(G,H)$ with $|V(G)|=n$, $H\in\mathcal H$, and $|V(H)|\le n$ is solvable in~time $2^{O(n)}$.

	The same conclusion holds if $p$ is monotone under subgraphs or~under minors.
\end{theorem}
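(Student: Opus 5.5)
Given an unbounded induced-hereditary parameter $p$, we build a family $\mathcal H$ of targets on which $p$ is unbounded but $\problemabbr{HOM}$ is trivial. Since $p$ is unbounded, for every $m$ there is a graph $A_m$ with $p(A_m)\ge m$. Set $H_m$ to be the disjoint union (or join) of $A_m$ with a large clique, or more simply $H_m = A_m \vee K_{t}$ with $t$ large. Induced-subgraph monotonicity gives $p(H_m)\ge p(A_m)\ge m$, because $A_m$ is an induced subgraph of $H_m$. So $\sup_{H\in\mathcal H}p(H)=\infty$.

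The point is that every $H\in\mathcal H$ is algorithmically easy. The cleanest choice is to make $H_m$ contain a clique so large that homomorphism is decided by the chromatic number of the source, and then use the $O^*(2^n)$ coloring algorithm of~\cite{BHK09}. Concretely, take $H_m$ to be the disjoint union of $A_m$ and $K_{a_m}$ with $a_m=|V(A_m)|$. For any $G$ we have $G\to H_m$ if and only if each connected component $G_i$ of $G$ maps either to $A_m$ or to $K_{a_m}$. Since $A_m\to K_{a_m}$ (a proper coloring of $A_m$ with $a_m$ colors exists trivially), a component maps to $A_m$ only if it maps to $K_{a_m}$. Hence $G\to H_m$ iff $\chi(G)\le a_m$. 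This is decidable in $O^*(2^n)$ time by~\cite{BHK09}, and recognizing which member of $\mathcal H$ we are given is unnecessary: we only need $a_m$. We read off $\omega(H)$ as the size of the unique largest clique component, in time polynomial in $h\le n$ if the family is chosen so that $a_m>\omega(A_m)$ trivially holds. Alternatively, the algorithm can test $\chi(G)\le\omega(H)$ after computing $\omega(H)$ in $2^{O(h)}=2^{O(n)}$ time, since $h\le n$. Either way, the total running time is $2^{O(n)}$.

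For the subgraph and minor variants, the same family works because $A_m$ is a subgraph, and hence a minor, of $H_m$. Subgraph- or minor-monotonicity therefore again gives $p(H_m)\ge p(A_m)$.

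The only delicate point is the equivalence $G\to H_m \iff \chi(G)\le a_m$ in full generality, including when $G$ has edges but $A_m$ is edgeless. This is handled by noting that the image of a homomorphism is contained in $A_m\cup K_{a_m}$, and that $A_m$ itself maps into $K_{a_m}$. Composing with that map turns any homomorphism into $H_m$ into one into $K_{a_m}$, and the converse is immediate. So I expect no real obstacle beyond bookkeeping.
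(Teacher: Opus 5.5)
This is correct and essentially the paper's argument: a disjoint union of a high-$p$ graph $A_m$ with a clique of at least its order is homomorphically equivalent to that clique, so $G\to H_m$ reduces to $\chi(G)\le a_m$, which is decided in $O^*(2^n)$ time by~\cite{BHK09}. The paper differs only in choosing the clique size so that $\mathcal H$ has exactly one member of every order, which the statement does not require. Drop the stray suggestion of a join $A_m\vee K_t$, since it is not homomorphically equivalent to a clique when $\chi(A_m)>\omega(A_m)$. Also, $a_m=|V(H_m)|/2$ can be read off directly, so you need no uniqueness claim about the largest clique component.
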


\begin{proof}
	Since $p$ is unbounded, choose graphs $A_1,A_2,\ldots$ such that $p(A_j)\ge j$ for all~$j$.
	By taking a~subsequence of it we may assume that the sizes $a_j=|V(A_j)|$ are strictly increasing.

	For every integer $i\ge1$, let $j(i)$ be the largest index~$j$ such that $a_j\le i/2$, if such an~index exists.
	If no~such index exists, set $H_i=K_i$.
	If $j(i)$ exists, set
	\[H_i=K_{i-a_{j(i)}}\sqcup A_{j(i)}.\]
	Let $\mathcal H=\{H_i:i\ge1\}$.
	Then $|V(H_i)|=i$.
	In~the nontrivial case, $A_{j(i)}$ is an~induced subgraph of~$H_i$, so monotonicity gives
	\[p(H_i)\ge p(A_{j(i)})\ge j(i).\]
	As $i\to\infty$, also $j(i)\to\infty$, and hence $\sup_{H\in\mathcal H}p(H)=\infty$.

	It remains to~observe that every $H_i$ is homomorphically equivalent to~a~clique.
	This is clear when $H_i=K_i$.
	In~the nontrivial case, put $j=j(i)$ and $k_i=i-a_j$.
	Since $a_j\le i/2$, we have $k_i\ge a_j$.
	The clique component gives an~inclusion $K_{k_i}\to H_i$.
	Conversely, map the clique component of~$H_i$ identically to~$K_{k_i}$ and map $V(A_j)$ injectively into $V(K_{k_i})$.
	This is a~homomorphism because $K_{k_i}$ is complete and there are no~edges between the two components of~$H_i$.
	Thus $H_i$ and $K_{k_i}$ are homomorphically equivalent, and for every graph~$G$,
	\[G\to H_i \qquad\Longleftrightarrow\qquad G\to K_{k_i}.\]

	Now consider an~input $(G,H_i)$ with $|V(G)|=n$ and $i=|V(H_i)|\le n$.
	From $H_i$ one can find, in~polynomial time, a~complete connected component~$C$ of~size $\ell_i\ge i/2$.
	Then $K_{\ell_i}$ embeds into~$H_i$, and every other connected component has at~most $i-\ell_i\le\ell_i$ vertices, so it maps injectively to~$K_{\ell_i}$.
	Thus $H_i$ is homomorphically equivalent to~$K_{\ell_i}$, and the problem is exactly $\ell_i$-colorability of~$G$.
	Since $\ell_i\le i\le n$, it can be decided in~time $2^n n^{O(1)}$~\cite{BHK09}.
	Thus $\problemabbr{HOM}(\cdot,\mathcal H)$ in~the regime $h\le n$ is solvable in~time $2^{O(n)}$.

	The construction also proves the subgraph- and minor-monotone variants, since $A_{j(i)}$ is not only an~induced subgraph of~$H_i$, but also a~subgraph and a~minor of~$H_i$.
\end{proof}

Consequently, parameters such as $\operatorname{tw}(H)$, $\operatorname{degen}(H)$, $\Delta(H)$, $\chi(H)$, and $\operatorname{cw}(H)$ cannot give the hard direction of~the family-level question.

\section{SETH Version of the No-Compression Barrier}
\label{app:lookup-barrier}

For~the sake of~completeness, we~outline the~main idea behind the~result of~Kulikov and~Mihajlin~\cite{KM24}.

\begin{lemma}[$\cc{SETH}$ lookup-table barrier]
	Let $\mathcal A$ be a decision problem whose instances are encoded by $s$ bits, and
	assume that $\mathcal A$ can be solved in time $2^{O(s)}$.

	Then, if a~fine-grained reduction from \problemabbr{SAT} to $\mathcal A$ establishes, under $\cc{SETH}$, a lower
	bound $2^{\Omega(s)}$ for $\mathcal A$, then $\cc{SETH}$ is false.
\end{lemma}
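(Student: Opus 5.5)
We argue by contradiction, in the lookup-table style of Kulikov and Mihajlin. Suppose there is a fine-grained reduction from \problemabbr{SAT} (equivalently, from $k$-$\problemabbr{SAT}$ for every $k$) to $\mathcal A$ witnessing the lower bound $2^{\Omega(s)}$ under $\cc{SETH}$. Concretely, assume there is a constant $c>0$ with the following property: for every $\varepsilon>0$ there are $\delta>0$ and $k$ such that a $2^{(c-\delta)s}$-time algorithm for $\mathcal A$ would yield a $2^{(1-\varepsilon)N}$-time algorithm for $k$-$\problemabbr{SAT}$ on $N$ variables. The reduction maps an $N$-variable formula, in time $2^{(1-\varepsilon')N}$, to instances of $\mathcal A$ whose bit sizes are bounded by some $s(N)$, with $c\cdot s(N)$ essentially at most $N$. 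Otherwise the brute-force $2^{O(s)}$ algorithm already contradicts the lower bound scale.

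The contradiction comes from splitting the variables.

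1. Fix a small constant $\gamma>0$. Given a $k$-CNF $F$ on $N$ variables, branch on all assignments to the first $(1-\gamma)N$ variables. This gives $2^{(1-\gamma)N}$ restricted formulas $F_\beta$, each on only $\gamma N$ variables.
2. Apply the assumed reduction to each $F_\beta$. It is itself a $k$-CNF, so the reduction applies with parameter $\gamma N$. Its outputs are instances of $\mathcal A$ of bit size at most $s(\gamma N)$, which is at most about $\gamma N/c$ by the scale of the reduction.
3. Replace every call to $\mathcal A$ by a table lookup. Before branching, solve all $2^{O(\gamma N/c)}$ instances of $\mathcal A$ of at most that many bits, each in $2^{O(\gamma N/c)}$ time by the assumed upper bound, and store the answers. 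Each later oracle query then costs only polynomial time.
4. Bound the total running time as
\[
2^{O(\gamma N/c)\cdot O(1)} + 2^{(1-\gamma)N}\cdot T_{\mathrm{red}}(\gamma N),
\]
where $T_{\mathrm{red}}(\gamma N)$ is the non-oracle cost of the reduction on $\gamma N$ variables, at most $2^{(1-\varepsilon')\gamma N}$.

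Choosing $\gamma$ small enough that the table term is $2^{N/2}$ gives overall time $2^{(1-\varepsilon'\gamma)N}$. This holds uniformly for every $k$, which refutes $\cc{SETH}$.

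The main obstacle is the quantifier bookkeeping. We must make precise that a $2^{\Omega(s)}$ lower bound forces the instance size to scale as $s=O(N)$, and we must check that the constant hidden in the $2^{O(s)}$ upper bound does not swamp the saving. Both are resolved by choosing $\gamma$ after all the reduction constants and $c$ are fixed. The constant saving $\varepsilon'\gamma$ then persists, and it is independent of $k$ because the reduction overhead is uniform in $k$.
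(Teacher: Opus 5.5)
Your proposal is correct and follows the paper's proof essentially step for step. Both arguments branch on all but a $\gamma$-fraction of the variables, precompute answers to all $\mathcal A$-instances of size $O(\gamma N)$ with the $2^{O(s)}$ algorithm, and answer the reduction's queries on each restricted formula by table lookup, giving total time $2^{(1-\Omega(1))N}$ uniformly in $k$.

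The only slips are in your setup. The linear bound on query sizes comes from the definition of a fine-grained reduction: each query's oracle cost $2^{(c-\delta)s_i}$ must fit within the $2^{(1-\varepsilon')N}$ time bound of the reduction. It does not come from the brute-force algorithm for $\mathcal A$, and it is not resolved by choosing $\gamma$. Also, your opening quantifier formalization is trivially satisfied by taking $k=2$. Replace it with what your argument actually uses: a single reduction with a fixed saving and linear-size queries that works for every $k$.
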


\begin{proof}
	Write the claimed lower bound as $2^{\gamma s}$ for a constant $\gamma>0$.
	The fine-grained reduction gives constants $\kappa,\delta>0$ such that, on formulas with $N$ variables, all oracle queries have length at most $\kappa N$, and the reduction with a~$2^{(1-\eps)\gamma s}$-time oracle runs in time at most $2^{(1-\delta)N}$ for some fixed $\eps>0$.

	Choose a~small constant $\beta>0$.
	Given an~$n$-variable \problemabbr{SAT} instance, precompute the answers to all $\mathcal A$-instances of length at most $\kappa\beta n$ using the $2^{O(s)}$ algorithm.
	For $\beta$ small enough this table takes time at most $2^{(1-\sigma)n}$ for some $\sigma>0$.
	After that, branch on all but $\beta n$ variables.
	For each of the resulting $2^{(1-\beta)n}$ formulas, run the assumed reduction.
	Its oracle queries have length at most $\kappa\beta n$, so they are answered from the table.
	The total time is
	\[
			2^{(1-\sigma)n} + 2^{(1-\beta)n}\cdot 2^{(1-\delta)\beta n} = 2^{(1-\Omega(1))n},
	\] which contradicts $\cc{SETH}$.
\end{proof}

\bibliographystyle{alpha}
\bibliography{refs}

\end{document}